\newcommand{\subscript}[2]{$#1 _ #2$}
\begin{document}

\title{Bispectrality of $AG_2$ Calogero--Moser--Sutherland system}
\author{Misha Feigin, Martin Vrabec}
\maketitle

\begin{abstract} 
We consider the generalised  
Calogero--Moser--Sutherland quantum integrable system associated to the configuration of vectors $AG_2$, which is a union of the root systems $A_2$ and $G_2$. We establish the existence of and construct a suitably defined  Baker--Akhiezer function for the system, and we show that it satisfies bispectrality. We also find two corresponding dual difference operators of rational Macdonald--Ruijsenaars type in an explicit form.
\end{abstract}

\section{Introduction}
\label{intro}
The Calogero--Moser--Sutherland (CMS) systems form an important and rich class of integrable systems with deep relations to geometry, algebra, combinatorics, and other areas. The original works of Calogero, Sutherland and Moser dealt with Hamiltonians describing a pairwise interaction of systems of particles on a circle or a line \cite{C'71, M'75, S'72}. This case was related with the root system $A_n$ by Olshanetsky and Perelomov, who extended CMS Hamiltonians to the case of any root system \cite{OP'76}. In the quantum case, these Hamiltonians are closely related to the radial parts of Laplace--Beltrami operators on symmetric spaces \cite{BPF,OP'78}. 

Ruijsenaars and Schneider introduced a relativistic version of the CMS system in the classical case in \cite{RuijsSchn}. The quantum Hamiltonian and its integrals, which are difference operators, were studied by Ruijsenaars in \cite{Ruijs}. For an arbitrary (reduced) root system, the corresponding commuting operators were introduced by Macdonald \cite{Macdonald2}.  

Ruijsenaars established a duality relation between the classical (trigonometric) CMS system and its (rational)  relativistic version which essentially swaps action and angle variables of the two systems~\cite{Ruijs2}. He also conjectured a duality relation in the quantum case \cite{Ruijs3}. In this case, there exists a function $\psi$ of two sets of variables $x$ and $z$ such that 
\begin{equation}
    \begin{aligned}\label{duality}
        L\psi &= \lambda \psi, \\
        M\psi &= \mu \psi,
\end{aligned}
\end{equation}
where $L = L(x, \partial_x)$ is the CMS Hamiltonian or its quantum integral, $M$ is Ruijsenaars' relativistic CMS operator acting in the variables $z$, and $\lambda = \lambda(z)$, $\mu = \mu(x)$ are the corresponding eigenvalues. The relations~\eqref{duality} may be viewed as a higher-dimensional differential-difference version of the one-dimensional differential-differential bispectrality relation studied by Duistermaat and Gr\"unbaum~\cite{Duijstermaat}.  

The bispectrality relation~\eqref{duality} for the CMS Hamiltonian $L$ related to any root system and for the corresponding Macdonald--Ruijsenaars operators $M$ was established by Chalykh in \cite{Chbisp} in the case of integer coupling parameters. In this case, the function $\psi$ can be taken to be a multi-dimensional Baker--Akhiezer (BA) function \cite{CV, CSV, CFV'99}. It has the form 
\begin{equation}\label{psi}
    \psi (z, x) = P e^{\IP{x}{z}},
\end{equation}
where $P$ is a polynomial in $z$ which also depends on $x$, and $\psi$ can be characterised by its properties as a function of the variables $z$. The relation~\eqref{duality} for the root system $A_n$, another function $\psi$, and an arbitrary coupling parameter was established recently by Kharchev and Khoroshkin in \cite{KK}. We refer to \cite{GVZ} and references therein for aspects of this and various other dualities in the broad realm of integrable systems.

In general, the eigenfunctions of quantum integrable systems may be complicated. However, the multi-dimensional BA functions turn out to be manageable algebraic functions.
An axiomatic definition of BA functions was formulated by Chalykh, Veselov and Styrkas for an arbitrary finite collection of   non-collinear
vectors 
with  integer multiplicities in \cite{CV, CSV} (see also~\cite{Fbisp} for a weaker version of the function; we also refer to \cite{Kr} for one-dimensional BA functions). The key properties are the {\it quasi-invariant} conditions of the form
$$
\psi(z+ s \alpha, x) = \psi(z - s\alpha, x) 
$$
which are satisfied at $\IP{\alpha}{z}=0$ for any vector $\alpha$ from the configuration, and $s$ takes special values.

It was shown that the BA function exists for very special configurations only, which includes positive subsystems of reduced root systems \cite{CSV}. With a slight modification, one can also formulate an axiomatic definition of the BA function in the case of the (only) non-reduced root systems~$BC_n$~\cite{Chbisp}.

The BA function is known to exist for certain generalisations of CMS systems related to special configurations of vectors which are not root systems. 
The known examples are deformations $A_{n,1}(m)$, $C_{n}(m,l)$ of the roots systems $A_n$ and $C_n$, respectively, found by Chalykh, Veselov and one of the authors in \cite{CFV'98, CFV'99}; and another deformation $A_{n,2}(m)$ of the root system $A_{n+1}$ \cite{CVlocus}, which satisfies a weaker axiomatics \cite{Fbisp}. The BA function for the configuration $A_{n,1}(m)$ was constructed in \cite{Chbisp}, and for the cases $C_n(m,l)$, $A_{n,2}(m)$ it was constructed in \cite{Fbisp}. It was also shown in the papers \cite{Chbisp, Fbisp} that the bispectrality relation~\eqref{duality} holds for the corresponding generalisations of Macdonald--Ruijsenaars operators. An important property of these operators is the preservation of a space of quasi-invariant analytic functions.

In the process of classification of algebraically integrable two-dimensional generalised CMS operators, the configuration of vectors $AG_2$ emerged in the work of Fairley and one of the authors \cite{FF}. We studied the corresponding generalised CMS operator  in \cite{FVr}. There we found its eigenfunction $\psi$ of the form \eqref{psi} as $\psi = \mathcal{D} \psi_0$ where $\psi_0$ is the BA function for the root system $G_2$ and $\mathcal{D}$ is an explicit differential operator of order three.

In this paper, we study the function $\psi$ further. We describe it as a BA function by finding the conditions in the variables $z$ which this function satisfies. We find a pair of explicit difference operators which have $\psi$ as their common  eigenfunction. Thus we establish bispectral duality for the generalised CMS system of type $AG_2$. We also obtain an expression for the function $\psi$ by iterations of the action of the difference operators, in analogy with formulas from the works \cite{Chbisp, Fbisp} for other configurations.

Let us now introduce more precisely the generalised CMS operators and the configuration of vectors $AG_2$.
Let $\IP{\cdot}{\cdot}$ denote the standard Euclidean inner product on $\Rn$. We consider the complexification of the real Euclidean space $\Rn$ and extend $\IP{\cdot}{\cdot}$ bilinearly. Let $x = (x_1, \dots, x_n) \in \Cn$. 

Let $A \subset \Cn$ be a finite collection of non-isotropic vectors.
A multiplicity map is a function $m\colon A \to \complex$, $\alpha \mapsto m_\alpha = m(\alpha)$. One calls $m_\alpha$ the multiplicity of $\alpha$. One denotes such collections of vectors with multiplicities by  $\A = (A, m)$, or just $A$ if the intended multiplicity map is clear from the context. 
The corresponding generalised CMS operator has the form
\begin{equation}
\label{genCMS}
    L = -\Delta +  \sum_{\alpha \in A} g_\alpha  \sinh^{-2} \langle \alpha, x \rangle, 
\end{equation}
where $\Delta = \sum_{i=1}^n \partial_{x_i}^2$ is the Laplacian on $\Cn$ and 
we have the convention of writing the constants $g_\alpha$ as 
\begin{equation}\label{eq: convention for coupling}
    g_\alpha = m_\alpha (m_\alpha + 2m_{2\alpha}+ 1) \IP{\alpha}{\alpha}
\end{equation}
with $m_{2\alpha} = 0$ if $2\alpha \notin A$. The convention \eqref{eq: convention for coupling} is used in symmetric spaces theory (see e.g.\ \cite{SV1}).

The Olshanetsky--Perelomov operators correspond to the case when
$A$ is a positive half of a root system and the multiplicity function $m$ is Weyl-invariant. In that case, if 
 $m_\alpha \in \N$ for all $\alpha$ then the operator \eqref{genCMS} admits additional integrals and is algebraically integrable (see \cite{CV, CSV}).

The configuration of vectors $AG_2$ is a non-reduced configuration in $\C^2$ obtained as a union of the root systems $G_2$ and $A_2$. A positive half $AG_{2,+}$ is shown in Figure~\ref{fig: AG_2}, where ${G_{2, +}} = \{\alpha_i, \beta_i\colon i=1, 2, 3\}$, and ${A_{2, +}}=\{2\beta_i\colon i=1, 2, 3\}$. 
There are in total $18$ vectors in the configuration $AG_2$. 
The subscripts on the $\alpha_i$'s are assigned in such a way that $ \langle \alpha_i, \beta_i \rangle = 0$ for all $i = 1, 2, 3$. 
The ratio of the squared lengths of the long roots $\alpha_i$ relative to the short roots $\beta_i$ from the root system $G_2$ is $\a_i^2/\beta_i^2 = 3$ for all $i=1, 2, 3$. 
The vectors $\alpha_i $, $\beta_i$ and $2 \beta_i$ are assigned respectively the multiplicities $m$, $3m$ and $1$, where  $m \in \C$ is a parameter.  

We adopt a coordinate system where the vectors take the form
$$\alpha_1 = \omega\big(0, \sqrt{3} \big), \quad \alpha_2 = \omega\left({-}\tfrac{3}{2}, \tfrac{\sqrt{3}}{2} \right),  \quad \alpha_3 = \omega \left(\tfrac{3}{2}, \tfrac{\sqrt{3}}{2} \right)$$
and
$$\beta_1 = \omega \big(1, 0 \big),\quad \beta_2 = \omega \left(\tfrac{{1}}{2}, \tfrac{\sqrt{3}}{2} \right), \quad \beta_3 = \omega\left({-} \tfrac{{1}}{2}, \tfrac{\sqrt{3}}{2} \right)$$ 
for some scaling $\omega \in \C^\times$.

\begin{figure}
    \centering
    \caption{A positive half of the configuration $AG_2$ \cite{FVr}.}
	\includegraphics[scale=0.6]{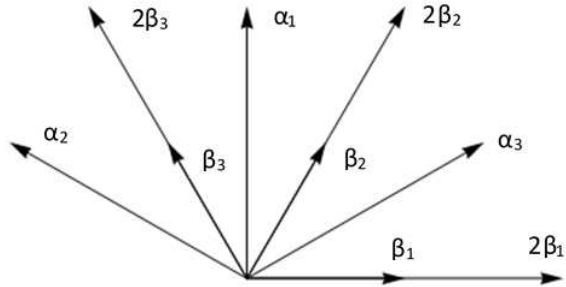}\label{fig: AG_2}
\end{figure}

 We have $\beta_1 + \beta_3 = \beta_2$, $\alpha_2 + \alpha_3 = \alpha_1$, and also
\begin{equation*}
   \begin{aligned}
 &\beta_1 = 2 \beta_2 - \alpha_1 = \alpha_1 - 2 \beta_3 = \alpha_3 - \beta_2 = \beta_3 - \alpha_2, \\
 &\alpha_1 = \tfrac32 \beta_2 + \tfrac12\alpha_2 = \tfrac32\beta_3 + \tfrac12\alpha_3, \text{ and  } \beta_1 = \tfrac12 \beta_2 - \tfrac12 \alpha_2 = -\tfrac12 \beta_3 + \tfrac12 \alpha_3.
\end{aligned}
\end{equation*}

The configuration $AG_2$ is contained in the lattice $\Z\beta_1 \oplus \Z\alpha_2 \subset \C^2$. It is invariant under the Weyl group of type $G_2$, but it is not a crystallographic root system because, for instance, the vectors
$\beta_1$ and $2\beta_2$ have $ 2 \IP{\beta_1}{2\beta_2}/\IP{2\beta_2}{2\beta_2}= \frac12 \notin \integers$.

The corresponding generalised CMS  quantum Hamiltonian \eqref{genCMS} with $A=AG_{2,+}$ has recently been shown to be quantum integrable for any value of the parameter $m \in \C$ \cite{FVr}. It is moreover algebraically integrable for $m \in \N$ by virtue of $AG_2$ being a locus configuration \cite{FF}, as a consequence of the general results presented in \cite{Ch07} (see also~\cite{FVr}). In this paper, we continue investigating the generalised CMS operator for $AG_2$ in the special case when $m \in \N$.

The structure of the paper is as follows. In Section \ref{sec: BA functions}, we recall the notion of a multi-dimensional BA function associated with a configuration $\mathcal A$ following \cite{CV, CSV}, and we generalise it to a case when $A$ has proportional vectors. We show that if such a function exists then it is an eigenfunction for the generalised CMS operator \eqref{genCMS}. 
In Section \ref{sec: bispectral dual difference operators}, we give a general ansatz for the dual generalised Macdonald--Ruijsenaars difference operators with rational coefficients, and we find sufficient conditions for these operators to preserve a space of quasi-invariant analytic functions.
In Section \ref{sec: another difference operator for AG2}, we find a difference operator ${\mathcal D}_1$ related to the configuration $AG_2$ which satisfies the conditions from Section \ref{sec: bispectral dual difference operators}. 
In Section \ref{BAAG2}, we use this operator to show that the BA function for the configuration $AG_2$ exists, and we express this function by iterated action of the operator. We also show that the BA function is an eigenfunction of the operator ${\mathcal D}_1$, thus establishing bispectral duality.
In Section \ref{sec: bispectral dual difference operator for AG2}, we present another dual Macdonald--Ruijsenaars operator ${\mathcal D}_2$ for the configuration $AG_2$, and we establish the corresponding statements for this operator analogous to the ones from Sections \ref{sec: another difference operator for AG2} and~\ref{BAAG2}.

In Section \ref{Sect-A1-A2}, we consider the operator ${\mathcal D}_1$ from Section \ref{sec: another difference operator for AG2} at $m=0$, which gives a Macdonald--Ruijsenaars operator for the root system $A_2$ with multiplicity $1$. We also consider a version of this operator for the root system $A_1$ and decompose it into a sum of two non-symmetric commuting difference operators. We relate these operators with the standard Macdonald--Ruijsenaars operator for the minuscule weight of the root system $A_1$.

\section{Baker--Akhiezer functions}\label{sec: BA functions}

We propose a modified definition of multi-dimensional BA functions so as to extend the definitions from papers \cite{CV, CSV, Chbisp} to include the configuration $AG_2$. We formulate the definition in such a way that it naturally generalises the case of reduced root systems as well as the case of the root system~$BC_n$. 

Let $R \subset \Cn$ be a (possibly non-reduced) finite collection of non-isotropic vectors. We assume there is a subset $R_+\subset R$ such that any collinear vectors in $R_+$ are of the form $\a$, $2\a$ and $R=R_+\sqcup (-R_+)$. Let 
$R^r = \{\a \in R\colon \frac12\a \notin R \}$ and
$R^r_+ = R^r \cap R_+$.
Let $m\colon R\to \Z_{\ge 0}$ be a 
multiplicity map with $m(R^r) \subset \N$.
We extend it to a map $m\colon R \cup 2R^r \to \Z_{\ge 0}$ by putting $m_{2\a} = 0$ if $2\a \notin R$, $\a \in R^r$.

\begin{definition}\label{def: BA modified}
A function $\psi(z, x)$ $(z,x \in \complex^n)$ is a BA function for $(R, m)$ if  
    \begin{enumerate}
        \item $\psi(z,x) = P(z, x) \exp\IP{z}{x}$ for some polynomial $P$ in $z$ with highest order term $\prod_{\gamma \in R_+} \IP{\gamma}{z}^{m_\gamma}$;
        \item $\psi(z + s\alpha, x) = \psi(z-s\alpha, x)$ at $\IP{z}{\alpha} = 0$ for $ s = 1, 2, 3, \dots, m_\alpha, m_\a +2, \dots, m_\a + 2m_{2\a}$ for all $\alpha \in R^r_+$.
    \end{enumerate}
\end{definition}

The multiplicities of the vectors in $AG_2$ are $m_{\beta_i} = 3m$, $m_{2\beta_i} = 1$, $m_{\alpha_i} = m$, and we put $m_{2\alpha_i} = 0$ for all $i=1, 2, 3$, where $m \in \N$. When we apply Definition~\ref{def: BA modified} to this configuration $R = AG_2$, we get the notion of the BA function for $AG_2$.

For $\gamma \in \Cn$, we denote by $\delta_\gamma$ the operator that acts on functions $f(z, x)$ by $$\delta_\gamma f(z,x) = f(z+\gamma,x) - f(z-\gamma,x).$$ 
The condition~2 in Definition~\ref{def: BA modified} admits the following equivalent characterisation.

\begin{lemma}\label{lemma: alt axiomatics abstract} 
     Let $\a \in \Cn$ be non-isotropic, $m_\a \in \N$, and $m_{2\a} \in \integers_{\geq 0}$. 
     A function $\psi(z, x)$ $(z, x \in \Cn)$ analytic in $z$ satisfies
     $\psi(z + s \a,x) = \psi(z - s\a,x) \text{ at }
               \IP{z}{\a} = 0 \textnormal{ for } s = 1, 2, 3, \dots, m_{\a}, m_{\a}+2, \dots, m_{\a} + 2m_{2\a}$
     if and only if
     \begin{equation}\label{eq: alt BA abstract s}
         \bigg( \delta_{\a} \circ \frac{1}{\IP{z}{\a}} \bigg)^{s-1} \delta_{\a} \psi(z, x) = 0 \text{ at } \IP{z}{\a} = 0, s = 1, \dots, m_{\a}
     \end{equation}
     and
     \begin{equation}\label{eq: alt BA abstract t}
        \bigg( \delta_{2\alpha} \circ \frac{1}{\IP{z}{\a}} \bigg)^{t}\circ\bigg( \delta_{\a} \circ \frac{1}{\IP{z}{\a}} \bigg)^{m_\a-1} \delta_{\a} \psi(z, x) = 0 \text{ at } \IP{z}{\a} = 0, t = 1, \dots, m_{2\a}. 
     \end{equation}
\end{lemma}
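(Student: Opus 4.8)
The plan is to reduce the statement to a one–variable problem along the direction $\a$ and then analyse it by a parity argument combined with an induction. Since $\a$ is non-isotropic, set $a = \IP{\a}{\a}\neq 0$ and write $z = z_\perp + (u/a)\,\a$ with $\IP{z_\perp}{\a}=0$ and $u = \IP{z}{\a}$. Treating $z_\perp$ and $x$ as fixed parameters and putting $f(u) = \psi(z,x)$ (analytic in $u$), the operators act as $\delta_\a \mapsto D_a$, $\delta_{2\a}\mapsto D_{2a}$, where $(D_c f)(u) = f(u+c)-f(u-c)$, while $1/\IP{z}{\a}\mapsto 1/u$ and the hyperplane $\IP{z}{\a}=0$ becomes $u=0$. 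In these terms the quasi-invariance condition reads $f(sa) = f(-sa)$, i.e.\ $h(sa)=0$, where $h(u) := f(u)-f(-u)$ is the odd part of $f$. Thus, for each fixed $z_\perp, x$, it suffices to prove that the conditions $h(sa)=0$ for $s=1,\dots,m_\a$ and $s=m_\a+2,\dots,m_\a+2m_{2\a}$ are equivalent to the one-variable forms of \eqref{eq: alt BA abstract s}--\eqref{eq: alt BA abstract t}.

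The first simplification is a parity observation. Each of $D_a$, $D_{2a}$ and multiplication by $1/u$ reverses parity in $u$, so $T_1 := D_a\circ \tfrac1u$ and $T_2 := D_{2a}\circ\tfrac1u$ preserve parity. As an odd analytic function vanishes at $u=0$, the even part of $f$ contributes nothing to any evaluation at $u=0$; hence only $h$ matters, and it is enough to study the even functions $\phi_k := T_1^k D_a h$ together with $T_2^t$ applied to $\phi_{m_\a-1}$. I would record the recursion $(T_1\phi)(u) = \frac{\phi(u+a)}{u+a} - \frac{\phi(u-a)}{u-a}$, valid once $\phi(0)=0$ (and similarly for $T_2$), noting that the indeterminate boundary values of the type $\phi(0)/0$ that arise at $u=\pm a$ (resp.\ $u=\pm 2a$) are resolved by analyticity and evenness, since then $\lim_{u\to 0}\phi(u)/u = \phi'(0)=0$.

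For the $s$-conditions the core is a structural lemma, to be proven by induction on $k$ starting from $\phi_0(ja) = h((j+1)a) - h((j-1)a)$ and using the recursion: $\phi_k(ja)$ is a linear combination of the values $h(ia)$ with $i \equiv j+k+1 \pmod 2$ and $|i|\le j+k+1$, whose extreme coefficient at $i=j+k+1$ (for $j\ge 0$) is nonzero, the latter because reaching the top index only uses the $(u+a)$-branch, with denominators $\ge 1$. Taking $j=0$ then shows that, assuming $h(a)=\cdots=h(ka)=0$, the value $\phi_k(0)$ is a nonzero multiple of $h((k+1)a)$; since the $s=k+1$ case of \eqref{eq: alt BA abstract s} is exactly $\phi_k(0)=0$, an induction on $k$ from $0$ to $m_\a-1$ identifies \eqref{eq: alt BA abstract s} with $h(a)=\cdots=h(m_\a a)=0$.

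For the $t$-conditions, given the first batch, $\phi_{m_\a-1}$ is even with $\phi_{m_\a-1}(0)=0$, and one checks that $(T_2^t\phi_{m_\a-1})(0)$ is a nonzero multiple of $\phi_{m_\a-1}(2ta)$, again resolving the $0/0$ values by evenness. By the structural lemma every index $i$ occurring in $\phi_{m_\a-1}(2ta)$ satisfies $i\equiv m_\a\pmod 2$, so $h((m_\a+1)a)$ never appears; this parity skip is precisely what forces the second batch of admissible $s$ to jump by $2$. Since the extreme index is $m_\a+2t$, modulo the lower values that already vanish, $\phi_{m_\a-1}(2ta)$ is a nonzero multiple of $h((m_\a+2t)a)$, and inducting on $t$ from $1$ to $m_{2\a}$ identifies \eqref{eq: alt BA abstract t} with $h((m_\a+2)a)=\cdots=h((m_\a+2m_{2\a})a)=0$. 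Combining the two towers yields the equivalence. I expect the hard part to be the structural lemma itself—tracking which $h$-values enter $\phi_k(ja)$, verifying that the extreme coefficient is nonzero, and handling the $0/0$ evaluations cleanly—while the conceptual crux is the parity bookkeeping that makes $h((m_\a+1)a)$ drop out of the $t$-equations.
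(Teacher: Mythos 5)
Your proposal is correct, but it takes a genuinely different route from the paper's. The paper also reduces to a single variable along $\a$ (Lemmas~\ref{lemma: alt axiomatics one-dim - additional lemma} and~\ref{lemma: alt axiomatics one-dim}), but instead of a parity decomposition it runs a two-way induction built on chains of auxiliary analytic functions $G_1,\dots,G_{m_\a}$ and $H_1,\dots,H_{m_{2\a}}$ defined by the divisibility relations $\delta G_q = k\,G_{q+1}$ and $\delta_2 H_q = k\,H_{q+1}$, and it propagates the symmetry conditions $G_q(s)=G_q(-s)$, $H_q(2s)=H_q(-2s)$ up and down these chains one step at a time using the small propagation lemma ``$\delta_r F = k\widehat F$ and $F(s-r)=F(r-s)$ imply $\widehat F(s)=\widehat F(-s)\Leftrightarrow F(s+r)=F(-s-r)$''. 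You replace this bookkeeping by the observation that $\delta_\a$, $\delta_{2\a}$ and multiplication by $1/\IP{z}{\a}$ each reverse parity along the $\a$-line, so only the odd part $h$ contributes to evaluations on the hyperplane, and then by a single structural lemma expressing $\phi_k(ja)$ as a combination of the values $h(ia)$ with $i\equiv j+k+1 \pmod 2$, $|i|\le j+k+1$, and nonzero extreme coefficient. Given the earlier conditions this makes each new condition in one list equivalent to the next vanishing $h$-value, which delivers both implications at once, and the parity constraint makes the exclusion of $s=m_\a+1$ (the jump by $2$ in the second batch) completely transparent --- something the paper's argument achieves only implicitly through the switch from $\delta$ to $\delta_2$. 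The price is that the structural lemma must be verified with care: at $j=0$ both $i=k+1$ and $i=-(k+1)$ are extreme indices, so after collecting $h(-ia)=-h(ia)$ you should also record that the construction is antisymmetric under $u\mapsto -u$ in the sense that the coefficients satisfy $c_{-i}=-c_i$ at $j=0$, whence the collected coefficient $c_{k+1}-c_{-(k+1)}=2c_{k+1}$ is indeed nonzero; this is a fillable detail rather than a gap. Both proofs also share the same nested treatment of the $0/0$ evaluations, each condition guaranteeing analyticity of the next iterate.
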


The proof will follow from the one-dimensional considerations in Lemma~\ref{lemma: alt axiomatics one-dim} below 
 (see also \cite{CSV} where the corresponding statements in the case $m_{2\alpha}=0$ were stated). Let $\delta_r$ ($r \in \N$) be the difference operator that acts on 
 functions $F(k)$ ($k \in \C$) by $\delta_r F(k) = F(k+r) - F(k - r)$. Write $\delta = \delta_1$ for short. The next Lemma will be used in the proof of Lemma \ref{lemma: alt axiomatics one-dim}.

\begin{lemma}\label{lemma: alt axiomatics one-dim - additional lemma}
Suppose $\delta_r F = k \widehat F$ for analytic functions $F(k)$ and $\widehat F(k)$. Suppose also that $F(s-r)=F(r-s)$ for some $s$. Then $\widehat F(s) = \widehat F(-s)$ if and only if  $F (s+r) = F (-s-r)$.
\end{lemma}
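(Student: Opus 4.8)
The plan is to extract the two values $\widehat F(\pm s)$ directly from the relation $\delta_r F = k\widehat F$ by evaluating it at $k=\pm s$, and then to use the symmetry hypothesis to collapse the desired equivalence to a single identity. There is no deep step here: the whole content is a short evaluation, and the only place requiring attention is the sign bookkeeping.

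First I would write the hypothesis explicitly as $F(k+r)-F(k-r)=k\,\widehat F(k)$ for all $k$. Substituting $k=s$ gives $F(s+r)-F(s-r)=s\,\widehat F(s)$, and substituting $k=-s$ gives $F(r-s)-F(-s-r)=-s\,\widehat F(-s)$, where I have used $-s+r=r-s$. Assuming $s\neq 0$, I divide by $s$ to obtain $\widehat F(s)=\tfrac1s\big(F(s+r)-F(s-r)\big)$ and $\widehat F(-s)=\tfrac1s\big(F(-s-r)-F(r-s)\big)$. Consequently $\widehat F(s)=\widehat F(-s)$ is equivalent to $F(s+r)-F(s-r)=F(-s-r)-F(r-s)$.

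At this point I invoke the assumption $F(s-r)=F(r-s)$: the terms $-F(s-r)$ and $-F(r-s)$ on the two sides coincide and cancel, leaving exactly $F(s+r)=F(-s-r)$, which is the claimed equivalence. The degenerate case $s=0$ I would dispose of separately and trivially: there $\widehat F(s)=\widehat F(-s)$ holds automatically, while the hypothesis $F(s-r)=F(r-s)$ reads $F(-r)=F(r)$, which is precisely the condition $F(s+r)=F(-s-r)$, so both sides of the asserted equivalence are simultaneously true.

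The main (and only) thing to be careful about is ensuring that the substitution $k=-s$ produces the term $F(r-s)$ that matches the symmetry hypothesis, and that the division by $s$ is legitimate, which is why the case $s=0$ is treated on its own. Analyticity of $F$ and $\widehat F$ is used only to make the quotient $\widehat F(k)=\big(F(k+r)-F(k-r)\big)/k$ meaningful and well defined; it plays no further role in the argument.
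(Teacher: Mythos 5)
Your proof is correct and follows exactly the paper's approach, which is simply to substitute $k=s$ and $k=-s$ into $\delta_r F = k\widehat F$ and use the hypothesis $F(s-r)=F(r-s)$ to cancel the matching terms. Your explicit treatment of the degenerate case $s=0$ is a reasonable extra precaution that the paper leaves implicit.
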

\begin{proof}
The statement follows by taking $k=s$ and $k=-s$ in the equality $\delta_r F = k \widehat F$.
\end{proof}

\begin{lemma}\label{lemma: alt axiomatics one-dim}
 The following two properties are equivalent for any analytic function $F(k)$ $(k \in \C)$ and any $n\in \N$, $m \in \integers_{\geq 0}$.
    \begin{enumerate}
        \item For all $s = 1, \dots, n,$
        \begin{equation}\label{eq: alt axiomaticss one-dim s}
            \bigg( \delta \circ \frac1k \bigg)^{s-1} \delta F(k) \bigg\rvert_{k=0} = 0, 
        \end{equation}
        and for all $t = 1, \dots, m,$ 
        \begin{equation}\label{eq: alt axiomaticss one-dim t}
            \bigg( \delta_2 \circ \frac1k \bigg)^{t} \circ \bigg( \delta \circ \frac1k \bigg)^{n-1} \delta F(k) \bigg\rvert_{k=0} = 0.
        \end{equation}
        \item $F(s) = F(-s)$ for all $s = 1, 2, 3, \dots, n, n+2, \dots, n+2m$.
    \end{enumerate}
\end{lemma}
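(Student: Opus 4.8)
The statement is a one–dimensional bookkeeping result, and the plan is to reduce everything to repeated applications of the already–proved Lemma~\ref{lemma: alt axiomatics one-dim - additional lemma}. The first step is to introduce the ``ladder'' of functions obtained by stripping off one operator at a time: set $\Phi_0 = F$, let $\Phi_j = \frac1k \delta \Phi_{j-1}$ for $j = 1, \dots, n$, and let $\Phi_j = \frac1k \delta_2 \Phi_{j-1}$ for $j = n+1, \dots, n+m$. By construction $\delta \Phi_{j-1} = k\Phi_j$ for $1\le j\le n$ and $\delta_2 \Phi_{j-1} = k\Phi_j$ for $n < j \le n+m$, so each consecutive pair $(\Phi_{j-1},\Phi_j)$ is exactly of the form required by Lemma~\ref{lemma: alt axiomatics one-dim - additional lemma} with $r=1$, respectively $r=2$. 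The role of the ladder is twofold: $\Phi_j$ is analytic precisely when the first $j$ vanishing conditions of Property~1 hold (division by $k$ being legitimate exactly when the preceding value at $0$ vanishes), and, granting this analyticity, one computes $\big(\delta\circ\frac1k\big)^{s-1}\delta F = \delta\Phi_{s-1}$ and $\big(\delta_2\circ\frac1k\big)^{t}\big(\delta\circ\frac1k\big)^{n-1}\delta F = \delta_2\Phi_{n+t-1}$. Hence Property~1 is equivalent to the ``leading-edge'' symmetries $\Phi_j(1)=\Phi_j(-1)$ for $0\le j\le n-1$ (the $s$-part, the case already treated in \cite{CSV}) together with $\Phi_j(2)=\Phi_j(-2)$ for $n\le j\le n+m-1$ (the $t$-part, the new $\delta_2$ ingredient).

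The core of the argument is a triangular array of target symmetry points: for $0\le j\le n$ declare that $\Phi_j$ is symmetric at every point of $S_j = \{1,2,\dots,n-j\}\cup\{n-j+2, n-j+4,\dots,n-j+2m\}$, and for $0\le t\le m$ that $\Phi_{n+t}$ is symmetric at every point of $\{2,4,\dots,2(m-t)\}$. The bottom rung $S_0 = \{1,\dots,n,\,n+2,\dots,n+2m\}$ is exactly Property~2. I would then show that the whole array is equivalent to Property~1 by applying Lemma~\ref{lemma: alt axiomatics one-dim - additional lemma} along each rung: with $F=\Phi_{j-1}$ and $\widehat F=\Phi_j$ it reads, given $\Phi_{j-1}(s-r)=\Phi_{j-1}(r-s)$, that $\Phi_j(s)=\Phi_j(-s)$ if and only if $\Phi_{j-1}(s+r)=\Phi_{j-1}(-s-r)$. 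Thus each rung converts a symmetry of one level at a point into a symmetry of the neighbouring level at a point shifted by $r$, and a direct check confirms that the array $S_j$ is internally consistent with these shift rules (shifts by $1$ in the range $j\le n$, by $2$ in the range $j>n$, glued at $j=n$).

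For the backward direction (Property~2 $\Rightarrow$ Property~1) I would induct upward on $j$: assuming $\Phi_{j-1}$ analytic and symmetric at all of $S_{j-1}$, the presence of the leading point ($1$ in the $\delta$-regime, $2$ in the $\delta_2$-regime) gives $\delta\Phi_{j-1}(0)=0$, respectively $\delta_2\Phi_{j-1}(0)=0$; this both yields the corresponding Property~1 condition and makes $\Phi_j$ analytic. An inner induction on the symmetry point, taken in increasing order and using the helper lemma at each step, then produces all points of $S_j$. For the forward direction I would run the same table in reverse, inducting downward on $j$ from the top rung: at each level the leading-edge symmetry together with the already-established symmetries of $\Phi_j$ bootstrap, again via the helper lemma and in increasing order of the symmetry point, all the symmetry points of $\Phi_{j-1}$; the bottom rung $S_0$ is Property~2.

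The step I expect to be the main obstacle is the index bookkeeping of the array together with verifying, at \emph{every} invocation of Lemma~\ref{lemma: alt axiomatics one-dim - additional lemma}, that its hypothesis—the symmetry of the lower level at the point $s-r$—is in fact already available; this forces the inner inductions on the symmetry point to be carried out in the correct (increasing) order and the two regimes $r=1$ and $r=2$ to be matched carefully at the junction $j=n$. A secondary subtlety, easy to overlook, is that in the backward direction the analyticity of the $\Phi_j$ is not given a priori but must be produced hand-in-hand with the symmetries, since the divisions by $k$ defining the ladder are only justified once the relevant values at $0$ are known to vanish.
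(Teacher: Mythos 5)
Your proposal is correct and follows essentially the same route as the paper's proof: your ladder $\Phi_0,\dots,\Phi_{n+m}$ is exactly the paper's chain $G_0,\dots,G_n$ followed by $H_0,\dots,H_m$, your triangular array $S_j$ reproduces the paper's symmetry conditions $G_q(s)=G_q(-s)$ for $s\in\{1,\dots,n-q\}\cup\{n-q+2,\dots,n-q+2m\}$ and $H_q(2s)=H_q(-2s)$ for $s\le m-q$, and both arguments propagate these symmetries up and down the ladder by repeated application of Lemma~\ref{lemma: alt axiomatics one-dim - additional lemma}, with the same care about establishing analyticity of each rung together with the leading-edge vanishing.
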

\begin{proof}
Let us firstly prove by induction on $n$ that conditions \eqref{eq: alt axiomaticss one-dim s} are equivalent to the existence of analytic functions $G_1, \ldots, G_n$ such that  $\delta G_q (k) = k G_{q+1}(k)$ for $0\le q \le n-1$, $G_0 = F$, with the conditions 
$G_q(s)=G_q(-s)$ for $1\le s \le n-q$.

For the base case $n=1$, the condition \eqref{eq: alt axiomaticss one-dim s} is equivalent to $F(1)=F(-1)$, which is equivalent to the existence of an analytic function $G_1$ such that $\delta F = k G_1$. Thus the base of induction holds.

Let us now assume that the induction hypothesis is satisfied for some $n\in \N$. The relation \eqref{eq: alt axiomaticss one-dim s} for $s=n+1$ is equivalent to $\delta  G_n (k) =0$ at $k=0$, that is $G_n(1) = G_n(-1)$, which is also equivalent to the existence of an analytic function $G_{n+1}$ such that $\delta G_n = k G_{n+1}$. 

We also have $k G_n = \delta G_{n-1}$ by the induction hypothesis. By applying Lemma \ref{lemma: alt axiomatics one-dim - additional lemma}
with $s=1$ we get $ G_{n-1}(2) = G_{n-1}(-2)$. 
By considering $k G_{n-1} = \delta G_{n-2}$ and using that by induction hypothesis $G_{n-2}(1) = G_{n-2}(-1)$, we obtain by Lemma \ref{lemma: alt axiomatics one-dim - additional lemma} with $s=2$ that $G_{n-2}(3) = G_{n-2}(-3)$. Similarly we obtain $G_q(n-q+1)=G_q(-n+q-1)$ successively for $q=n-3,\ldots, 0$, which completes the proof of the claim by induction.

For any fixed $n$ let us now prove by induction on $m$ that relations  \eqref{eq: alt axiomaticss one-dim s}, 
\eqref{eq: alt axiomaticss one-dim t} are equivalent to the existence of functions $G_1, \ldots, G_n$ as in the first paragraph of the proof with the additional conditions
$G_q(s) = G_q(-s)$ for $s=n-q + 2t$ with $1\le t \le m$, $0\le q \le n-1$,
together with the existence of analytic functions $H_1, \ldots, H_m$ satisfying $\delta_2 H_q = k H_{q+1}$ for $0 \le q \le m-1$, $H_0=G_n$, and $H_q(2s)=H_q(-2s)$ for $1\le s \le m-q$.

The base case $m = 0$ is already done.
Let us now assume that the induction hypothesis is satisfied for some $m \in \Z_{\geq 0}$.
The relation \eqref{eq: alt axiomaticss one-dim t} for $t = m+1$ is equivalent to $\delta_2  H_m(k) =0$ at $k=0$, that is $H_m(2) = H_m(-2)$, which is also equivalent to the existence of an analytic function $H_{m+1}$ such that $\delta_2 H_m = k H_{m+1}$.

We also have $k H_m = \delta_2 H_{m-1}$ by the induction hypothesis. By applying Lemma  \ref{lemma: alt axiomatics one-dim - additional lemma},
we obtain $H_q(2(m-q+1)) = H_q(-2(m-q+1))$ successively for $q=m-1,\ldots, 0$.

By considering $k H_{0} = k G_n =\delta G_{n-1}$ and using that by induction hypothesis $G_{n-1}(2m+1) = G_{n-1}(-2m-1)$, we obtain by Lemma~\ref{lemma: alt axiomatics one-dim - additional lemma} for $s= 2(m+1)$ that $G_{n-1}(2m+3) = G_{n-1}(-2m-3)$. Similarly we obtain $G_q(n-q + 2m + 2) = G_q(-n+q - 2m - 2)$ successively for $q= n-2,\ldots, 0$, which completes the proof by induction on $m$.

It follows that condition $1$ in Lemma~\ref{lemma: alt axiomatics one-dim} implies condition $2$. 

Let us now show that condition $2$ implies condition $1$. Firstly, let us prove it in the case $m=0$ by induction on $n$. The base case $n=1$ is clear. Let us assume it for some $n \in \N$. Suppose now that $F(s) = F(-s)$ for $s = 1, \dots, n, n+1$. By the induction hypothesis and the above analysis in the first part of the proof, we know that there exist functions $G_1, \dots, G_n$ with properties as stated in the first paragraph of the proof.
By using $F(n\pm1) = F(-n\mp1)$ and $\delta F = kG_1$ and applying Lemma~\ref{lemma: alt axiomatics one-dim - additional lemma} with $s = n$, we get $G_1(n) = G_1(-n)$.
Similarly we obtain $G_q(n+1-q) = G_q(-n-1+q)$ successively for $q = 2, \dots, n$. In particular, $G_n(1) = G_n(-1)$, which implies that the relation~\eqref{eq: alt axiomaticss one-dim s} holds for $s = n+1$, completing the induction on $n$. 

Suppose that condition $2$ implies condition $1$ for some $m \in \Z_{\geq 0}$. Now suppose $F$ satisfies condition~$2$ and $F(n+2m+2) = F(-n-2m-2)$. By the induction hypothesis and the above analysis in the first part of the proof, we have existence of functions $G_1, \dots, G_n$ and $H_1, \dots, H_m$ with properties as stated in the fifth paragraph of the proof. By using the assumptions on $F$, the fact that $\delta F = k G_1$ and applying Lemma~\ref{lemma: alt axiomatics one-dim - additional lemma} for $s= n+2m+1$, we get that $G_{1}(n+2m+1) = G_{1}(-n-2m-1)$. Similarly we obtain $G_q(n+2m+2-q) = G_q(-n-2m-2+q)$ successively for $q = 2, \dots, n$, and 
$H_q(2m+2-2q) = H_q(-2m-2+2q)$ successively for $q = 0, \dots, m$. 
In particular, $H_m(2) = H_m(-2)$, which implies that the relation~\eqref{eq: alt axiomaticss one-dim t} holds for $t = m+1$, which completes the proof.
\end{proof}

The following Lemma is a generalisation to the present context of   \cite[Lemma~1]{Fbisp} (see also  \cite[Proposition~1]{CSV}).  
In its proof we use that for a polynomial $p(z)$ we have $p(z+\gamma) = p(z) + \LOT$ by the binomial theorem, where ``$\LOT$'' denotes terms of lower order in $z$ than $\deg p(z)$.
This Lemma will be used below to prove uniqueness of the BA function whenever such a function exists.

\begin{lemma}\label{lemma: h.o.t. of BA modified ax}
Let $\psi(z,x) = P(z, x) \exp\IP{z}{x}$ $(z, x \in \Cn)$,  where $P(z, x)$ is a polynomial in $z$. Suppose that $\psi$ satisfies conditions~\eqref{eq: alt BA abstract s} and~\eqref{eq: alt BA abstract t} for some non-zero $\a \in \Cn$, $m_\a \in \N$, $m_{2\a} \in \integers_{\geq 0}$. Then $\IP{\a}{z}^{m_\a + m_{2\a}}$ divides the highest order term $P_0(z, x)$ of $P(z, x)$.
\end{lemma}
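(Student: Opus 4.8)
The plan is to track the highest-order term in $z$ through the iterated difference-and-divide operators, exploiting the single computational fact that differencing by a shift $\gamma$ does not lower the $z$-degree of a function of the form $f(z,x)\exp\IP{z}{x}$. Concretely, writing $f = f_0 + \LOT$ with $f_0$ the leading $z$-homogeneous part, one has
$$\delta_\gamma\big(f\,e^{\IP{z}{x}}\big) = e^{\IP{z}{x}}\big(e^{\IP{\gamma}{x}}f(z+\gamma,x) - e^{-\IP{\gamma}{x}}f(z-\gamma,x)\big),$$
and the bracket is again a polynomial in $z$ of the same degree as $f$, with leading term $2\sinh\IP{\gamma}{x}\,f_0$. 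Thus $\delta_\gamma$ preserves the $z$-degree and merely rescales the leading coefficient by the factor $2\sinh\IP{\gamma}{x}$, which is nonzero for generic $x$; I will treat all $z$-coefficients as elements of the field of meromorphic functions of $x$, so that such factors are invertible.

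First I would treat the conditions~\eqref{eq: alt BA abstract s}. Set $v_s = \big(\delta_\alpha\circ\frac{1}{\IP{z}{\alpha}}\big)^{s-1}\delta_\alpha\psi$ and write $v_s = e^{\IP{z}{x}}\tilde v_s$. Arguing by induction on $s$, I would show simultaneously that $\tilde v_s$ is a polynomial (so that the next division is legitimate) and extract its leading term. Indeed, the vanishing of $v_s$ at $\IP{z}{\alpha}=0$ says exactly that $\IP{z}{\alpha}$ divides $\tilde v_s$, so $\frac{1}{\IP{z}{\alpha}}v_s = e^{\IP{z}{x}}w_s$ with $w_s := \tilde v_s/\IP{z}{\alpha}$ a polynomial of degree one less; this is what allows the iteration to continue. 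Matching the leading $z$-homogeneous parts in $\tilde v_s = \IP{z}{\alpha}\,w_s$, and using that $\delta_\alpha$ has rescaled the previous leading term by $2\sinh\IP{\alpha}{x}$, I find that each step forces one further factor of $\IP{z}{\alpha}$ into $P_0$. Carrying this through $s=1,\dots,m_\alpha$ yields $\IP{z}{\alpha}^{m_\alpha}\mid P_0$ together with the explicit leading term $(w_{m_\alpha})_0 = (2\sinh\IP{\alpha}{x})^{m_\alpha}\,P_0/\IP{z}{\alpha}^{m_\alpha}$.

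Next I would feed this into the conditions~\eqref{eq: alt BA abstract t}, which amount to applying the operator $\delta_{2\alpha}\circ\frac{1}{\IP{z}{\alpha}}$ to $v_{m_\alpha}$ a total of $t$ times. The computation is identical in form, except that $\delta_{2\alpha}$ now rescales the leading coefficient by $2\sinh 2\IP{\alpha}{x}$ (again nonzero generically). Starting from the leading term $(w_{m_\alpha})_0$ found above, each of the $m_{2\alpha}$ vanishing conditions forces, by the same leading-part matching, exactly one more factor of $\IP{z}{\alpha}$ into $P_0$. After $t = 1,\dots,m_{2\alpha}$ I obtain $\IP{z}{\alpha}^{m_\alpha+m_{2\alpha}}\mid P_0$, which is the claim.

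The part requiring the most care is the interlocking of two roles played by each vanishing condition: it must simultaneously guarantee that division by $\IP{z}{\alpha}$ returns a genuine polynomial (so that the next operator in the chain is defined) and supply the new divisibility of the leading term. Keeping these straight is exactly the content of the induction, and the only genuinely new ingredient beyond \cite[Proposition~1]{CSV} and \cite[Lemma~1]{Fbisp} is the bookkeeping for the second block of conditions governed by $\delta_{2\alpha}$; the presence of the exponential factor is harmless once coefficients are regarded as functions of $x$, since the hyperbolic sines never vanish identically. I should also note in passing that the argument shows the hypotheses cannot hold unless $\deg_z P_0 \ge m_\alpha + m_{2\alpha}$, so the divisibility statement is non-vacuous.
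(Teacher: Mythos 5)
Your proposal is correct and takes essentially the same route as the paper's proof: both iterate the difference-and-divide operators, observe that each $\delta_\gamma$ preserves the $z$-degree while multiplying the leading coefficient by $2\sinh\IP{\gamma}{x}$, and use each vanishing condition at $\IP{z}{\a}=0$ both to legitimise the next division and to force one further factor of $\IP{z}{\a}$ into $P_0$. Your $w_s$ and its leading term $(2\sinh\IP{\a}{x})^{s}P_0/\IP{z}{\a}^{s}$ are exactly the paper's $\psi^{(s)}$ and $P_0^{(s)}$, and the handling of the $\delta_{2\a}$ block matches as well.
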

\begin{proof}
    Firstly, the condition \eqref{eq: alt BA abstract s} with $s=1$ gives
    \begin{align*}
        0 = \delta_{\a} \psi(z, x) &= \big(P_0(z, x)(\exp{\IP{\a}{x}} - \exp{\IP{-\a}{x}})  + \LOT \big)\exp{\IP{z}{x}} \\
        &= \big(2 \sinh\IP{\a}{x} P_0(z, x) + \LOT \big)\exp{\IP{z}{x}}
    \end{align*}
    at $\IP{z}{\a} = 0$ for all $x \in \Cn$. This implies that $P_0(z, x)$ and the $\LOT$ must be divisible by $\IP{z}{\a}$. Let $P_0(z, x) = \IP{z}{\a}P_0^{(1)}(z, x)$, and let 
    \begin{equation*}
        \psi^{(1)}(z, x) = \frac{\delta_{\a}\psi(z, x)}{\IP{z}{\a}} = \big(2 \sinh\IP{\a}{x} P_0^{(1)}(z, x) + \LOT \big)\exp{\IP{z}{x}}.
    \end{equation*}
    The condition \eqref{eq: alt BA abstract s} with $s=2$ gives 
    \begin{equation*}
         0 = \delta_{\a} \psi^{(1)}(z, x) = 
         \big(4 \sinh^2\IP{\a}{x} P_0^{(1)}(z, x) + \LOT \big)\exp{\IP{z}{x}}
    \end{equation*}
    at $\IP{z}{\a} = 0$ for all $x \in \C^n$. This implies that $\IP{z}{\a}$ divides $P_0^{(1)}(z, x)$. In particular, $P_0(z, x)$ is divisible by $\IP{z}{\a}^2$:
    $
P_0(z, x)\IP{z}{\a}^{-2} =    P_0^{(2)}(z, x) 
    $
    for some polynomial $P_0^{(2)}$. We let
    $
        \psi^{(2)}(z, x) = {\IP{z}{\a}}^{-1}  \delta_{\a}\psi^{(1)}(z, x).
    $
        By iterating for $s=3, \dots, m_\a$, we get that $P_0(z, x)$ is divisible by $\IP{z}{\a}^{m_{\a}}$, and we recursively obtain polynomials
    $
        P_0^{(s)}(z, x) = P_0^{(s-1)}(z, x)\IP{z}{\a}^{-1} = P_0(z, x)\IP{z}{\a}^{-s}
    $
    and functions
    \begin{equation*}
        \psi^{(s)}(z, x) = \frac{\delta_{\a}\psi^{(s-1)}(z, x)}{\IP{z}{\a}} = \big(2^s \sinh^s\IP{\a}{x} P_0^{(s)}(z, x) + \LOT \big)\exp{\IP{z}{x}}.
    \end{equation*}
    The condition \eqref{eq: alt BA abstract t} with $t=1$ then gives
    \begin{equation*}
         0 = \delta_{2\a} \psi^{m_\a}(z, x) = 
         \big(2^{m_\a + 1} \sinh^{m_\a}\IP{\a}{x}\sinh\IP{2\a}{x} P_0^{(m_\a)}(z, x) + \LOT \big)\exp{\IP{z}{x}}
    \end{equation*}
    at $\IP{z}{\a} = 0$. 
    This implies that $\IP{z}{\a}$ divides $P_0^{(m_\a)}(z, x)$ and the \LOT, in particular, $P_0(z, x)$ is divisible by $\IP{z}{\a}^{m_\a+1}$.
    Continuing this iteratively for $t=2, \dots, m_{2\a}$ completes the proof.
\end{proof}

Lemma~\ref{lemma: h.o.t. of BA modified ax} has the following consequence.
\begin{lemma}\label{lemma: h.o.t. of BA for AG2}
 Let $\psi(z,x) = P(z, x) \exp\IP{z}{x}$ $(z, x \in \Cn)$ satisfy condition 2 in Definition~\ref{def: BA modified}, where $P(z, x)$ is a polynomial in $z$ with highest order term $P_0(z, x)$. Then $\prod_{\gamma \in R_+} \IP{\gamma}{z}^{m_\gamma}$ divides $P_0(z, x)$.
\end{lemma}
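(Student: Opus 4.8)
The plan is to reduce the statement to the already-established Lemma~\ref{lemma: h.o.t. of BA modified ax} by processing the vectors of $R_+$ one family of collinear vectors at a time. Recall from Lemma~\ref{lemma: alt axiomatics abstract} that condition~2 of Definition~\ref{def: BA modified}, applied to a fixed $\a \in R^r_+$ with its multiplicity $m_\a \in \N$ and with $m_{2\a}$ (which is $0$ unless $2\a \in R$), is equivalent to the pair of difference conditions~\eqref{eq: alt BA abstract s} and~\eqref{eq: alt BA abstract t} for that $\a$. Thus $\psi$ satisfies the hypotheses of Lemma~\ref{lemma: h.o.t. of BA modified ax} for each such $\a$, and that lemma yields that $\IP{\a}{z}^{m_\a + m_{2\a}}$ divides the highest order term $P_0(z,x)$.

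\medskip
\noindent First I would fix a set of representatives, one from each class of collinear vectors in $R_+$; concretely these are exactly the elements $\a \in R^r_+$, since any vector of $R_+$ is either such an $\a$ or its double $2\a$. For each $\a \in R^r_+$, Lemma~\ref{lemma: h.o.t. of BA modified ax} gives that $\IP{\a}{z}^{m_\a + m_{2\a}}$ divides $P_0(z,x)$ as a polynomial in $z$. Now observe that for a single collinear family the product $\prod_{\gamma \in R_+,\, \gamma \parallel \a} \IP{\gamma}{z}^{m_\gamma}$ equals $\IP{\a}{z}^{m_\a}\IP{2\a}{z}^{m_{2\a}}$, which differs from $\IP{\a}{z}^{m_\a + m_{2\a}}$ only by the nonzero scalar $2^{m_{2\a}}$ coming from $\IP{2\a}{z} = 2\IP{\a}{z}$. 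Hence the divisibility supplied by Lemma~\ref{lemma: h.o.t. of BA modified ax} is precisely the divisibility of $P_0$ by the $\a$-family factor of the target product.

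\medskip
\noindent The final step is to combine these single-family divisibilities into divisibility by the full product. Since distinct $\a, \a' \in R^r_+$ are non-collinear and non-isotropic, the linear forms $\IP{\a}{z}$ and $\IP{\a'}{z}$ are coprime irreducible elements of the polynomial ring $\C[z]$, and likewise $\IP{\a}{z}$ is coprime to $\IP{2\a'}{z} = 2\IP{\a'}{z}$. Because $\C[z]$ is a unique factorisation domain, pairwise-coprime factors each dividing $P_0$ have their product dividing $P_0$; applying this across all families gives that $\prod_{\a \in R^r_+} \IP{\a}{z}^{m_\a + m_{2\a}}$ divides $P_0(z,x)$, which up to an overall nonzero constant is $\prod_{\gamma \in R_+} \IP{\gamma}{z}^{m_\gamma}$. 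This proves the claim.

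\medskip
\noindent I do not anticipate a serious obstacle here, since the substantive analytic work is already done in Lemma~\ref{lemma: h.o.t. of BA modified ax}; the only points requiring a little care are the bookkeeping that converts between the exponent $m_\a + m_{2\a}$ and the separate factors $\IP{\a}{z}^{m_\a}\IP{2\a}{z}^{m_{2\a}}$, and the invocation of unique factorisation to glue the coprime single-family divisibilities together. One should merely note that $P_0$ may be the zero polynomial in degenerate cases, in which case the divisibility holds trivially.
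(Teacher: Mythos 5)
Your proposal is correct and follows essentially the same route as the paper: invoke Lemma~\ref{lemma: alt axiomatics abstract} to pass to the difference conditions, apply Lemma~\ref{lemma: h.o.t. of BA modified ax} to each $\a \in R^r_+$ to get divisibility by $\IP{\a}{z}^{m_\a+m_{2\a}}$, identify this with the collinear-family factor $\IP{\a}{z}^{m_\a}\IP{2\a}{z}^{m_{2\a}}$ up to a constant, and combine over the pairwise non-collinear families. The paper's proof is just a terser version of this, leaving the coprimality/UFD gluing step implicit.
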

\begin{proof}
    Lemma~\ref{lemma: h.o.t. of BA modified ax} gives that $P_0(z, x)$ is divisible by $\IP{z}{\a}^{m_\a + m_{2\a}}$ for all $\a \in R^r_+$. This is a constant multiple of  $\IP{z}{\a}^{m_{\a}} \IP{z}{2\a}^{m_{2\a}}$. 
    The statement follows since we are assuming that collinear vectors in $R_+$ are only of the form $\a$, $2\a$.
\end{proof}
Lemma~\ref{lemma: h.o.t. of BA for AG2} leads to the following uniqueness statement analogous to \cite[Proposition~1]{CSV}  (cf.\ also   \cite[Proposition~1]{Fbisp}) with a completely analogous proof.

\begin{proposition}\label{prop: uniqueness of BA function for AG2}
If a BA function satisfying Definition~\ref{def: BA modified} exists, then it is unique. In particular, if the BA function for $AG_2$ exists then it is unique.
\end{proposition}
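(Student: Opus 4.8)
The plan is to follow the classical argument of \cite[Proposition~1]{CSV}: show that the difference of any two candidate BA functions must vanish identically. Suppose $\psi_1(z,x) = P_1(z,x)\exp\IP{z}{x}$ and $\psi_2(z,x) = P_2(z,x)\exp\IP{z}{x}$ both satisfy Definition~\ref{def: BA modified}. By part~1 of the definition, the polynomials $P_1$ and $P_2$ share the same highest order term $\prod_{\gamma \in R_+}\IP{\gamma}{z}^{m_\gamma}$, which is homogeneous of degree $N = \sum_{\gamma \in R_+} m_\gamma$. Consequently the difference $Q(z,x) = P_1(z,x) - P_2(z,x)$ is a polynomial in $z$ of degree strictly less than $N$.

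First I would observe that condition~2 in Definition~\ref{def: BA modified} is linear in $\psi$: the quasi-invariance relations $\psi(z+s\alpha,x) = \psi(z-s\alpha,x)$ are linear, so the function $\psi := \psi_1 - \psi_2 = Q(z,x)\exp\IP{z}{x}$ again satisfies condition~2. Next I would apply Lemma~\ref{lemma: h.o.t. of BA for AG2} directly to $\psi$, obtaining that $\prod_{\gamma \in R_+}\IP{\gamma}{z}^{m_\gamma}$ divides the highest order term of $Q$. (Equivalently, one may first pass to the conditions~\eqref{eq: alt BA abstract s} and~\eqref{eq: alt BA abstract t} via Lemma~\ref{lemma: alt axiomatics abstract}, and then invoke Lemma~\ref{lemma: h.o.t. of BA modified ax} for each $\alpha \in R^r_+$.)

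The key step is then a degree count: the highest order term of $Q$ is homogeneous of degree $\deg Q < N$, whereas $\prod_{\gamma\in R_+}\IP{\gamma}{z}^{m_\gamma}$ is homogeneous of degree $N$. A nonzero homogeneous polynomial of degree less than $N$ cannot be divisible by one of degree $N$, so the highest order term of $Q$ must be zero. Since the highest order term of a nonzero polynomial is nonzero by definition, this forces $Q \equiv 0$, that is $P_1 = P_2$ and hence $\psi_1 = \psi_2$. The claim for $AG_2$ then follows by specialising the general statement to $R = AG_2$ with the multiplicities recorded after Definition~\ref{def: BA modified}.

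I do not anticipate a genuine obstacle: once Lemma~\ref{lemma: h.o.t. of BA for AG2} is available, uniqueness is a purely formal degree argument. The only care needed is to verify that condition~2 is preserved under taking differences, so that the earlier lemmas apply to $\psi_1 - \psi_2$, and to note that the cancellation of the common leading terms is exactly what makes the divisibility conclusion contradictory unless the difference vanishes.
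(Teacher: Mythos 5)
Your argument is correct and is precisely the proof the paper intends: the paper states that the proposition follows from Lemma~\ref{lemma: h.o.t. of BA for AG2} by the argument of \cite[Proposition~1]{CSV}, which is exactly the linearity-plus-degree-count you give. No issues.
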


The next Theorem generalises Theorem \cite[Theorem~1]{CSV} 
to the present context, and it is proved analogously to how that Theorem is proved. It states that if the BA function satisfying Definition~\ref{def: BA modified} exists, then it is a joint eigenfunction of a commutative ring of differential operators in the variables~$x$. Let us first define an isomorphic ring of polynomials. 

Let $\mathcal{R}$ be the ring of polynomials $p(z)\in \C[z_1, \ldots, z_n]$ satisfying
\begin{align*}
    p(z + s\alpha) = p(z-s\alpha) \text{ at } \IP{z}{\a} = 0, s = 1, 2, 3, \dots, m_\a, m_\a + 2, \dots, m_\a + 2m_{2\a} 
\end{align*}
for all $\a \in R^r_+$ (we remark that this is similar to condition~2 in Definition~\ref{def: BA modified}).
We have $z^2 \in \mathcal{R}$. Indeed, for any $\gamma \in \Cn$, $s \in \naturals$, we have $(z\pm s\gamma)^2 = z^2 \pm 2s \IP{z}{\gamma} + s^2\gamma^2 = z^2 + s^2\gamma^2$ at $\IP{z}{\gamma} = 0$.

For a polynomial $p(z) = p(z_1, \dots, z_n)$, by $p(\partial_x)$ we will mean $p(\partial_{x_1}, \dots, \partial_{x_n})$. For example, if $p(z) = z^2 = z_1^2 + \dots + z_n^2$, then $p(\partial_x) = \Delta_x$ is the Laplace  operator in $n$ dimensions acting in the variables $x$.

The following statement takes places.
\begin{theorem}\label{thm: ring of differential operators for AG2}
If the BA function $\psi(z, x)$ satisfying Definition~\ref{def: BA modified} exists, then for any $p(z) \in \mathcal{R}$ there is a differential operator $L_p( x, \partial_x)$ with highest order term $p(\partial_x)$ such that $$L_p( x, \partial_x) \psi(z, x) = p(z) \psi(z, x).$$ For any $p, q \in \mathcal{R}$, the operators $L_p$ and $L_q$ commute.  
\end{theorem}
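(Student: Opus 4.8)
The plan is to construct the operator $L_p$ explicitly from the Baker--Akhiezer function by reproducing the standard argument of \cite{CSV}, adapted so that the quasi-invariance conditions now include the non-reduced vectors $2\a$. The starting point is that, by Lemma~\ref{lemma: h.o.t. of BA for AG2}, the highest order term $P_0(z,x)$ of $P(z,x)$ is divisible by $\prod_{\gamma \in R_+}\IP{\gamma}{z}^{m_\gamma}$, which is exactly the prescribed highest order term in condition~1 of Definition~\ref{def: BA modified}; so $\psi$ has leading behaviour $\prod_{\gamma\in R_+}\IP{\gamma}{z}^{m_\gamma}\exp\IP{z}{x}$. For a given $p(z)\in\mathcal R$, I would first produce $L_p$ as a differential operator in $x$ whose symbol is $p(\partial_x)$ by the usual device: since $\partial_{x_i}\exp\IP{z}{x} = z_i\exp\IP{z}{x}$, one writes $p(\partial_x)\psi = p(z)\psi + (\text{terms of lower order in }z)$, and then corrects the lower order discrepancy by subtracting differential operators of strictly lower order, iterating downward on the order. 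The existence of such a correction is where the BA axioms enter.

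First I would set $F(z,x) = \big(L_p(x,\partial_x) - p(z)\big)\psi(z,x)$ for the operator $L_p$ built so far and show, by induction on the order of the remaining discrepancy, that $F$ can be made to vanish identically. The key observation is that $F$ again has the form $Q(z,x)\exp\IP{z}{x}$ with $Q$ polynomial in $z$, and that $F$ inherits the quasi-invariance conditions~\eqref{eq: alt BA abstract s} and~\eqref{eq: alt BA abstract t}: indeed both $\psi$ and $p(z)\psi$ satisfy them --- the former by hypothesis, the latter because $p\in\mathcal R$ means $p(z)$ itself is quasi-invariant in the sense defining $\mathcal R$, so multiplication by $p(z)$ preserves the conditions on $\IP{z}{\a}=0$, and $L_p(x,\partial_x)$ acts only in $x$ and so commutes with the $z$-shifts $\delta_\a$ and with restriction to $\IP{z}{\a}=0$. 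Hence $F$ is a function of BA type satisfying condition~2 but, after the successive lower-order corrections, having trivial highest order term; applying Lemma~\ref{lemma: h.o.t. of BA modified ax} to its leading term forces that term to be divisible by $\IP{z}{\a}^{m_\a+m_{2\a}}$ for every $\a\in R^r_+$, which is impossible for a nonzero polynomial of too-low degree, and this contradiction drives the induction until $F\equiv 0$. This yields $L_p\psi = p(z)\psi$ with highest order term $p(\partial_x)$, and uniqueness of such $L_p$ follows from Proposition~\ref{prop: uniqueness of BA function for AG2} together with the injectivity of the symbol map.

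For commutativity, I would argue abstractly using the eigenfunction relations. For $p,q\in\mathcal R$ one computes
\begin{equation*}
    L_p L_q \psi = L_p\big(q(z)\psi\big) = q(z) L_p \psi = q(z)p(z)\psi,
\end{equation*}
and symmetrically $L_q L_p\psi = p(z)q(z)\psi$, so the commutator $[L_p,L_q]$ annihilates $\psi$. Since $[L_p,L_q]$ is a differential operator in $x$ alone and $\psi(z,x)$ depends on the full $n$-dimensional spectral parameter $z$ with the nondegenerate leading exponential $\exp\IP{z}{x}$, an operator in $x$ that kills $\psi$ for all $z$ must be zero; I would make this precise by noting that $\psi$, as a function of $x$, spans enough directions as $z$ varies (equivalently, the $\exp\IP{z}{x}$ are linearly independent and the polynomial prefactor does not degenerate). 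The main obstacle is the inductive existence step for $L_p$: one must verify that the quasi-invariance conditions with the extra $2\a$-terms are genuinely preserved under the correction procedure and that Lemma~\ref{lemma: h.o.t. of BA modified ax} applies to the remainder at each stage. This bookkeeping --- tracking how the mixed conditions~\eqref{eq: alt BA abstract s} and~\eqref{eq: alt BA abstract t} behave under multiplication by $p(z)$ and under the lower-order differential corrections --- is the part requiring genuine care, whereas the commutativity is then essentially formal.
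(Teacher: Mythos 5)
Your proposal is correct and follows essentially the same route as the paper, which proves this theorem by the argument of \cite[Theorem~1]{CSV}: build $L_p$ with symbol $p(\partial_x)$ and correct it order by order, using that the discrepancy $(L_p-p(z))\psi$ still satisfies the quasi-invariance conditions (since $p\in\mathcal R$ and $L_p$ acts in $x$ only), so that Lemma~\ref{lemma: h.o.t. of BA for AG2} forces its leading term to be divisible by $\prod_{\gamma\in R_+}\IP{\gamma}{z}^{m_\gamma}$, which both supplies the next correction and kills the remainder once its degree drops below $N$; commutativity then follows formally from $[L_p,L_q]\psi=0$. The only cosmetic slip is attributing uniqueness of $L_p$ to Proposition~\ref{prop: uniqueness of BA function for AG2} rather than to the fact that a nonzero operator in $x$ cannot annihilate $\psi$ for all $z$, but this does not affect the theorem.
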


The next Lemma will be used below to prove that the differential operator $L_{z^2}$ from Theorem~\ref{thm: ring of differential operators for AG2} corresponding to the element $z^2$ of $\mathcal{R}$ coincides with the generalised CMS Hamiltonian \eqref{genCMS} associated to the configuration $A = R_+$.  
The next Lemma is a generalisation of \cite[Lemma 2]{Fbisp} (see also \cite{CSV}).

\begin{lemma}\label{lemma: n.h.o.t. of BA modified}
Suppose $\psi(z, x) = P(z,x) \exp\IP{z}{x}$ satisfies Definition~\ref{def: BA modified}. 
Let $N = \sum_{\gamma \in R_+} m_\gamma$. Write $P(z,x) = \sum_{i=0}^N P_i(z, x)$ where $P_0(z, x) = \prod_{\gamma \in R_+} \IP{\gamma}{z}^{m_\gamma}$ and $P_i$ are polynomials homogeneous in $z$ with $\deg P_i = N - i$. Then
\begin{equation}\label{eq: n.h.o.t. of BA modified}
    \frac{P_1(z, x)}{P_0(z, x)} = - \sum_{\gamma \in R_+} \frac{m_\gamma(m_\gamma + 2m_{2\gamma} + 1)\gamma^2}{2\IP{\gamma}{z}} \coth \IP{\gamma}{x}.
\end{equation}
\end{lemma}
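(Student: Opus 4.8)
The plan is to pin down $\frac{P_1}{P_0}$ by its singularities. It is a rational function of $z$, homogeneous of degree $-1$, whose poles lie on the hyperplanes $H_\a = \{z : \IP{\a}{z} = 0\}$, $\a \in R^r_+$, since $P_0 = \prod_{\gamma \in R_+} \IP{\gamma}{z}^{m_\gamma}$ vanishes precisely on $\bigcup_{\a \in R^r_+} H_\a$. I would show that along each $H_\a$ the principal part of $\frac{P_1}{P_0}$ is
\[
  -\frac{m_\a(m_\a + 2m_{2\a} + 1)\a^2}{2\IP{\a}{z}}\coth\IP{\a}{x} - \frac{m_{2\a}(m_{2\a}+1)(2\a)^2}{2\IP{2\a}{z}}\coth\IP{2\a}{x},
\]
that is, a simple pole with residue constant in $z$. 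Granting this, the difference of $\frac{P_1}{P_0}$ and the asserted sum is a rational function with no poles in codimension one, hence a polynomial; being homogeneous of degree $-1$, it vanishes, which is \eqref{eq: n.h.o.t. of BA modified}.

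To find the principal part along a fixed $H_\a$ ($\a \in R^r_+$), I would expand the quasi-invariance conditions of Definition~\ref{def: BA modified}, restricted to $H_\a$, by homogeneity degree in $z$. Since a homogeneous polynomial of degree $d$ restricts to one of the same degree on $H_\a$, the restrictions of different degrees are independent and the identity on $H_\a$ splits degree by degree. The top degree reproduces Lemma~\ref{lemma: h.o.t. of BA modified ax}. Using $P(z \pm \a, x) = P_0 \pm \IP{\a}{\nabla_z}P_0 + P_1 + \LOT$, the degree-$(N-1)$ part of the $s = 1$ condition $\psi(z+\a, x) = \psi(z-\a, x)$ reads, on $H_\a$,
\[
  2\sinh\IP{\a}{x}\, P_1 + 2\cosh\IP{\a}{x}\, \IP{\a}{\nabla_z}P_0 = 0,
\]
so $P_1\big|_{H_\a} = -\coth\IP{\a}{x}\, \IP{\a}{\nabla_z}P_0\big|_{H_\a}$; the ratio $\cosh/\sinh$ is the source of the $\coth$. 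When $m_\a + m_{2\a} = 1$ this already gives the residue $-\a^2 \coth\IP{\a}{x}$, matching the claim.

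For higher multiplicities I would push the iteration of Lemma~\ref{lemma: h.o.t. of BA modified ax} one order further. Setting $\psi^{(s)} = \IP{\a}{z}^{-1} \delta_\a \psi^{(s-1)}$ with $\psi^{(0)} = \psi$ and tracking, alongside the leading term $2^s \sinh^s\IP{\a}{x}\, P_0^{(s)}$ with $P_0^{(s)} = P_0 \IP{\a}{z}^{-s}$, the next homogeneous term $S^{(s)}$ of degree $N - s - 1$, one obtains the recursion
\[
  S^{(s)} = \frac{2\sinh\IP{\a}{x}}{\IP{\a}{z}}\, S^{(s-1)} + \frac{2^s \cosh\IP{\a}{x}\, \sinh^{s-1}\IP{\a}{x}}{\IP{\a}{z}}\, \IP{\a}{\nabla_z}P_0^{(s-1)}, \qquad S^{(0)} = P_1,
\]
valid through $s = m_\a$, together with the analogous recursion driven by $\delta_{2\a}$ governing the conditions~\eqref{eq: alt BA abstract t} for $t = 1, \dots, m_{2\a}$. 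Resolving these and reading off the coefficient of $\IP{\a}{z}^{-1}$ yields the full principal part along $H_\a$, the $\delta_\a$-steps producing the $\coth\IP{\a}{x}$ contribution and the $\delta_{2\a}$-steps the $\coth\IP{2\a}{x}$ one.

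The main obstacle is exactly this bookkeeping. One must check that, although $P_0$ vanishes to order $m_\a + m_{2\a}$ on $H_\a$, the subleading data conspire so that $\frac{P_1}{P_0}$ retains only a first-order pole with a residue independent of the transverse position on $H_\a$, and then extract the precise numerical coefficients. I expect the factor $m_\a(m_\a + 2m_{2\a} + 1)$ to arise as an accumulation of contributions proportional to $s$ over the iteration steps (sums of the type $\sum_{s=1}^{m_\a} s$), with the interaction between the $\delta_\a$- and $\delta_{2\a}$-iterations accounting both for the dependence of the $\coth\IP{\a}{x}$ coefficient on $m_{2\a}$ and for the separate $\coth\IP{2\a}{x}$ term carried by the single hyperplane $H_\a = H_{2\a}$.
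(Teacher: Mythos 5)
Your proposal is correct and follows essentially the same route as the paper: the degree-by-degree expansion of the quasi-invariance conditions, iterated through the operators of Lemma~\ref{lemma: alt axiomatics abstract} while tracking the subleading homogeneous term, is exactly the paper's computation, and your closing step (no poles in codimension one plus homogeneity of degree $-1$ forces the difference to vanish) is the paper's uniqueness-by-degree-count argument in residue language. The bookkeeping you defer --- resolving the recursion for $S^{(s)}$ and summing the contributions $(m_\a + m_{2\a} - s + 1)\a^2$ over $s = 1, \dots, m_\a$, then the analogous $\delta_{2\a}$ sum --- is precisely what the paper carries out to reach \eqref{eq: condition P1 m + m2 step}, and it does produce the coefficients you anticipate.
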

\begin{proof}
We introduce the following notations
\begin{equation*}
    A(z) = P_0(z, x), \quad A_\alpha(z) = \frac{A(z)}{\IP{\alpha}{z}}, \quad A_{\alpha\beta}(z) = \frac{A(z)}{\IP{\alpha}{z} \IP{\beta}{z}} = A_{\beta \a}(z), \quad \textnormal{etc.} 
\end{equation*}
As a shorthand, we will write $A_{\alpha^2}$ for $A_{\alpha \a}$, etc.
In this notation, the equality \eqref{eq: n.h.o.t. of BA modified} can be written as $P_1 = -\sum_{\gamma \in R_+} \frac{1}{2}m_\gamma(m_\gamma + 2 m_{2\gamma} + 1)\gamma^2 A_\gamma(z) \coth \IP{\gamma}{x}$.

The strategy is as follows. Applying condition~2 in Definition~\ref{def: BA modified} and equating the homogeneous terms of each degree on the right- and left-hand sides, one gets equations for $P_1, \dots, P_N$. We are assuming that $\psi$ satisfies the definition, hence a solution exists.
We are only interested in $P_1$ here.

Fix any $\alpha \in R^r_+$. We will omit arguments of functions whenever convenient, and write $e^\alpha$ for $\exp \IP{\alpha}{x}$.
By the binomial theorem $A(z \pm \alpha)$ has the following homogeneous components
\begin{equation*}\label{Binom theorem for A(z)}
    A(z \pm \alpha) = A(z) \pm \sum_{\beta \in R_+} m_\beta \IP{\alpha}{\beta} A_\beta(z) + \LOT
\end{equation*}
and similarly $P_1(z \pm \alpha, x) = P_1(z, x) + \LOT$

We will use the fact that for this fixed $\a$ condition~2 in Definition~\ref{def: BA modified} admits the equivalent characterisation given in Lemma~
\ref{lemma: alt axiomatics abstract}. We will apply condition~\eqref{eq: alt BA abstract s} for $s = 1, \dots, m_\a$, and then if $m_{2\a} > 0$ also condition~\eqref{eq: alt BA abstract t} for $t = 1, \dots, m_{2\a}$, in order to see what conditions this places on $P_1$.
Applying condition~\eqref{eq: alt BA abstract s} with $s = 1$ gives $0 = \delta_\alpha \psi(z, x) = (P(z + \alpha, x)e^\alpha - P(z - \alpha, x)e^{-\alpha}) \exp\IP{z}{x}$ at $\IP{\alpha}{z} = 0$. This equation can be rearranged (after dividing through by $\exp\IP{z}{x}$) as
\begin{align}
    0 = A(z)(e^\a - e^{-\alpha}) + \sum_{\beta \in R_+} m_\beta \IP{\alpha}{\beta} A_\beta(z)(e^\alpha + e^{-\alpha}) + P_1(z,x)(e^\a - e^{-\alpha}) + \LOT  \label{eq: condition P1 first step expanded}
\end{align}
Note that for $\beta \neq \alpha$ and $\beta \neq 2\alpha$ we have $A_\beta(z) = 0$ at $\IP{\alpha}{z} = 0$. By identifying the homogeneous terms of degree $N-1$ in \eqref{eq: condition P1 first step expanded}, we get at $\IP{\alpha}{z} = 0$ that
\begin{align}
    0 &= \left(m_\alpha \alpha^2 A_\alpha + 2m_{2\alpha} \alpha^2 A_{2\alpha}\right) (e^\alpha + e^{-\alpha}) + P_1(e^\alpha - e^{-\alpha}) \nonumber \\
    &= (m_\alpha  + m_{2\alpha}) \alpha^2 A_\alpha(e^\alpha + e^{-\alpha}) + P_1(e^\alpha - e^{-\alpha}). \label{eq: condition P1 first step}
\end{align}
Notice that this can be rearranged as
$
    P_1 = - (m_\alpha  + m_{2\a})\alpha^2 A_\alpha \coth \IP{\alpha}{x}
$
at $\IP{\a}{z} = 0$.

Assume that $m_\alpha > 1$. Then $A_\alpha(z) = 0 $ at $\IP{\alpha}{z} = 0$, which forces $P_1 = 0$ at $\IP{\alpha}{z} = 0$. This is equivalent to divisibility of $P_1$ by $\IP{\alpha}{z}$. 
We are next going to consider the condition~\eqref{eq: alt BA abstract s} from Lemma~\ref{lemma: alt axiomatics abstract} with $s=2$, and we are now interested in the degree $N-2$ terms in the polynomial part of $(\delta_\alpha \circ \IP{\alpha}{z}^{-1})\delta_\alpha \psi(z, x)$. These terms arise in two ways: firstly they come from the degree $N-2$ terms in the polynomial part of $\IP{\alpha}{z}^{-1} \delta_\alpha \psi$ and further action of $\delta_\a$ on $\exp \IP{\a}{z}$; secondly they come from application of $\delta_\a$ onto the terms in $\IP{\alpha}{z}^{-1} \delta_\alpha \psi$ whose polynomial part has degree $N-1$. Let us examine these two possibilities individually.
From \eqref{eq: condition P1 first step expanded} we see that the terms of degree $N-2$ in $\exp\IP{-z}{x} \IP{\alpha}{z}^{-1} \delta_\alpha \psi$ are
\begin{align}\label{N-2}
    &\sum_{\beta \in R_+} m_\beta \IP{\alpha}{\beta} A_{\alpha\beta}(z)(e^\alpha + e^{-\alpha}) + \frac{P_1(z,x)}{\IP{\alpha}{z}}(e^\a - e^{-\alpha}) \\
    &= (m_\a + m_{2\a})\a^2 A_{\a^2}(z) (e^\alpha + e^{-\alpha}) +  \sum_{\beta \in R_+ \setminus \{ \a, 2\a \}} m_\beta \IP{\alpha}{\beta} A_{\alpha\beta}(z)(e^\alpha + e^{-\alpha}) + \frac{P_1(z,x)}{\IP{\alpha}{z}}(e^\a - e^{-\alpha}) \nonumber.
\end{align}

And from \eqref{eq: condition P1 first step expanded} we see that the term of degree $N-1$ in $\exp\IP{-z}{x}\IP{\alpha}{z}^{-1}\delta_\alpha \psi$ is $A_\alpha(z) (e^\a - e^{-\a})$. Since we need to know how $\delta_\a$ acts on it, we note here that
\begin{align}\label{N-1}
A_\alpha(z \pm \alpha) 
= A_\alpha(z) \pm (m_\alpha + m_{2\alpha} - 1)\alpha^2 A_{\alpha^2}(z)
\pm \sum_{\beta \in R_+ \setminus \{\alpha, 2\alpha \}} m_\beta \IP{\alpha}{\beta}A_{\alpha \beta}(z)  + \LOT
\end{align}

Let us now consider condition~\eqref{eq: alt BA abstract s} from Lemma~\ref{lemma: alt axiomatics abstract} with $s=2$, namely
$ 0 = (\delta_\alpha \circ \IP{\alpha}{z}^{-1})\delta_\alpha \psi(z, x)$ at $\IP{\alpha}{z} = 0$. By making use of formulas~\eqref{N-2} and~\eqref{N-1} this condition can be rearranged as
\begin{equation}\label{eq: condition P1 second step expanded}
    \begin{aligned}
        0 = A_\a(z) (e^\a - e^{-\a})^2 &+ (m_\alpha + m_{2\alpha} - 1)\alpha^2 A_{\alpha^2}(z) (e^\a - e^{-\a})(e^\a + e^{-\a}) 
        \\ &+ (m_\a + m_{2\a})\a^2 A_{\a^2}(z) (e^\alpha + e^{-\alpha})(e^\alpha - e^{-\alpha}) 
        \\&+ \sum_{\beta \in R_+ \setminus \{\alpha, 2\alpha \}} 2m_\beta \IP{\alpha}{\beta}A_{\alpha \beta}(z)(e^\alpha + e^{-\a})(e^\alpha - e^{-\a}) \\
        &+ \frac{P_1(z,x)}{\IP{\a}{z}}(e^\a - e^{-\alpha})^2 + \LOT 
\end{aligned}
\end{equation}
Note that for $\beta \neq \alpha$ and $\beta \neq 2\alpha$ we have $A_{\a\beta}(z) = 0$ at $\IP{\alpha}{z} = 0$.
By identifying the homogeneous terms of degree $N-2$ in \eqref{eq: condition P1 second step expanded} we get at $\IP{\a}{z} = 0$ that 
\begin{align}
    0 = \left( (m_\alpha  + m_{2\alpha}) + (m_\alpha  + m_{2\alpha} -1) \right) \alpha^2 A_{\alpha^2}(z)(e^\alpha - e^{-\alpha})(e^\alpha + e^{-\alpha}) 
    + \frac{P_1}{\IP{\a}{z}}(e^\alpha - e^{-\alpha})^2. \label{eq: condition P1 second step}
\end{align}
If $m_\a > 2$, then $A_{\alpha^2}(z) = 0$ at $\IP{\a}{z} = 0$, so \eqref{eq: condition P1 second step} gives divisibility of $\IP{\a}{z}^{-1}P_1$ by $\IP{\a}{z}$.
Continuing iteratively for $s = 3, \dots, m_\a$ the process above, after the $s = m_\a$ step we get that at $\IP{\a}{z} = 0$ 
\begin{equation}\label{eq: condition P1 mth step}
    \begin{aligned}
    0 = &\left( (m_\alpha  + m_{2\alpha}) + (m_\alpha  + m_{2\alpha} -1)+ \dots +(1 + m_{2\a}) \right) \alpha^2 A_{\alpha^{m_\a}}(z)(e^\alpha - e^{-\alpha})^{m_\a - 1}(e^\alpha + e^{-\alpha}) 
    \\
    &+ \frac{P_1}{\IP{\a}{z}^{m_\a - 1}}(e^\alpha - e^{-\alpha})^{m_\a}. 
\end{aligned}
\end{equation}
We have
\begin{align*}
    (m_\alpha  + m_{2\alpha}) + (m_\alpha  + m_{2\alpha} -1)+ \dots +(1 + m_{2\a}) &= m_\a m_{2\a} + (1 + \dots + m_\a)
    = \frac{m_\a(m_\a +2m_{2\a} + 1)}{2}.
\end{align*}
Therefore \eqref{eq: condition P1 mth step} can be rearranged as 
\begin{equation} \label{eq: condition P1 mth step rearranged}
\frac{P_1}{\IP{\a}{z}^{m_\a - 1}} = -\frac{m_\a(m_\a +2m_{2\a} + 1)}{2} \a^2 A_{\a^{m_\a}}(z) \coth \IP{\a}{x} \textnormal{ at } \IP{\a}{z} = 0.
\end{equation}
If $m_{2\alpha} = 0$ for all $\alpha \in R_+$ for which $\frac{\alpha}{2} \notin R_+$, then for each such $\alpha$ there are no more conditions in Definition~\ref{def: BA modified} to be satisfied, and $P_1$ given by \eqref{eq: n.h.o.t. of BA modified} satisfies \eqref{eq: condition P1 mth step rearranged}. This completes the proof in this case, by uniqueness. Otherwise we continue and consider condition~\eqref{eq: alt BA abstract t}. Note that the highest order term that we are carrying forward is $A_{\a^{m_\a -1}}(z)$.

Assume $m_{2\a} > 0$. Then $A_{\alpha^{m_\a}}(z) = 0$ at $\IP{\a}{z} = 0$, so \eqref{eq: condition P1 mth step rearranged} implies divisibility of $\IP{\a}{z}^{1-m_\a}P_1$ by $\IP{\a}{z}$. 
Consider the condition~\eqref{eq: alt BA abstract t} from Lemma~\ref{lemma: alt axiomatics abstract} with $t=1$. Note that 
\begin{equation*}
    A_{\a^{m_\a }}(z \pm 2\a) = A_{\a^{m_\a }}(z) \pm m_{2\a}(2\alpha)^2 A_{(2\a)\a^{m_\a }} + f(z)
\end{equation*}
where the term $f(z)$ vanishes at $\IP{\a}{z} = 0$.
Thus at this step we obtain
\begin{equation}\label{eq: condition P1 m +1 th step}
    \begin{aligned}
        0 = \,\,&m_{2\a}(2\alpha)^2 A_{(2\a)\a^{m_\a }}(z) (e^{2\alpha} + e^{-2\alpha})(e^\alpha - e^{-\alpha})^{m_\a} 
        \\&+ \frac{m_\a(m_\a +2m_{2\a} + 1)\alpha^2}{2} A_{\alpha^{m_\a + 1}}(z)(e^{2\alpha} - e^{-2\alpha})(e^\alpha - e^{-\alpha})^{m_\a - 1}(e^\alpha + e^{-\alpha}) 
        \\
        &+ \frac{P_1}{\IP{\a}{z}^{m_\a}}(e^{2\alpha} - e^{-2\alpha})(e^\alpha - e^{-\alpha})^{m_\a} 
    \end{aligned}
\end{equation}
at $\IP{\a}{z} = 0$. This can be rearranged as 
\begin{equation}\label{eq: condition P1 m+1 th step rearranged}
    \frac{P_1}{\IP{\a}{z}^{m_\a}} = -m_{2\a}(2\a)^2A_{(2\a)\a^{m_\a }}(z) \coth \IP{2\a}{x} - \frac{m_\a(m_\a +2m_{2\a} + 1)\alpha^2}{2} A_{\alpha^{m_\a + 1}}(z) \coth \IP{\a}{x}.
\end{equation}
Assume $m_{2\a} = 1$, then one can check that $P_1$ given by \eqref{eq: n.h.o.t. of BA modified} satisfies \eqref{eq: condition P1 m+1 th step rearranged}. Otherwise $m_{2\a} > 1$ and we continue recursively.
For $t=2$, for example, to treat the highest order term we need to consider
\begin{equation*}
    A_{\a^{m_\a + 1}}(z \pm 2\a) = A_{\a^{m_\a + 1}}(z) \pm (m_{2\a}-1)(2\alpha)^2 A_{(2\a)\a^{m_\a+1}} + g(z)
\end{equation*}
where the term $g(z)$ vanishes at $\IP{\a}{z} = 0$.

We thus iterate for $t = 2, \dots, m_{2\a}$. One has to use that $1 + \dots + m_{2\a} = \frac{1}{2}m_{2\a}(m_{2\a} + 1)$, and at the end we get
\begin{equation} \label{eq: condition P1 m + m2 step}
    \begin{aligned}
    \frac{P_1}{\IP{\a}{z}^{m_\a + m_{2\a}-1}} = &-\frac{m_{2\a}(m_{2\a} + 1)(2\a)^2}{2} A_{(2\alpha) \a^{m_\a + m_{2\a}-1}} \coth \IP{2\a}{x} \\
    &- \frac{m_\a(m_\a +2m_{2\a} + 1)\alpha^2}{2} A_{\a^{m_\a + m_{2\a}}} \coth \IP{\a}{x} 
\end{aligned}
\end{equation}
at $\IP{\a}{z} = 0$. It is straightforward to check that $P_1$ given by \eqref{eq: n.h.o.t. of BA modified} satisfies \eqref{eq: condition P1 m + m2 step} for all $\a \in R_+$ such that $\frac{\a}{2} \notin R_+$. 

If there existed some other $\widetilde{P}_1$ of degree $N-1$ satisfying~\eqref{eq: condition P1 m + m2 step}, then \eqref{eq: condition P1 m + m2 step} would imply that $P_1 - \widetilde{P}_1$ is divisible by $\IP{\a}{z}^{m_\a + m_{2\a}}$ for all $\a \in R_+$ with $\frac{\a}{2} \notin R_+$. But unless $P_1 - \widetilde{P}_1 = 0$ this would mean that $P_1 - \widetilde{P}_1$ has degree at least $N$ which is not possible. This completes the proof.
\end{proof}

The next Propostion has a completely analogous proof to \cite[Proposition 2]{Fbisp}, it just uses Lemma~\ref{lemma: n.h.o.t. of BA modified} in place of \cite[Lemma 2]{Fbisp} (see also \cite{CSV}).
\begin{proposition}\label{prop: Lz2 relates to generalised H}
With notations and assumptions as in Theorem~\ref{thm: ring of differential operators for AG2},
the element $p(z) = z^2$ of $\mathcal{R}$ corresponds to the differential operator
\begin{equation*}
    L_{z^2} = \Delta - \sum_{\gamma \in R_+} \frac{m_\gamma(m_\gamma + 2m_{2\gamma} + 1)\gamma^2}{\sinh^2 \IP{\gamma}{x}},
\end{equation*}
which coincides (up to sign) with the generalised CMS operator \eqref{genCMS} for~$\A = (R_+, m)$.
\end{proposition}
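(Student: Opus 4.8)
The plan is to determine $L_{z^2}$ by substituting the BA function $\psi = P\exp\IP{z}{x}$ into the eigenvalue equation $L_{z^2}\psi = z^2\psi$ from Theorem~\ref{thm: ring of differential operators for AG2} and matching homogeneous components in $z$. Since $L_{z^2}$ has highest order term $p(\partial_x) = \Delta$ and $p(z) = z^2$ has degree two, the operator is second order, so we may write $L_{z^2} = \Delta + \sum_{i=1}^n b_i(x)\partial_{x_i} + u(x)$ for some functions $b_i$ and $u$. The goal is to show that every $b_i$ vanishes and that $u$ equals the CMS potential.

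First I would apply the Leibniz rule to compute the action of each term of $L_{z^2}$ on $\psi = P\exp\IP{z}{x}$. Writing $b = (b_1, \dots, b_n)$ and cancelling the common factor $\exp\IP{z}{x}$ together with the term $z^2 P$ coming from $\Delta$, the equation $L_{z^2}\psi = z^2\psi$ becomes the polynomial identity
\begin{equation*}
    \Delta_x P + 2\IP{z}{\nabla_x P} + \IP{b}{\nabla_x P} + \IP{b}{z} P + u P = 0.
\end{equation*}
I would then decompose $P = \sum_{i=0}^N P_i$ as in Lemma~\ref{lemma: n.h.o.t. of BA modified}, with $P_0 = \prod_{\gamma \in R_+}\IP{\gamma}{z}^{m_\gamma}$ independent of $x$, and compare terms homogeneous in $z$ of a given degree.

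The crucial observation is that $P_0$ does not depend on $x$, so $\nabla_x P_0 = 0$ and $\Delta_x P_0 = 0$. Hence the only contribution of top degree $N+1$ in $z$ to the identity is $\IP{b}{z} P_0$, and since $P_0 \not\equiv 0$ this forces $\IP{b}{z} = 0$ for all $z$, i.e.\ $b = 0$; thus $L_{z^2}$ has no first-order part. At the next degree $N$, the surviving terms are $2\IP{z}{\nabla_x P_1} + u P_0 = 0$. Using the explicit formula for $P_1/P_0$ from Lemma~\ref{lemma: n.h.o.t. of BA modified} together with the identity $\nabla_x \coth\IP{\gamma}{x} = -\gamma/\sinh^2\IP{\gamma}{x}$, the numerator $\IP{\gamma}{z}$ arising from $\IP{z}{\nabla_x P_1}$ cancels against the denominator $\IP{\gamma}{z}$ in each summand, giving
\begin{equation*}
    2\IP{z}{\nabla_x P_1} = P_0 \sum_{\gamma \in R_+} \frac{m_\gamma(m_\gamma + 2m_{2\gamma} + 1)\gamma^2}{\sinh^2\IP{\gamma}{x}}.
\end{equation*}
Dividing by $P_0$ then yields $u = -\sum_{\gamma \in R_+} m_\gamma(m_\gamma + 2m_{2\gamma}+1)\gamma^2/\sinh^2\IP{\gamma}{x}$, exactly the claimed potential.

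I expect the main work to be this last cancellation: verifying that the weighted sum $\IP{z}{\nabla_x P_1}$ collapses to an $x$-dependent potential times $P_0$, with no surviving $z$-dependence in the coefficients. This is precisely where Lemma~\ref{lemma: n.h.o.t. of BA modified} does the heavy lifting, as it pins down $P_1/P_0$ in the exact form needed for the $\IP{\gamma}{z}$ factors to disappear. Since the two top-degree comparisons determine $b$ and $u$ outright, the operator is forced to be as stated. Finally, comparing with \eqref{genCMS} and \eqref{eq: convention for coupling}, where $g_\gamma = m_\gamma(m_\gamma + 2m_{2\gamma}+1)\gamma^2$, shows that $L_{z^2} = -L$, so the two operators coincide up to sign.
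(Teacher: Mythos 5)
Your proposal is correct and follows essentially the same route as the paper, which defers to the standard argument of \cite[Proposition~2]{Fbisp}: substitute $\psi = P e^{\IP{z}{x}}$ into $L_{z^2}\psi = z^2\psi$, match the two top homogeneous degrees in $z$ using that $P_0$ is $x$-independent, and invoke Lemma~\ref{lemma: n.h.o.t. of BA modified} to identify the potential. The degree-$N$ cancellation of the $\IP{\gamma}{z}$ factors works exactly as you describe, so nothing is missing.
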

This then implies integrability of the Hamiltonian and provides an integral of motion $L_p$ for each $p(z) \in \mathcal{R}$. 
The following statement is one of the main results of  this paper. The proof will be presented below.
\begin{theorem}
There exists a BA function for the configuration $AG_2$.
\end{theorem}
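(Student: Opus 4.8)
The plan is to prove existence of the BA function for $AG_2$ by constructing it explicitly and then verifying that it satisfies Definition~\ref{def: BA modified}. Given the paper's stated strategy (building a difference operator $\mathcal{D}_1$ in Section~\ref{sec: another difference operator for AG2} and using iterated action of it in Section~\ref{BAAG2}), I would aim to produce $\psi$ by applying a suitable difference operator to a known BA function for a related configuration. The natural starting point is the BA function $\psi_0$ for the root system $G_2$ with multiplicities $m_{\alpha_i}=m$, $m_{\beta_i}=3m$, which exists by the classical results of \cite{CSV} for reduced root systems. The configuration $AG_2$ differs from $G_2$ only by the presence of the three short vectors $2\beta_i$ with multiplicity $1$, so the task is to upgrade the $G_2$ BA function to one satisfying the additional quasi-invariance conditions at $\langle 2\beta_i, z\rangle=0$.

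First I would set up the operator $\mathcal{D}_1$ established in Section~\ref{sec: another difference operator for AG2}, which by the results there preserves the space of quasi-invariant analytic functions and acts in the spectral variable $z$. The key idea is that $\psi_0$ already satisfies all the conditions in Definition~\ref{def: BA modified} associated to the $\alpha_i$ (with $s=1,\dots,m$) and to the $\beta_i$ (with $s=1,\dots,3m$), but not yet the extra conditions from $m_{2\beta_i}=1$, namely the single further value $s=3m+2$ at $\langle \beta_i, z\rangle = 0$, which by Lemma~\ref{lemma: alt axiomatics abstract} corresponds to condition~\eqref{eq: alt BA abstract t} with $t=1$. I would then show that applying $\mathcal{D}_1$ (or an appropriate normalisation thereof) to $\psi_0$ produces a function that acquires precisely these missing conditions while retaining all the $G_2$ conditions. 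The order of operations matters: I would first verify that $\mathcal{D}_1\psi_0$ has the correct form~$P(z,x)\exp\langle z,x\rangle$ with the right highest-order term $\prod_{\gamma\in R_+}\langle\gamma,z\rangle^{m_\gamma}$, using Lemma~\ref{lemma: h.o.t. of BA for AG2} as the target, and then check the quasi-invariance conditions term by term.

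The main obstacle, as I see it, is verifying that the additional quasi-invariance conditions at $\langle \beta_i, z\rangle=0$ hold \emph{simultaneously} with preservation of the already-established conditions, since the operator $\mathcal{D}_1$ acts globally and could in principle destroy the finer $G_2$ conditions while introducing the new ones. Controlling this interplay — ensuring that $\mathcal{D}_1$ raises the order of vanishing at $\langle\beta_i,z\rangle=0$ from $3m$ to $3m+1$ exactly as required by the $2\beta_i$ axiom, without spoiling the vanishing at the $\alpha_i$-hyperplanes or at the other $\beta_j$-hyperplanes — is where the real work lies. This is precisely why Section~\ref{sec: bispectral dual difference operators} is devoted to finding sufficient conditions for such operators to preserve the quasi-invariant space, and why $\mathcal{D}_1$ is constructed to meet those conditions. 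The coupling parameter $m_{2\beta_i}=1$ being small (only one extra value $s=3m+2$) is a genuine simplification: a single application of $\mathcal{D}_1$ should suffice, and the iterated-action formula then provides a clean closed expression for $\psi$.

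Finally, once $\mathcal{D}_1\psi_0$ is shown to satisfy both conditions in Definition~\ref{def: BA modified}, existence follows immediately, and uniqueness is already guaranteed by Proposition~\ref{prop: uniqueness of BA function for AG2}. I would close the argument by invoking Proposition~\ref{prop: uniqueness of BA function for AG2} to identify this constructed function as \emph{the} BA function for $AG_2$, and note that Theorem~\ref{thm: ring of differential operators for AG2} together with Proposition~\ref{prop: Lz2 relates to generalised H} then automatically makes it a common eigenfunction of the generalised CMS Hamiltonian~\eqref{genCMS} and its integrals, consistent with the form $\psi=\mathcal{D}\psi_0$ found in~\cite{FVr}.
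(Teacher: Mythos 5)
Your overall framework (use the operator $\mathcal{D}_1$ and the ring-preservation results of Sections~\ref{sec: bispectral dual difference operators}--\ref{sec: another difference operator for AG2}, then invoke uniqueness) points in the right direction, but the core step --- obtaining $\psi$ by applying $\mathcal{D}_1$ once to the $G_2$ Baker--Akhiezer function $\psi_0$ --- cannot work, for two reasons. First, a degree count: the polynomial part of $\psi_0$ has degree $\sum_{\gamma\in G_{2,+}}m_\gamma=12m$, while Definition~\ref{def: BA modified} requires the $AG_2$ BA function to have polynomial part of degree $\sum_{\gamma\in AG_{2,+}}m_\gamma=12m+3$ with leading term proportional to $\prod_i\IP{\beta_i}{z}^{3m+1}\prod_i\IP{\a_i}{z}^{m}$. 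Since $\mathcal{D}_1$ is a difference operator in $z$ with coefficients of degree $0$, applying it to $P(z,x)e^{\IP{z}{x}}$ can only keep or lower the degree of the polynomial prefactor (this is exactly the content of Lemma~\ref{lemma: one application of another D}); it can never raise $12m$ to $12m+3$. You appear to have conflated the operator $\mathcal{D}$ of \cite{FVr} --- a third-order \emph{differential} operator in $x$, which does multiply the leading term by a degree-$3$ polynomial in $z$ --- with the difference operator $\mathcal{D}_1$ acting in $z$. Second, Theorem~\ref{thm: another operator for AG2 preserves the ring} only asserts that $\mathcal{D}_1$ maps $\mathcal{R}^a_{AG_2}$ into itself; it gives no mechanism for importing a function from outside the ring. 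The function $\psi_0$ satisfies the $G_2$ quasi-invariance conditions but not the extra condition at $s=3m+2$ on the hyperplanes $\IP{\beta_i}{z}=0$, so $\psi_0\notin\mathcal{R}^a_{AG_2}$ and nothing guarantees that $\mathcal{D}_1\psi_0$ acquires the missing condition.

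The paper's actual construction (Theorem~\ref{thm: another construction of BA for AG2}) sidesteps both issues: one starts not from $\psi_0$ but from $Q(z)e^{\IP{z}{x}}$, where $Q(z)=\prod_{\gamma,s}(\IP{\gamma}{z}^2-s^2\gamma^2)$ has degree $2M$ with $M=12m+3$ and lies in $\mathcal{R}^a_{AG_2}$ for the trivial reason that it vanishes on all the relevant shifted hyperplanes. One then applies $(\mathcal{D}_1-\mu(x))$ exactly $M$ times; each application strictly lowers the polynomial degree by one (Lemma~\ref{lemma: one application of another D}) until every exponent reaches $m_\gamma+m_{2\gamma}$, landing precisely on the required leading term, while membership in $\mathcal{R}^a_{AG_2}$ is maintained throughout by Theorem~\ref{thm: another operator for AG2 preserves the ring}. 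Lemma~\ref{lemma: h.o.t. of BA for AG2} then guarantees the degree cannot drop below $M$ without the function vanishing, which simultaneously yields the eigenfunction property $\mathcal{D}_1\psi=\mu(x)\psi$. If you want to salvage your plan, you would need to replace the single application of $\mathcal{D}_1$ to $\psi_0$ by this iterated descent from $Q(z)e^{\IP{z}{x}}$, or else work with a genuinely degree-raising (differential, in $x$) intertwiner as in \cite{FVr} and then separately verify the $z$-axiomatics.
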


\section{Generalised 
Macdonald--Ruijsenaars operators  }\label{sec: bispectral dual difference operators}
In Section \ref{sec: another difference operator for AG2}, we will utilise a method for explicit construction of BA functions which was proposed by Chalykh \cite{Chbisp} (see also \cite{Fbisp} for further examples where this method is applied, and \cite{ChDarboux, CFV'99} for the differential case). The construction uses (generalised)  Macdonald--Ruijsenaars 
difference operators. 
The key element of the method is the preservation of a space of quasi-invariant analytic functions under the action of the Macdonald--Ruijsenaars operators. In this Section, we find sufficient conditions for a sufficiently general difference operator to preserve such a ring of quasi-invariants.

Let $W = \langle s_\a \colon \a \in R \rangle$, where $s_\alpha$ is the orthogonal reflection about the hyperplane $\IP{\alpha}{x}=0$. We assume now that the collection $R$ is $W$-invariant, that is $w(R) = R$ for all $w \in W$, and that the multiplicity map $m$
is $W$-invariant, too. Let $u^\vee = 2 u / \IP{u}{u}$ for any $u\in \Cn$ such that $\IP{u}{u}\ne 0$.

Let $\mathcal{R}^a$ be the ring of analytic functions $p(z)$ such that
\begin{equation*}
    p(z + t\a) = p(z - t\a) \textnormal{ at } \IP{\a}{z} = 0 \textnormal{ for } t \in A_\a
\end{equation*}
for all $\a \in R^r_+$, where $A_\a \subset \naturals$ specifies the axiomatics that one wants to consider. For instance, it can be $A_\a = \{1, 2, 3, \dots, m_\a, m_\a + 2, m_\a + 4, \dots, m_\a + 2m_{2\a} \}$. We assume that $A_{|w\a|} = A_\a$ for all $w \in W$, $\a \in R^r_+$, where $|w\a| = w\a$ if $w\a \in R^r_+$ and $|w\a| = -w\a$ if $w\a \in -R^r_+$. For $\alpha \in R$, we let $\sgn \alpha = 1$ if $\alpha \in R_+$ and $\sgn \alpha =-1$ if $\alpha \in -R_+$.

Let $S \subset \Cn\setminus \{0\}$ be a $W$-invariant finite collection of vectors.
Let $z \in \Cn$ and for any $\gamma \in \Cn$ let $T_\gamma$ be the shift operator that acts on functions $f(z)$ by $T_\gamma f(z) = f(z + \gamma)$.
We are interested in difference operators $D$ of the form 
\begin{equation}\label{eq: generalised RM operator, general form}
    D = \sum_{\tau \in S}  a_\tau(z) (T_\tau - 1), 
\end{equation}
where $a_\tau$ are rational functions with the following three properties:  
\begin{enumerate}[label=(\subscript{D}{{\arabic*}})]
    \item $\deg a_\tau = 0$. \label{enum: degree}
    \item $a_\tau(z)$ has a simple pole at $\IP{\a}{z} = c\a^2$ for $\a \in R^r_+$, $c \in \C$ if and only if $\lambda = s_\a(\tau) - 2c\a  \in S\cup \{0\}$ and 
    $$|c+\IP{\tau}{(2\a)^\vee}| = |\IP{\tau + c\a}{\a}/\a^2|  \in A_\a.$$
    There are no other singularities in $a_\tau$. 
    Denote the set of all such $(\a, c)$ for this $\tau$ by $S_\tau$. 
        \label{enum: singularities}
    \item $wa_\tau = a_{w\tau}$ for all $w \in W$. \label{enum: symmetry}
\end{enumerate}
The condition \ref{enum: singularities} implies that if $a_\tau$ has a singularity $\IP{\alpha}{z}=c \alpha^2$, then for any such $z$ the vectors $z + \tau$ and $z + \lambda$ are of the form $z+\tau = \widetilde z + t\a$ and $z + \lambda = \widetilde z - t\a$ for some $\widetilde z$ with $\IP{\a}{\widetilde z} = 0$ and $|t| = |c + \IP{\tau}{(2\a^\vee)}| \in A_\a$.
We note also that $\lambda \neq \tau$ since $0 \notin A_\a$.

Note that if $a_\tau$ has a singularity $\IP{\alpha}{z}=c \alpha^2$ and the corresponding $\lambda \neq 0$, then by condition~$(D_2)$ $a_\lambda(z)$ necessarily also has a singularity at $\IP{\a}{z} = c\a^2$, since $s_\a(\lambda) - 2c\a = \tau \in S$ and $\IP{\lambda + c\a}{\a} = \IP{s_\a(\tau + c\a)}{\a} = -\IP{\tau + c\a}{\a}$. 
In other words,  by condition $(D_2)$ we have $(\a, c) \in S_\tau$ if and only if $(\a, c) \in S_\lambda$ 
provided that both $\tau, \lambda \neq 0$. 
We additionally observe the following.
\begin{lemma}\label{lemma: symmetry of S_tau}
For any $w \in W$, $(\a, c) \in S_\tau$ if and only if $(|w\a|, \sgn(w\a) c) \in S_{w\tau}$.
\end{lemma}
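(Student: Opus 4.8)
The plan is to read the statement off directly from the symmetry property~\ref{enum: symmetry}, interpreting $S_\tau$ purely as the set of (affine hyperplanes carrying the) simple poles of the rational function $a_\tau$. Recall that $W$ acts on functions by $(wf)(z)=f(w^{-1}z)$, so that~\ref{enum: symmetry} reads $a_{w\tau}(z)=a_\tau(w^{-1}z)$ for all $w\in W$. By the defining equivalence in~\ref{enum: singularities}, $(\a,c)\in S_\tau$ holds precisely when $a_\tau$ has a simple pole along the affine hyperplane $H_{\a,c}=\{z\in\Cn:\IP{\a}{z}=c\a^2\}$.

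First I would record the elementary facts about the $W$-action that are needed. Since $R$, and hence $R^r$, is $W$-invariant and $R=R_+\sqcup(-R_+)$, for $\a\in R^r_+$ the vector $w\a$ again lies in $R^r$, so its representative $|w\a|\in R^r_+$ is well defined; and since $w$ is orthogonal we have $|w\a|^2=(w\a)^2=\a^2$. Composing $a_\tau$ with the linear bijection $z\mapsto w^{-1}z$ preserves the order of every pole, so $a_{w\tau}=a_\tau\circ w^{-1}$ has a simple pole exactly along the image $w(H_{\a,c})$ of each pole hyperplane of $a_\tau$.

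The key computation is then to identify $w(H_{\a,c})$. Using orthogonality of $w$,
\[
    w(H_{\a,c})=\{z:\IP{\a}{w^{-1}z}=c\a^2\}=\{z:\IP{w\a}{z}=c\a^2\}.
\]
If $w\a\in R^r_+$ then $|w\a|=w\a$ and $\sgn(w\a)=1$, so this set is $H_{|w\a|,\,\sgn(w\a)c}$; if $w\a\in -R^r_+$ then $|w\a|=-w\a$ and $\sgn(w\a)=-1$, and the same set equals $\{z:\IP{|w\a|}{z}=-c|w\a|^2\}=H_{|w\a|,\,\sgn(w\a)c}$. In either case $w(H_{\a,c})=H_{|w\a|,\,\sgn(w\a)c}$, so $a_{w\tau}$ has a simple pole along $H_{|w\a|,\,\sgn(w\a)c}$, i.e.\ $(|w\a|,\sgn(w\a)c)\in S_{w\tau}$. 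This proves the forward implication. The reverse implication follows by applying the forward one with $w^{-1}$ and $w\tau$ in place of $w$ and $\tau$: one checks that $|w^{-1}|w\a||=\a$ and $\sgn(w^{-1}|w\a|)\,\sgn(w\a)=1$ in both sign cases, so the correspondence $(\a,c,\tau)\mapsto(|w\a|,\sgn(w\a)c,w\tau)$ is inverted by the one for $w^{-1}$.

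The argument is essentially bookkeeping, and the only point requiring care is the sign convention in passing from $w\a$ to its representative $|w\a|\in R^r_+$, which flips $c$ to $\sgn(w\a)c$ exactly when $w\a$ falls into the negative half $-R^r_+$. One could alternatively verify that the two defining conditions in~\ref{enum: singularities} transform correctly, namely that $s_{|w\a|}(w\tau)-2\sgn(w\a)c\,|w\a|=w\lambda\in S\cup\{0\}$ (using $w s_\a w^{-1}=s_{w\a}=s_{|w\a|}$ and $W$-invariance of $S$) and that $|\sgn(w\a)c+\IP{w\tau}{(2|w\a|)^\vee}|=|c+\IP{\tau}{(2\a)^\vee}|\in A_\a=A_{|w\a|}$ (using orthogonality and the assumption $A_{|w\a|}=A_\a$); but the direct pole argument via~\ref{enum: symmetry} is cleaner and avoids recomputing these conditions.
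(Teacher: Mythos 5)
Your proof is correct, but it takes a genuinely different route from the paper's. The paper's own proof is exactly the ``alternative'' you sketch in your final paragraph: it checks directly that the two defining conditions in $(D_2)$ are preserved under $(\a,c,\tau)\mapsto(|w\a|,\sgn(w\a)c,w\tau)$, namely that $s_{|w\a|}(w\tau)-2\sgn(w\a)c|w\a| = w(s_\a(\tau)-2c\a)$ lies in $S\cup\{0\}$ iff $s_\a(\tau)-2c\a$ does (by $W$-invariance of $S$), and that $|\IP{w\tau+\sgn(w\a)c|w\a|}{|w\a|}/|w\a|^2| = |\IP{\tau+c\a}{\a}/\a^2|\in A_\a = A_{|w\a|}$. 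Since $(D_2)$ is an ``if and only if'', this settles membership in $S_{w\tau}$ without ever touching the functions $a_\tau$ themselves. Your primary argument instead transports the pole divisor of $a_\tau$ through the identity $a_{w\tau}=a_\tau\circ w^{-1}$ and identifies $w(H_{\a,c})=H_{|w\a|,\sgn(w\a)c}$; this is valid (your sign bookkeeping and the inversion via $w^{-1}$ for the converse are both right), but it uses condition $(D_3)$ as an extra input, whereas the paper's argument needs only $(D_2)$ together with the standing invariance assumptions on $S$ and on the sets $A_\a$. In the paper this distinction is immaterial, since the lemma is only ever applied to operators satisfying both $(D_2)$ and $(D_3)$, but the paper's version is the marginally more economical statement; your version buys a cleaner one-line mechanism (poles are carried along by the group action) at the cost of that additional hypothesis.
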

\begin{proof}
Let $\varepsilon = \sgn(w\a)$. Since $s_{|w\a|} = w s_\a w^{-1}$ and 
$\varepsilon|w\a| = w\a$, we get that
$
    s_{|w\a|}(w\tau) - 2\varepsilon c|w\a| = w(s_\a(\tau) - 2c\a)
$ belongs to $S\cup\{0\}$ if and only if $s_\a(\tau) - 2c\a \in S\cup\{0\}$, by $W$-invariance of $S$. 
Furthermore, $|\IP{w\tau + \varepsilon c|w\a|}{|w\a|}/|w\a|^2| = |\IP{\tau + c\a}{\a}/\a^2|$, and $A_\a = A_{|w\a|}$, by assumption. 
\end{proof}

More explicitly, we are looking at operators of the form 
\begin{equation*}
    D = \sum_{\tau \in S} P_\tau(z) \left( \prod_{(\a, c) \in S_\tau } \left(\IP{\a}{z} - c\a^2 \right)^{-1} \right) (T_\tau-1)
\end{equation*}
for some polynomials $P_\tau(z)$ of degree $|S_\tau|$
so that $\deg a_\tau = 0$ and such that the condition~\ref{enum: symmetry} holds. 
We want to find some sufficient conditions that would ensure that $D$ preserves the ring $\mathcal{R}^a$. 

\begin{theorem}\label{thm: sufficient conditions for ring preservation}
Suppose the operator~\eqref{eq: generalised RM operator, general form} satisfies conditions \ref{enum: singularities} and \ref{enum: symmetry}.
Then for any $\a \in R^r_+$ and arbitrary $p(z) \in \mathcal{R}^a$, we have the following two properties.
\begin{enumerate}
    \item $Dp(z)$ is non-singular at $\IP{\a}{z} = 0$. Moreover, for any $c \neq 0$, provided that for all $\tau \in S$ such that $(\a, c) \in S_\tau$ and such that $\lambda = s_\a(\tau) - 2c\a \neq 0$ we have $$\res_{\IP{\a}{z} = c \a^2}(a_\tau + a_\lambda) = 0,$$ then $Dp(z)$ is non-singular at $\IP{\a}{z} = c\a^2$. 
    \item Suppose, in addition to assumptions of part 1,  that for all $\tau \in S$ and any $t \in A_\a$, the following is satisfied whenever $|t + \IP{\tau}{(2\a)^\vee}| \notin A_\a \cup \{ 0 \}$: 
    \begin{enumerate}
        \item
        $a_\tau(z+t\a) = 0 \textnormal{ at } \IP{\a}{z} = 0$ (equivalently, $P_\tau(z)$ has a factor of $\IP{\a}{z} - t\a^2$),
    \end{enumerate}
    \vspace{-2mm}
    or
    \vspace{-2mm}
    \begin{enumerate}
        \setcounter{enumii}{1}
        \item $\lambda = s_\a(\tau) - 2t\a \in S$ and $a_{\lambda}(z+t\a) = a_\tau(z+t\a)$ at $\IP{\a}{z} = 0$.    
    \end{enumerate}
    
      Then $Dp(z + t\a) = Dp(z-t\a) \textnormal{ at } \IP{\a}{z} = 0 \textnormal{ for all } t \in A_\a.$
\end{enumerate}
\end{theorem}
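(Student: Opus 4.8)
The plan is to deduce every statement about $Dp$ from the quasi-invariance of $p$ itself, using one recurring device: whenever two shifted arguments $z+\tau$ and $z+\lambda$ occur, I re-centre at $\tilde z = z + \tfrac12(\tau+\lambda)$, so that $z+\tau = \tilde z + t\a$ and $z+\lambda = \tilde z - t\a$, verify that $\IP{\a}{\tilde z}=0$ on the relevant hyperplane, and then invoke $p(\tilde z + t\a)=p(\tilde z - t\a)$ whenever $|t|\in A_\a$.

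For part 1, I would first note that since each $a_\tau$ has only simple poles by \ref{enum: singularities} and $p(z+\tau)-p(z)$ is analytic, $Dp$ has at worst a simple pole along $\IP{\a}{z}=c\a^2$, so regularity there is equivalent to vanishing of the residue (a function on the hyperplane). I would then group the indices $\tau$ with $(\a,c)\in S_\tau$ according to \ref{enum: singularities}: they split into $s_\a$-pairs $\{\tau,\lambda\}$ with $\lambda = s_\a(\tau)-2c\a \neq 0$, together with at most the single index $\tau=-2c\a$ for which $\lambda = 0$. For a pair the two residues are related by $\res a_\lambda = -\res a_\tau$; for $c=0$ this holds automatically because $a_\lambda = s_\a a_\tau$ by \ref{enum: symmetry} (which is why no extra hypothesis is needed), while for $c\neq 0$ this is exactly the assumed condition $\res(a_\tau+a_\lambda)=0$. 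In either case the pair contributes $\res(a_\tau)\bigl(p(z+\tau)-p(z+\lambda)\bigr)$, and the re-centring device with $t=c+\IP{\tau}{(2\a)^\vee}$ (for which $|t|\in A_\a$ by \ref{enum: singularities}) gives $p(z+\tau)=p(z+\lambda)$, killing the contribution. For the lone index $\tau=-2c\a$, the same device with $\tilde z = z-c\a$ shows $p(z+\tau)=p(z)$ (using $|c|\in A_\a$), so its residue vanishes as well.

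For part 2, I would write out $Dp(z\pm t\a)$ and reindex the sum for $Dp(z-t\a)$ by $\tau\mapsto s_\a\tau$ (permissible since $S$ is $W$-invariant). Using \ref{enum: symmetry} in the form $a_{s_\a\tau}(z-t\a)=a_\tau(s_\a z + t\a)$ and the fact that $s_\a z = z$ on $\IP{\a}{z}=0$, each coefficient becomes $a_\tau(z+t\a)$, matching the corresponding coefficient in $Dp(z+t\a)$. The two terms $p(z\pm t\a)$ cancel by quasi-invariance of $p$ (as $t\in A_\a$), so the claim reduces to $\sum_\tau a_\tau(z+t\a)B_\tau = 0$ on the hyperplane, where $B_\tau = p(z+t\a+\tau)-p(z-t\a+s_\a\tau)$. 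The re-centring device turns $B_\tau$ into the symmetric difference $p(\tilde w + (t+k)\a)-p(\tilde w - (t+k)\a)$ with $k=\IP{\tau}{(2\a)^\vee}$ and $\IP{\a}{\tilde w}=0$, so $B_\tau=0$ whenever $|t+k|\in A_\a\cup\{0\}$. The remaining ``bad'' indices are exactly those with $|t+k|\notin A_\a\cup\{0\}$: under hypothesis (a) the coefficient $a_\tau(z+t\a)$ vanishes; under hypothesis (b) I would pair $\tau$ with $\lambda = s_\a\tau-2t\a$, check $\tilde w_\lambda = \tilde w_\tau$ so that $B_\lambda=-B_\tau$, and conclude that the pair contributes $\bigl(a_\tau(z+t\a)-a_\lambda(z+t\a)\bigr)B_\tau = 0$.

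I expect the main obstacle to be the bookkeeping in part 2: verifying that $\tau\mapsto s_\a\tau-2t\a$ is a fixed-point-free involution precisely on the bad indices, that $\tilde w_\lambda$ coincides with $\tilde w_\tau$ (giving $B_\lambda=-B_\tau$), and that the coefficient matching from \ref{enum: symmetry} survives restriction to the hyperplane. Everything else is the repeated, essentially mechanical application of the re-centring identity together with the defining quasi-invariance of $p\in\mathcal R^a$.
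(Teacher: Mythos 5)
Your part~1 is correct and follows essentially the same route as the paper: split the indices $\tau$ with $(\a,c)\in S_\tau$ into $s_\a$-pairs $\{\tau,\lambda\}$ plus the lone index with $\lambda=0$, obtain $\res a_\lambda=-\res a_\tau$ from \ref{enum: symmetry} when $c=0$ and from the assumed residue cancellation when $c\neq 0$, and kill $p(z+\tau)-p(z+\lambda)$ by the re-centring identity, using that $|c+\IP{\tau}{(2\a)^\vee}|\in A_\a$ by \ref{enum: singularities}.

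In part~2 there is a genuine gap. Your reduction of $Dp(z+t\a)-Dp(z-t\a)$ to $\sum_\tau a_\tau(z+t\a)B_\tau=0$ on $\IP{\a}{z}=0$ tacitly assumes that every coefficient $a_\tau(z+t\a)$ is finite there. But condition \ref{enum: singularities} permits $(\a,t)\in S_\tau$ for $t\in A_\a$, in which case $a_\tau(z+t\a)$ has a pole along $\IP{\a}{z}=0$. For such $\tau$ one has $|t+\IP{\tau}{(2\a)^\vee}|\in A_\a$ by \ref{enum: singularities}, so $B_\tau$ vanishes identically on the hyperplane and your term is an indeterminate $\infty\cdot 0$; likewise the termwise cancellation of $a_\tau(z+t\a)\bigl(p(z+t\a)-p(z-t\a)\bigr)$ cannot be asserted. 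The paper resolves this by keeping the whole expression inside a single limit $\lim_{s\to t}$, factoring $p(z+s\a+\tau)-p(z-s\a+s_\a(\tau))=(s-t)q(s)$ and $p(z+s\a)-p(z-s\a)=(s-t)r(s)$; the surviving contribution of the pair $\{\tau,\lambda\}$ with $\lambda=s_\a(\tau)-2t\a$ is then $\res_{\IP{\a}{z}=t\a^2}(a_\tau+a_\lambda)\,(q(t)-r(t))$, which vanishes precisely by the part-1 residue hypothesis applied with $c=t$ (and when $\lambda=0$ one instead uses $r(s)=q(2t-s)$). This is the only place where the clause ``in addition to the assumptions of part~1'' actually enters part~2, and your argument never invokes it --- a telltale sign that the case is missing. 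The remainder of your part~2 (the fixed-point-free involution $\tau\mapsto s_\a(\tau)-2t\a$ on the bad indices, the identity $B_\lambda=-B_\tau$, and the use of hypotheses (a) and (b)) agrees with the paper's argument.
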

\begin{proof}
1.
Let $c \in \C$.
We want to show that the residue at $\IP{\a}{z} = c\a^2$ of $Dp(z)$ is zero. Take any $\tau \in S$ such that $(\a, c) \in S_\tau$. Write $\tau + c\a = a\a + \gamma$ where $\IP{\gamma}{\a}=0$ and $a = \IP{\tau + c\a}{\a}/\a^2$. Let $\lambda = s_\a(\tau) - 2c\a$. Then $\lambda + c\a = s_\a(\tau + c\a) = -a\a + \gamma$. At $\IP{\a}{z} = c\a^2$ we thus have  
\begin{align*}
    p(z+\tau) &= p(z-c\a + c\a + \tau) = p((z - c\a + \gamma) + a\a) = p((z - c\a + \gamma) - a\a) = p(z+\lambda) 
\end{align*}
as  $\IP{z-c\a+ \gamma}{\a} = 0$ and $|a| \in A_\a$ by assumption~\ref{enum: singularities}. So, if $\lambda = 0$ then the simple pole at $\IP{\a}{z} = c\a^2$ present in $a_\tau(z)$ is cancelled by $(T_\tau -1)[p(z)] = p(z+\tau)-p(z)$. And if $\lambda \neq 0$, then the sum $$a_\tau(z)(p(z + \tau)-p(z)) + a_\lambda(z)(p(z + \lambda)-p(z))$$ contributes zero to the residue provided that the residue of $a_\tau + a_\lambda$ is zero. For $c \neq 0$ the latter is satisfied by assumption.
In the case of $c = 0$, we have $\lambda = s_\a(\tau)$, hence $a_\tau(s_{\a}(z)) = a_{\lambda}(z)$ by the symmetry \ref{enum: symmetry} of the operator, so we get
\begin{equation*}
\lim_{\IP{\a}{z} \to 0} \IP{\a}{z} a_\tau(z) = \lim_{\IP{\a}{z} \to 0} \IP{\a}{s_{\a}(z)} a_\tau(s_{\a}(z))  
= - \lim_{\IP{\a}{z} \to 0} \IP{\a}{z} a_{\lambda}(z),
\end{equation*}
that is, the residue of $a_\tau(z)$ at $\IP{\a}{z} = 0$ is minus that of $a_{\lambda}(z)$, as needed.

2. Fix $t \in A_\a$. 
By symmetry \ref{enum: symmetry} of the operator, at $\IP{\a}{z} = 0$ we have $a_\mu(z+s\a) = a_{s_\a(\mu)}(z-s\a)$ for all generic $s \in \C$ and generic $z \in \C^2$ with $\IP{\a}{z} = 0$, $\mu \in S$. Using that $s_\a(S) = S$ we can thus write $Dp(z+t\a) - Dp(z-t\a)$ as
\begin{equation}\label{eq: Dp(z+ta) - Dp(z-ta)}
    \lim_{s \to t}\sum_{\mu \in S} a_{\mu}(z+s\a)\bigg(p(z+s\a+\mu) - p(z-s\a+s_\a(\mu)) - p(z + s\a) + p(z-s\a)\bigg).
\end{equation}

Firstly, let us consider any $\tau \in S$ for which $a_{\tau}(z+t\a)$ is non-singular at  $\IP{\a}{z} = 0$ (for generic~$z$). Then the corresponding $\mu = \tau$ term in the sum~\eqref{eq: Dp(z+ta) - Dp(z-ta)} can be simplified to 
\begin{equation}\label{eq: term corresponding to mu=tau}
    a_{\tau}(z+t\a)\left(p(z+t\a+\tau) - p(z-t\a+s_\a(\tau)\right)
\end{equation} 
as $p(z) \in \mathcal{R}^a$.
Let $\tau = b\a + \delta$, where $\IP{\delta}{\a}=0$ and $b = \IP{\tau}{(2\a)^\vee}$. Then $s_\a(\tau) = -b\a + \delta$. Thus 
\begin{equation}\label{eq: p(z+ta+tau) - p(z-ta+sa(tau))}
    p(z+t\a+\tau) - p(z-t\a+s_\a(\tau)) = p(z+\delta + (t+b)\a) - p(z+\delta - (t+b)\a),
\end{equation}
where $\IP{\a}{z+\delta} = 0$. Hence if $|t + b| \in A_\a \cup \{ 0 \}$, then \eqref{eq: p(z+ta+tau) - p(z-ta+sa(tau))} equals zero, and the whole term \eqref{eq: term corresponding to mu=tau} vanishes. Else, we have by assumption two possibilities. If $a_\tau(z+t\a) = 0$ at $\IP{\a}{z} = 0$ then \eqref{eq: term corresponding to mu=tau} vanishes, and if $a_\tau(z+t\a) \ne 0$ then $\lambda = s_\a(\tau)-2t\a \in S\setminus\{\tau\}$ and  
$a_{\lambda}(z+t\a) = a_\tau(z+t\a)$.
        (Note that $|t + \IP{\tau}{(2\a)^\vee}| \notin A_\a \cup \{ 0 \}$ implies that $\lambda \neq 0, \tau$, and due to \ref{enum: singularities} also that $a_{\lambda}(z+t\a)$ and $a_{\tau}(z+t\a)$ are well-defined at $\IP{\a}{z} = 0$ for generic $z$).
In the latter case,
the term corresponding to $\mu = \lambda$ in the sum \eqref{eq: Dp(z+ta) - Dp(z-ta)} can be simplified to
\begin{equation*}
    a_{\lambda}(z+t\a)\left(p(z+t\a+\lambda) - p(z-t\a+s_\a(\lambda)\right) = a_{\tau}(z+t\a) \left( p(z-t\a + s_\a(\tau)) - p(z+t\a + \tau) \right),
\end{equation*} 
which is the negative of \eqref{eq: term corresponding to mu=tau}, hence the terms corresponding to $\mu=\tau$ and $\mu=\lambda$ in \eqref{eq: Dp(z+ta) - Dp(z-ta)} cancel out.

Secondly, let us consider any $\tau \in S$ for which $a_{\tau}(z+t\a)$ is singular at  $\IP{\a}{z} = 0$. Equivalently, $a_\tau(\widetilde{z})$ is singular at $\IP{\a}{\widetilde{z}} = t\a^2$. Hence $(\a, t) \in S_\tau$, and
$|t + b| \in A_\a$ by assumption~\ref{enum: singularities}. It follows that the expression \eqref{eq: p(z+ta+tau) - p(z-ta+sa(tau))} vanishes. 
We can restate this as $$p(z+s\a + \tau) - p(z-s\a + s_\a(\tau)) = (s-t)q(s)$$ for some analytic function $q(s)$ ($s \in \C$).
Similarly, $p(z+t\a) = p(z-t\a)$ can be restated as 
\begin{equation}\label{def of r(s)}
    p(z+s\a) - p(z-s\a) = (s-t)r(s)
\end{equation} 
for some analytic function $r(s)$.
Moreover, we also have
\begin{equation}\label{def of q(2t-s)}
    p(z+s\a + \lambda) - p(z-s\a + s_\a(\lambda)) = p(z-(2t-s)\a + s_\a(\tau)) - p(z + (2t-s)\a + \tau) = (s-t)q(2t-s).
\end{equation}
Suppose firstly that $\lambda \neq 0$. Then in the sum~\eqref{eq: Dp(z+ta) - Dp(z-ta)} the two terms corresponding to $\mu = \tau$ and $\mu = \lambda$ cancel out. Indeed, they equal
\begin{align*}
    &\lim_{s \to t} a_\tau(z+s\a)(s-t)(q(s)-r(s)) + a_\lambda(z+s\a)(s-t)(q(2t-s)-r(s)) \\
    &\quad= \res_{\IP{z}{\a} = t\a^2}(a_\tau + a_\lambda)(q(t)-r(t)) = 0,
\end{align*}
because $\res_{\IP{z}{\a} = t\a^2}(a_\tau + a_\lambda) = 0$ by assumptions of part 1 with $c = t$.
Suppose now that $\lambda = 0$, then $r(s) = q(2t-s)$ by equalities~\eqref{def of r(s)} and \eqref{def of q(2t-s)}. Therefore the term corresponding to $\mu = \tau$ in the sum \eqref{eq: Dp(z+ta) - Dp(z-ta)} equals
$
    \lim_{s \to t} a_\tau(z+s\a)(s-t)\left(q(s)-q(2t-s)\right) = 0.
$

It follows that the sum \eqref{eq: Dp(z+ta) - Dp(z-ta)} vanishes, as required.
\end{proof}
It follows that if the assumptions of both parts 1 and 2 of Theorem~\ref{thm: sufficient conditions for ring preservation} are satisfied for all $\a \in R^r_+$, then $D$ preserves the ring $\mathcal{R}^a$, that is $Dp(z) \in \mathcal{R}^a$ if $p(z) \in \mathcal{R}^a$.

Additionally, we can use the symmetry assumption \ref{enum: symmetry} to reduce the number of conditions that we have to consider in Theorem~\ref{thm: sufficient conditions for ring preservation}. 
The following statements take place. 
\begin{lemma}\label{lemma: symmetry of the residues}
Suppose that condition \ref{enum: symmetry} holds. If $a_\tau + a_\lambda$ has zero residue at $\IP{\a}{z} = c\a^2$, then $a_{w\tau} + a_{w\lambda}$ has zero residue at $\IP{|w\a|}{z} = \sgn(w\a)c |w\a|^2$ (or, equivalently, at $\IP{\a}{w^{-1}z} = c\a^2$) for any $w \in W$.
\end{lemma}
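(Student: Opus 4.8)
The plan is to reduce the statement to the covariance of a simple-pole residue under the orthogonal change of variables $z \mapsto w^{-1}z$, using only the symmetry condition~\ref{enum: symmetry}. Recall that $W$ acts on functions by $(wf)(z) = f(w^{-1}z)$, so \ref{enum: symmetry} reads $a_{w\tau}(z) = a_\tau(w^{-1}z)$ and likewise $a_{w\lambda}(z) = a_\lambda(w^{-1}z)$ (this is the same convention already used silently in the proofs of Lemma~\ref{lemma: symmetry of S_tau} and Theorem~\ref{thm: sufficient conditions for ring preservation}). Hence $(a_{w\tau} + a_{w\lambda})(z) = (a_\tau + a_\lambda)(w^{-1}z)$, and the whole task becomes understanding how the residue of a function transforms when its argument is precomposed with $w^{-1}$.

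First I would set $\varepsilon = \sgn(w\a)$, so that $|w\a| = \varepsilon\, w\a$ and, since $w$ is orthogonal, $|w\a|^2 = \a^2$. Using orthogonality again, $\IP{|w\a|}{z} = \varepsilon\IP{w\a}{z} = \varepsilon\IP{\a}{w^{-1}z}$. Writing $z' = w^{-1}z$, this identifies the hyperplane $\IP{|w\a|}{z} = \varepsilon c|w\a|^2$ with the hyperplane $\IP{\a}{z'} = c\a^2$ via the bijection $z \mapsto z' = w^{-1}z$, which is exactly the equivalence of the two descriptions asserted in the statement.

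Then I would carry out the change of variables in the residue limit. From the computation above, $\IP{|w\a|}{z} - \varepsilon c|w\a|^2 = \varepsilon(\IP{\a}{z'} - c\a^2)$, so
\begin{align*}
\res_{\IP{|w\a|}{z} = \varepsilon c|w\a|^2}(a_{w\tau} + a_{w\lambda})(z)
&= \lim_{\IP{\a}{z'} \to c\a^2} \varepsilon\,(\IP{\a}{z'} - c\a^2)\,(a_\tau + a_\lambda)(z') \\
&= \varepsilon \res_{\IP{\a}{z'} = c\a^2}(a_\tau + a_\lambda)(z'),
\end{align*}
which vanishes by hypothesis. Since $\varepsilon = \pm 1$, the residue of $a_{w\tau} + a_{w\lambda}$ is zero, as required.

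There is no genuine obstacle here: the argument is a bookkeeping exercise in the orthogonality of $w$ and the covariance of residues. The only points requiring care are tracking the sign $\varepsilon$, which is precisely why the shifted constant becomes $\sgn(w\a)\,c$ rather than $c$, and recording that $w$ maps the one pole hyperplane bijectively onto the other, so that the residue — viewed as a function along the hyperplane — is transported pointwise and its identical vanishing is preserved.
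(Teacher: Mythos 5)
Your proposal is correct and follows essentially the same route as the paper: both proofs use condition \ref{enum: symmetry} to write $a_{w\tau}(z)+a_{w\lambda}(z)=a_\tau(w^{-1}z)+a_\lambda(w^{-1}z)$ and then change variables $z'=w^{-1}z$ in the residue limit. The only difference is that you additionally track the sign $\varepsilon=\sgn(w\a)$ relating the two normalisations of the residue, whereas the paper works directly with the equivalent hyperplane description $\IP{\a}{w^{-1}z}=c\a^2$; since $\varepsilon=\pm1$ this does not affect the vanishing.
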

\begin{proof}
By the property \ref{enum: symmetry} we have $a_{w\tau}(z) + a_{w\lambda}(z) = a_{\tau}(w^{-1}z) + a_{\lambda}(w^{-1}z)$, therefore
\begin{align*}
   &\res_{\IP{\a}{w^{-1}z} = c\a^2}(a_{w\tau}(z) + a_{w\lambda}(z) ) = 
    \lim_{\IP{\a}{w^{-1}z} \to c\a^2} (\IP{\a}{w^{-1}z} - c\a^2)(a_{w\tau}(z) + a_{w\lambda}(z)) \\
    &= \lim_{\IP{\a}{w^{-1}z} \to c\a^2} (\IP{\a}{w^{-1}z} - c\a^2)(a_{\tau}(w^{-1}z) + a_{\lambda}(w^{-1}z)) = \res_{\IP{\a}{\widetilde{z}} = c\a^2}(a_{\tau}(\widetilde z) + a_{\lambda}(\widetilde z)) = 0.\qedhere
\end{align*}
\end{proof}
By combining Lemmas~\ref{lemma: symmetry of S_tau}, \ref{lemma: symmetry of the residues} and Theorem \ref{thm: sufficient conditions for ring preservation}, we obtain the following.
\begin{corollary}\label{cor: symmetry of the residues}
    Suppose the operator~\eqref{eq: generalised RM operator, general form} satisfies conditions \ref{enum: singularities} and \ref{enum: symmetry}.
    If the assumptions of part 1 of Theorem~\ref{thm: sufficient conditions for ring preservation} are satisfied for some $\a \in R^r_+$ then $Dp(z)$ is non-singular at $\IP{|w\a|}{z} = \sgn(w\a)c |w\a|^2$ for all $w \in W$.
\end{corollary}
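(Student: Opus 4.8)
The plan is to reduce the final statement, Corollary~\ref{cor: symmetry of the residues}, directly to the two supporting lemmas together with part~1 of Theorem~\ref{thm: sufficient conditions for ring preservation}. The overall strategy is transport of structure under the Weyl group: the residue hypothesis in part~1 of Theorem~\ref{thm: sufficient conditions for ring preservation} is stated for one fixed $\a \in R^r_+$ and one value of $c$, and I want to show non-singularity of $Dp(z)$ along the $w$-image of the hyperplane-family $\IP{\a}{z} = c\a^2$. Since $D$ is built from data ($S$, the sets $S_\tau$, the coefficients $a_\tau$) that are all $W$-equivariant by construction, the natural move is to change variables $z \mapsto w^{-1}z$ and quote the already-established version for $\a$ itself.

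First I would fix $\a \in R^r_+$ and $c \in \C$ for which the assumptions of part~1 of Theorem~\ref{thm: sufficient conditions for ring preservation} hold, and fix $w \in W$. I would take an arbitrary $p(z) \in \mathcal{R}^a$ and aim to show $Dp(z)$ has no pole at $\IP{|w\a|}{z} = \sgn(w\a)c|w\a|^2$, which I would first rewrite as the equivalent condition $\IP{\a}{w^{-1}z} = c\a^2$ (this equivalence uses $|w\a| = \sgn(w\a)\,w\a$ and $W$-invariance of the inner product, exactly the computation already carried out in Lemma~\ref{lemma: symmetry of the residues}). The next step is to observe that, since $\mathcal{R}^a$ is defined by $W$-invariant axiomatics (we assumed $A_{|w\a|} = A_\a$), the function $\widetilde{p}(z) := p(wz)$ again lies in $\mathcal{R}^a$; alternatively, one uses that $D$ commutes appropriately with the $W$-action. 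By the symmetry property \ref{enum: symmetry}, $a_{w\tau}(z) = a_\tau(w^{-1}z)$, so $(Dp)(z)$ evaluated near $\IP{\a}{w^{-1}z} = c\a^2$ is expressed through the same coefficient data as $(Dp)$ near $\IP{\a}{z} = c\a^2$, merely reindexed by $w$.

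The two ingredients that make this reindexing legitimate are precisely the supporting lemmas. Lemma~\ref{lemma: symmetry of S_tau} guarantees that the pole structure transports correctly: $(\a, c) \in S_\tau$ if and only if $(|w\a|, \sgn(w\a)c) \in S_{w\tau}$, so the set of $\tau' \in S$ contributing a singularity along the image hyperplane is exactly $\{w\tau : (\a,c) \in S_\tau\}$, and for each such $\tau'$ the associated partner is $\lambda' = s_{|w\a|}(\tau') - 2\sgn(w\a)c|w\a| = w\lambda$. Lemma~\ref{lemma: symmetry of the residues} then guarantees that the vanishing residue hypothesis is preserved: from $\res_{\IP{\a}{z}=c\a^2}(a_\tau + a_\lambda) = 0$ we obtain $\res(a_{w\tau} + a_{w\lambda}) = 0$ at the image hyperplane. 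Thus every hypothesis of part~1 of Theorem~\ref{thm: sufficient conditions for ring preservation} holds verbatim with $\a$ replaced by $|w\a|$ and $c$ by $\sgn(w\a)c$, and I would conclude by simply invoking that theorem (in its form for $|w\a|$) to deduce non-singularity of $Dp(z)$ there.

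I do not expect a genuine obstacle here; the statement is essentially a bookkeeping corollary. The only point requiring care is the correct tracking of the sign $\sgn(w\a)$ and the normalisation by $|w\a|^2$ when $w\a$ happens to land in $-R^r_+$, since then $|w\a| = -w\a$ flips the sign of $c$ in the image condition; this is exactly why the two lemmas are phrased with $\sgn(w\a)$ and $|w\a|$ rather than with $w\a$ directly, and keeping those conventions consistent between the $S_\tau$-statement and the residue-statement is the one place an error could creep in. Everything else is a direct substitution into the already-proven Theorem~\ref{thm: sufficient conditions for ring preservation}.
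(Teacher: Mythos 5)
Your proposal is correct and follows essentially the same route as the paper: both verify the residue hypothesis of part~1 of Theorem~\ref{thm: sufficient conditions for ring preservation} for $|w\a|$ and $\sgn(w\a)c$ by transporting the pole data via Lemma~\ref{lemma: symmetry of S_tau} (so the contributing shifts are exactly $w\tau$ with partners $w\lambda$) and the vanishing residues via Lemma~\ref{lemma: symmetry of the residues}, then invoke the theorem. The aside about $p(wz)\in\mathcal{R}^a$ is not needed, since the theorem is applied to the same $p$ with the reflected root, but it is harmless.
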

\begin{proof}
By Theorem~\ref{thm: sufficient conditions for ring preservation} part 1, it suffices to check that for any $\widetilde\tau \in S$ and $c\ne 0$ such that $(|w\a|, \sgn(w\a)c) \in S_{\widetilde\tau}$ and such that $\widetilde\lambda = s_{|w\a|}(\widetilde\tau) - 2c w\a \neq 0$ we have that the residue of $a^{}_{\widetilde\tau} + a^{}_{\widetilde\lambda}$ at $\IP{|w\a|}{z} = \sgn(w\a)c |w\a|^2$ is zero. Since $S$ is invariant, we can write $\widetilde\tau = w\tau$ for some $\tau \in S$. Lemma~\ref{lemma: symmetry of S_tau} then gives $(\a, c) \in S_\tau$. Note that $\widetilde{\lambda} = w\lambda$ for $\lambda = s_\a(\tau) - 2c\a$ (in particular, $\lambda \neq 0$ as $\widetilde\lambda \neq 0$).  By assumption, part 1 of Theorem~\ref{thm: sufficient conditions for ring preservation} holds for this $\a$ and this $c$, that is $\res_{\IP{\a}{z} = c\a^2}(a_\tau + a_\lambda) = 0$. Lemma~\ref{lemma: symmetry of the residues} now gives what we need. 
\end{proof}

\begin{lemma}\label{lemma: symmetry in the off-axiomatics}
Suppose the operator~\eqref{eq: generalised RM operator, general form} satisfies conditions \ref{enum: singularities} and \ref{enum: symmetry}.
If the assumptions of parts 1 and 2 of Theorem~\ref{thm: sufficient conditions for ring preservation} are satisfied for some $\a \in R^r_+$, then the assumptions of part 2 are also satisfied for $w\a$ for all $w \in W$ such that $w\a \in R^r_+$.
\end{lemma}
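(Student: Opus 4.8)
The plan is to transport the part~2 conditions along the Weyl element $w$ using the symmetry property~\ref{enum: symmetry} together with the orthogonality of $w$. Set $\beta = w\a$; by hypothesis $\beta \in R^r_+$, so that $\sgn(w\a) = 1$, $|w\a| = \beta$, and $A_\beta = A_{|w\a|} = A_\a$ by the standing assumption on the sets $A_\a$. Since $S$ is $W$-invariant, every element of $S$ is of the form $\tau' = w\tau$ for a unique $\tau \in S$, and as $\tau$ ranges over $S$ so does $\tau'$. Because $w$ is orthogonal, $(2\beta)^\vee = (2w\a)^\vee = w\big((2\a)^\vee\big)$, whence $\IP{\tau'}{(2\beta)^\vee} = \IP{w\tau}{w((2\a)^\vee)} = \IP{\tau}{(2\a)^\vee}$. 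Combined with $A_\beta = A_\a$, this shows that the selection condition $|t + \IP{\tau'}{(2\beta)^\vee}| \notin A_\beta \cup \{0\}$ for $(\beta, \tau')$ is equivalent to the corresponding condition $|t + \IP{\tau}{(2\a)^\vee}| \notin A_\a \cup \{0\}$ for $(\a, \tau)$.

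Fix $t \in A_\beta = A_\a$ satisfying this condition, so that by the assumed validity of part~2 for $\a$ either alternative (a) or alternative (b) holds for the pair $(\tau, t)$. The bridge to $\beta$ is the identity $a_{w\tau}(z) = a_\tau(w^{-1}z)$ coming from~\ref{enum: symmetry}, together with the equivalence $\IP{\beta}{z} = 0 \iff \IP{\a}{w^{-1}z} = 0$. Writing $\widetilde z = w^{-1}z$ and using $w^{-1}(z + t\beta) = \widetilde z + t\a$, we obtain $a_{\tau'}(z + t\beta) = a_\tau(\widetilde z + t\a)$ with $\IP{\a}{\widetilde z} = 0$. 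If alternative (a) holds for $(\tau, t)$, this gives $a_{\tau'}(z + t\beta) = 0$ at $\IP{\beta}{z} = 0$, i.e.\ alternative (a) for $(\tau', t)$.

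If instead alternative (b) holds for $(\tau, t)$, I first identify $\lambda' := s_\beta(\tau') - 2t\beta$ with $w\lambda$, where $\lambda = s_\a(\tau) - 2t\a \in S$: since $s_\beta = s_{w\a} = w s_\a w^{-1}$ we have $s_\beta(w\tau) = w s_\a(\tau)$, hence $\lambda' = w\big(s_\a(\tau) - 2t\a\big) = w\lambda \in S$ by $W$-invariance of $S$. The same substitution $\widetilde z = w^{-1}z$ yields $a_{\lambda'}(z + t\beta) = a_{w\lambda}(z + t w\a) = a_\lambda(\widetilde z + t\a)$, and the assumed equality $a_\lambda(\widetilde z + t\a) = a_\tau(\widetilde z + t\a)$ at $\IP{\a}{\widetilde z} = 0$ translates into $a_{\lambda'}(z + t\beta) = a_{\tau'}(z + t\beta)$ at $\IP{\beta}{z} = 0$, which is alternative (b) for $(\tau', t)$. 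As $\tau'$ ranges over all of $S$ and $t$ over all of $A_\beta$, the conditions of part~2 hold for $\beta = w\a$. (The part~1 residue hypotheses accompanying part~2 transfer to $w\a$ by Lemmas~\ref{lemma: symmetry of S_tau} and~\ref{lemma: symmetry of the residues}, so no separate argument is needed there.) The computation is routine; the only point demanding care is keeping track of the orthogonality of $w$ when transporting $\IP{\tau}{(2\a)^\vee}$ and the reflection identity $s_{w\a} = w s_\a w^{-1}$, and confirming that $\sgn(w\a) = 1$ so that no sign corrections intervene.
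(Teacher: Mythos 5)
Your proof is correct and follows essentially the same route as the paper: write $\widetilde\tau = w\tau$ using $W$-invariance of $S$, use $(2w\a)^\vee = w(2\a)^\vee$ to transport the selection condition, and use property~\ref{enum: symmetry} together with the substitution $\widetilde z = w^{-1}z$ to transfer alternatives (a) and (b), identifying $s_{w\a}(w\tau) - 2tw\a = w\lambda$ via $s_{w\a} = w s_\a w^{-1}$. The closing remark about the part~1 residue hypotheses is a harmless addition; the paper handles that point separately in Corollary~\ref{cor: symmetry of the residues}.
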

\begin{proof}
Note that $A_{w\a} = A_\a$. Thus we need to prove that whenever for some $t \in A_\a$ and $\widetilde\tau \in S$ we have 
$|t + \IP{\widetilde\tau}{(2w\a)^\vee}| \notin A_{\a} \cup \{0\}$, then either $a_{\widetilde\tau}(z+tw\a) = 0$ at $\IP{w\a}{z} = 0$, or else
$\widetilde\lambda = s_\a(\widetilde{\tau}) - 2tw\a \in S$ and $a_{\widetilde\lambda}(z+tw\a) = a_{\widetilde\tau}(z+tw\a)$ at $\IP{w\a}{z} = 0$. 

So suppose that $|t + \IP{\widetilde\tau}{(2w\a)^\vee}| \notin A_{\a} \cup \{0\}$.
Since $S$ is invariant, we can write $\widetilde{\tau} = w \tau$ for some $\tau \in S$. Note that then
$\widetilde{\lambda} = w(s_\a(\tau) - 2t\a) = w \lambda$.
Note also that $(2w\a)^\vee = w (2\a)^\vee$. Therefore $|t + \IP{\tau}{(2\a)^\vee}| = |t + \IP{\widetilde\tau}{(2w\a)^\vee}| \notin A_{\a} \cup \{0\}$. By assumption, part 2 of Theorem~\ref{thm: sufficient conditions for ring preservation} holds for this $\a$. Suppose first that $a_\tau(\widetilde z+ t\a) = 0$ at $\IP{\a}{\widetilde z} = 0$. By symmetry \ref{enum: symmetry} of the operator, at $\IP{w\a}{z} = 0$ (or, equivalently at $\IP{\a}{w^{-1}z} = 0$) we thus get $a_{\widetilde\tau}(z+tw\a) = a_{\tau}(w^{-1}z + t\a) = 0$, as required. Otherwise, $\lambda \in S$ so
$\widetilde{\lambda} = w \lambda \in S$ by invariance, and at $\IP{w\a}{z} = 0$ property \ref{enum: symmetry} gives $a_{\widetilde\lambda}(z+tw\a) - a_{\widetilde\tau}(z+tw\a) = a_\lambda(w^{-1}z + t\a) - a_\tau(w^{-1}z + t\a) = 0$, as needed.
\end{proof}

\begin{remark}\label{rem: symmetry in the off-axiomatics} Let $\a \in R^r_+$.
Suppose $w \in W$ satisfies $w \a = \a$. Then, for any $\tau \in S$, in part 2 of Theorem~\ref{thm: sufficient conditions for ring preservation} it suffices to check the given conditions for either $\tau$ or $w\tau$, as the other one then follows. Indeed, for any $t \in A_\a$ we have $|t + \IP{w\tau}{(2\a)^\vee}| = |t + \IP{\tau}{(2\a)^\vee}|$. Also $s_\a(w\tau) - 2t\a = w \lambda$, and at $\IP{\a}{z} = 0$ (equivalently, at $\IP{\a}{w^{-1}z} = 0$) by the symmetry \ref{enum: symmetry} we have $a_{w\tau}(z + t\a) = a_{\tau}(w^{-1}z + t\a)$ and in case (b) also $a_{w\lambda}(z + t\a) = a_{\lambda}(w^{-1}z + t\a)$.
\end{remark}

\section{Bispectral dual difference operator for $AG_2$}
\label{sec: another difference operator for AG2}

In this Section, we find a difference operator ${\mathcal D}_1$ satisfying the conditions of Theorem \ref{thm: sufficient conditions for ring preservation} for the configuration $AG_2$.
We define a difference operator acting in the variable $z \in \cplane$ of the form
\begin{equation}\label{eq: another difference operator for AG2}
    {\mathcal D}_1 =\sum_{\tau \colon \frac12 \tau \in G_2} a_\tau(z)(T_\tau -1).
\end{equation}
Let $W$ be the Weyl group of the root system $G_2$. 
For $\tau = 2\varepsilon\beta_j, \varepsilon \in \{\pm 1\}$, we define 
\begin{equation}
    \begin{aligned}\label{eq: another a2beta}
    a_{2\varepsilon\beta_j}(z) &= 
    3 \prod_{\substack{\gamma \in W\beta_1 \\ \IP{ 2\varepsilon\beta_j}{(2\gamma)^\vee} = 1}} \bigg(1-\frac{(3m+2)\gamma^2}{\IP{\gamma}{z}}\bigg)
    \bigg(1+\frac{3m\gamma^2}{\IP{\gamma}{z}+2\gamma^2}\bigg)
    \bigg(1-\frac{(3m-1)\gamma^2}{\IP{\gamma}{z}-\gamma^2}\bigg) \\
    &\times
    \prod_{\substack{\gamma \in W\a_1 \\ \IP{ 2\varepsilon\beta_j}{(2\gamma)^\vee} = 1}} 
    \bigg(1-\frac{m\gamma^2}{\IP{\gamma}{z}}\bigg)
    \times 
    \bigg(1-\frac{(3m+2)\beta_j^2}{\IP{\varepsilon\beta_j}{z}}\bigg)
    \bigg(1-\frac{3m\beta_j^2}{\IP{\varepsilon\beta_j}{z}+\beta_j^2}\bigg).
    \end{aligned}
\end{equation}
For $\tau = 2\varepsilon\a_j$, we define 
\begin{equation}
\begin{aligned}\label{eq: another a2alpha}
    a_{2\varepsilon\alpha_j}(z) &= 
    \prod_{\substack{\gamma \in W\beta_1 \\ \IP{2\varepsilon\alpha_j}{(2\gamma)^\vee} = 3}} 
    \bigg(1-\frac{(3m+2)\gamma^2}{\IP{\gamma}{z}}\bigg)
    \bigg(1-\frac{(3m+1)\gamma^2}{\IP{\gamma}{z}+\gamma^2}\bigg)
    \bigg(1-\frac{3m\gamma^2}{\IP{\gamma}{z}+2\gamma^2}\bigg) \\
    & \times 
    \prod_{\substack{\gamma \in W\a_1 \\ \IP{2\varepsilon\alpha_j}{(2\gamma)^\vee} = 1}} 
    \bigg(1-\frac{m\gamma^2}{\IP{\gamma}{z}}\bigg) \times
    \bigg(1-\frac{m\alpha_j^2}{\IP{\varepsilon\alpha_j}{z}}\bigg)
    \bigg(1-\frac{m\alpha_j^2}{\IP{\varepsilon\alpha_j}{z}+\alpha_j^2}\bigg).
\end{aligned}    
\end{equation}

The following Lemma shows that these functions $a_\tau(z)$ have $G_2$ symmetry.

\begin{lemma}\label{lemma: G2 symmetry of another D}
Let $a_\tau(z)$ be defined as in~\eqref{eq: another a2beta} and~\eqref{eq: another a2alpha}.
Then for all $w \in W$ we have $w a_\tau = a_{w\tau}$.
\end{lemma}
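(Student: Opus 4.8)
The plan is to verify the equality $wa_\tau = a_{w\tau}$ by examining how the Weyl group $W$ of type $G_2$ permutes the various linear factors appearing in the products~\eqref{eq: another a2beta} and~\eqref{eq: another a2alpha}. The key structural observation is that each $a_\tau(z)$ is a product of rational factors indexed by vectors $\gamma$ lying in the orbits $W\beta_1$ and $W\alpha_1$, subject to the selection conditions $\IP{\tau}{(2\gamma)^\vee} = 1$ (respectively $=3$), together with two extra factors attached to the short root $\beta_j$ (or long root $\alpha_j$) that is collinear with $\tfrac12\tau$. Since $W$ acts by orthogonal transformations, it preserves the inner product, and hence $w$ sends the factor associated with $(\gamma, z)$ to the factor associated with $(w\gamma, wz)$; so the whole computation reduces to checking that the \emph{indexing sets} of factors transform correctly under $w$.

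The first step is to record the invariance of the selection conditions: because $w$ is orthogonal, $\IP{w\tau}{(2w\gamma)^\vee} = \IP{\tau}{(2\gamma)^\vee}$, and $W$ permutes each orbit $W\beta_1$ and $W\alpha_1$ among itself. Therefore the map $\gamma \mapsto w\gamma$ is a bijection from the index set defining $a_\tau$ onto the index set defining $a_{w\tau}$. For the bulk products this immediately gives, after the substitution $z \mapsto w^{-1}z$ inside $wa_\tau$ (recall $wa_\tau(z) = a_\tau(w^{-1}z)$), a factor-by-factor match once one uses $\IP{w^{-1}z}{\gamma} = \IP{z}{w\gamma}$ and $\gamma^2 = (w\gamma)^2$. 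The second step is to treat the two special trailing factors: here one must track that the distinguished root $\beta_j$ (resp.\ $\alpha_j$) with $\tfrac12\tau = \varepsilon\beta_j$ is sent by $w$ to the distinguished root of $w\tau$, and that the sign $\varepsilon$ is carried correctly, i.e.\ $w(\varepsilon\beta_j) = \varepsilon'\beta_{j'}$ matches the description of $a_{w\tau}$; this is where one uses that $W$ acts transitively on the short (resp.\ long) roots and that the product $\tau = 2\varepsilon\beta_j$ determines the pair $(\varepsilon,\beta_j)$ up to overall sign.

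I expect the main obstacle to be purely bookkeeping rather than conceptual: one must confirm that the selection conditions pick out exactly those $\gamma$ whose image under $w$ again satisfies the condition for $w\tau$, and crucially that no factor is lost or gained — that is, the bijection $\gamma \mapsto w\gamma$ between index sets is genuinely well-defined and surjective. A subtle point is to verify that the two trailing factors (attached to $\pm\beta_j$, resp.\ $\pm\alpha_j$) are not double-counted with, or accidentally merged into, the bulk orbit products; one should check that $\beta_j$ itself does or does not appear in the $\IP{\tau}{(2\gamma)^\vee}=1$ condition to avoid a discrepancy in the exponents of repeated factors. Because the verification is uniform across the factor types, it suffices to treat one representative product and one trailing factor in detail and invoke symmetry for the rest.

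The cleanest way to organise the argument is to write $wa_\tau(z) = a_\tau(w^{-1}z)$ and then, inside $a_\tau(w^{-1}z)$, reindex each product via $\gamma \mapsto w\gamma =: \gamma'$, observing that $\gamma'$ ranges over precisely the index set of $a_{w\tau}$ and that each factor evaluated at $w^{-1}z$ becomes the corresponding factor of $a_{w\tau}$ evaluated at $z$. After doing this for the $W\beta_1$ product, the $W\alpha_1$ product, and the two trailing factors separately, collecting the results yields $a_\tau(w^{-1}z) = a_{w\tau}(z)$, which is the claim. Since $W$ is generated by the reflections $s_{\beta_1}, s_{\alpha_1}$ (or any two simple reflections), it would suffice to verify the identity for these generators, though the orbit-bijection argument handles all $w \in W$ at once.
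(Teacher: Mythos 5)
Your argument is correct and is essentially the paper's own proof: the paper likewise observes that $W$ preserves the orbits $W\alpha_1$, $W\beta_1$, the inner products, and the multiplicities, and concludes the factor-by-factor match from there, merely stating in two sentences what you spell out via the explicit reindexing $\gamma\mapsto w\gamma$. Your extra care about the trailing factors and possible double-counting is sound but not a departure from the paper's route.
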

\begin{proof}
For any $w \in W$ we have $w(W\a_1) = W\a_1$, $w(W\beta_1) = W\beta_1$, and $\IP{w\tau}{w\gamma} = \IP{\tau}{\gamma}$ for all $\gamma, \tau \in \C^2$. The multiplicities are invariant, too. The statement follows.
\end{proof}

Define the ring $\mathcal{R}^a_{AG_2}$ of analytic functions $p(z)$ satisfying conditions 
\begin{equation}\label{eq: repeated axiomatics for R for AG2}
    \begin{aligned} 
        &p(z + s\a_j) = p(z - s\a_j) \textnormal{ at } \IP{\a_j}{z} = 0, s=1, 2, \dots, m,  \\
        &p(z + s\beta_j) = p(z - s\beta_j) \textnormal{ at } \IP{\beta_j}{z} = 0, s=1, 2, \dots, 3m, 3m + 2
    \end{aligned}
\end{equation}
for all $j=1,2,3$.

\begin{theorem}\label{thm: another operator for AG2 preserves the ring}
The operator \eqref{eq: another difference operator for AG2} preserves the ring $\mathcal{R}^a_{AG_2}$.
\end{theorem}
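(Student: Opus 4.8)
The plan is to verify that $\mathcal{D}_1$ satisfies the hypotheses of Theorem~\ref{thm: sufficient conditions for ring preservation} with $R = AG_2$ and with the axiomatics $A_\a$ encoded in~\eqref{eq: repeated axiomatics for R for AG2}, and then to invoke the observation following that theorem. Here $S = \{\tau\colon \tfrac12\tau \in G_2\} = W(2\beta_1) \sqcup W(2\alpha_1)$, the reduced positive subsystem is $R^r_+ = \{\alpha_j, \beta_j\colon j=1,2,3\}$ with $A_{\alpha_j} = \{1, \dots, m\}$ and $A_{\beta_j} = \{1, \dots, 3m, 3m+2\}$, and $W$ is the Weyl group of $G_2$. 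Condition~\ref{enum: symmetry} is precisely the content of Lemma~\ref{lemma: G2 symmetry of another D}, so it remains to check \ref{enum: singularities} together with the residue and off-axiomatics hypotheses of parts~1 and~2.

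First I would verify~\ref{enum: singularities}. Reading off the denominators in~\eqref{eq: another a2beta} and~\eqref{eq: another a2alpha}, each factor contributes a simple pole on a hyperplane $\IP{\a}{z} = c\a^2$ with $\a \in R^r_+$ and a specific value of $c$. For each such pole I would form $\lambda = s_\a(\tau) - 2c\a$ and the number $|c + \IP{\tau}{(2\a)^\vee}|$, and confirm both that $\lambda \in S \cup \{0\}$ and that $|c + \IP{\tau}{(2\a)^\vee}| \in A_\a$, and conversely that every admissible pole is actually present. By Lemma~\ref{lemma: symmetry of S_tau} it suffices to do this for the two orbit representatives $\tau = 2\beta_1$ and $\tau = 2\alpha_1$, the remaining $\tau$ being handled by $W$-equivariance. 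Since~\ref{enum: singularities} then guarantees that the only possible singular loci of $\mathcal{D}_1 p$ are the hyperplanes $\IP{\a}{z} = c\a^2$, it is enough afterwards to control these.

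Next I would check the residue hypothesis of part~1. The case $c = 0$ is automatic from the symmetry~\ref{enum: symmetry}, exactly as established inside the proof of Theorem~\ref{thm: sufficient conditions for ring preservation}. For each $c \neq 0$ and each $\tau$ with $(\a, c) \in S_\tau$ and $\lambda = s_\a(\tau) - 2c\a \neq 0$, I would substitute $\IP{\a}{z} = c\a^2$ into the remaining regular factors of $a_\tau$ and $a_\lambda$ and verify that $\res_{\IP{\a}{z} = c\a^2}(a_\tau + a_\lambda) = 0$; by Corollary~\ref{cor: symmetry of the residues} it is enough to carry this out for $\a = \beta_1$ and $\a = \alpha_1$, non-singularity at the whole orbit following automatically. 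This is the step where the specific numerical coefficients $3m+2$, $3m-1$, $3m$, $m$ appearing in~\eqref{eq: another a2beta} and~\eqref{eq: another a2alpha} must conspire, and I expect it to be the main obstacle: the cancellation is not dictated by symmetry but is a genuine arithmetic coincidence built into the construction, so each pair of opposite residues has to be evaluated and matched by hand.

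Finally I would verify the off-axiomatics hypothesis of part~2 for $\a \in \{\beta_1, \alpha_1\}$, the general case following from Lemma~\ref{lemma: symmetry in the off-axiomatics}, with the stabiliser elements $s_{\alpha_1}\beta_1 = \beta_1$ and $s_{\beta_1}\alpha_1 = \alpha_1$ further halving the work via Remark~\ref{rem: symmetry in the off-axiomatics}. For each $\tau \in S$ and each $t \in A_\a$ with $|t + \IP{\tau}{(2\a)^\vee}| \notin A_\a \cup \{0\}$, I would check that either $P_\tau(z)$ carries the factor $\IP{\a}{z} - t\a^2$, so that $a_\tau(z + t\a)$ vanishes on $\IP{\a}{z} = 0$ (alternative~(a)), or else $\lambda = s_\a(\tau) - 2t\a \in S$ and $a_\lambda(z + t\a) = a_\tau(z + t\a)$ on $\IP{\a}{z} = 0$ (alternative~(b)). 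Once~\ref{enum: singularities}, \ref{enum: symmetry}, and the hypotheses of parts~1 and~2 are confirmed for every $\a \in R^r_+$, the observation following Theorem~\ref{thm: sufficient conditions for ring preservation} shows that $\mathcal{D}_1$ maps $\mathcal{R}^a_{AG_2}$ into itself, which is the claim.
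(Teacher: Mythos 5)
Your proposal follows essentially the same route as the paper's proof: verify conditions \ref{enum: singularities}--\ref{enum: symmetry}, reduce the residue checks to the orbit representatives $\beta_1$, $\alpha_1$ via Corollary~\ref{cor: symmetry of the residues}, and reduce the part-2 off-axiomatics checks via Lemma~\ref{lemma: symmetry in the off-axiomatics} and Remark~\ref{rem: symmetry in the off-axiomatics}. The only thing separating your outline from the paper's argument is that the paper actually carries out the explicit residue evaluations (e.g.\ $\res_{B=4}(a_{-2\beta_2}) = -\res_{B=4}(a_{-2\alpha_3})$ and the analogous identity at $B=2$, with the $A=6$ pole cancelling because $\lambda = 0$ there) and the factor-by-factor verification of alternative~(a) for each $\tau$ and each boundary value of $t$, so to complete the proof you would still need to perform those computations.
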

\begin{proof}
One can check that this operator has the property ($D_2$) where $S = 2G_2$.
Let $p(z) \in \mathcal{R}^a_{AG_2}$ be arbitrary.
Without loss of generality, we put $\omega = \sqrt{2}$. 
We introduce new coordinates $(A, B)$ on $\C^2$ given by $A = \IP{\a_1}{z}$ and $B = \IP{\beta_1}{z}$.

If $B = 4$ (equivalently, $\IP{\beta_1}{z} = 2\beta_1^2$), then 
$\IP{\beta_2}{z} = 2 + \frac12 A$, $\IP{\beta_3}{z} = -2 + \frac12 A$, $\IP{\a_2}{z} = -6 + \frac12 A$ and $\IP{\a_3}{z} = 6 + \frac12 A$.
The only terms singular at $B = 4$ are $a_{-2\beta_2}$, $a_{-2\a_3}$, $a_{2\beta_3}$ and $a_{2\a_2}$. Note that $s_{\beta_1}(-2\beta_2) -4\beta_1 = -2\a_3$, and we compute that $\res_{B = 4}(a_{-2\beta_2}) = - \res_{B = 4}(a_{-2\a_3})$ equals
\begin{align*}
   -\tfrac{3m(3m+2) (3m+4)  (A-12-12 m)(A + 6m) (A-4+12 m) (A+12m) (A+4+12 m) (A + 12 + 12m)^2}{(A -12) (A-4) A^3 (A + 4) (A + 12)}.
\end{align*}
As $s_{\a_1}(-2\beta_2) = 2\beta_3$ and $s_{\a_1}(-2\a_3) = 2\a_2$, by Lemma~\ref{lemma: symmetry of the residues} with $w =s_{\a_1}$ we get that $a_{2\beta_3} + a_{2\a_2}$ has zero residue at $B = 4$, too.
By Theorem~\ref{thm: sufficient conditions for ring preservation} part 1, there is thus no singularity at $B = 4$ in $ {\mathcal D}_1 p(z)$.

If $B = 2$ (equivalently, $\IP{\beta_1}{z} = \beta_1^2$), then 
$\IP{\beta_2}{z} = 1 + \frac12 A$, $\IP{\beta_3}{z} = -1 + \frac12 A$, $\IP{\a_2}{z} = -3 + \frac12 A$ and $\IP{\a_3}{z} = 3 + \frac12 A$.
The only $\tau \in 2G_2$ for which $a_\tau$ is singular at $B = 2$ and for which the corresponding $\lambda = s_{\beta_1}(\tau) - 2\beta_1 \neq 0$ are $\tau = 2\beta_2, 2\a_2, -2\beta_3, -2\a_3$. 
Note that $s_{\beta_1}(2\beta_2) -2\beta_1 = 2\a_2$, and we compute that $\res_{B = 2} (a_{2\beta_2}) = - \res_{B = 2} (a_{2\a_2})$ equals
\begin{align*}
  \tfrac{6 (m+1) (3m-1) (3m+1) (A-10-12 m) (A-6-12 m) (A-2-12 m) (A+6-12m)^2 (A-6 m) (A+6+12 m)}{(A - 6)(A-2)A(A+2)(A+6)^3}.
\end{align*}
As $s_{\a_1}(2\beta_2) = -2\beta_3$ and $s_{\a_1}(2\a_2) = -2\a_3$, by Lemma~\ref{lemma: symmetry of the residues} we get that $a_{-2\beta_3} + a_{-2\a_3}$ has zero residue at $B = 2$, too.
By Theorem~\ref{thm: sufficient conditions for ring preservation} part 1, there is thus no singularity at $B = 2$ in ${\mathcal D}_1 p(z)$, nor at $B = 0$.

It follows from the above analysis and from the form of the coefficient functions \eqref{eq: another a2beta} and \eqref{eq: another a2alpha} that there are no singularities in ${\mathcal D}_1 p(z)$ at $B = c$ for all $c \geq 0$. By Corollary~\ref{cor: symmetry of the residues}, there is also no singularity in ${\mathcal D}_1 p(z)$ at $\IP{\beta_i}{z} = c$ for all $i=1,2,3$ and all $c \in \C$.

The only singularity at $A = \mathrm{const} > 0$ present in the coefficients $a_\tau$ for some $\tau$ is at $A = 6$ (equivalently, $\IP{\a_1}{z} = \a_1^2$) when  $\tau = -2\a_1$. This singularity cancels in ${\mathcal D}_1 p(z)$ by Theorem~\ref{thm: sufficient conditions for ring preservation} part~1, since the corresponding $\lambda = s_{\a_1}(-2\a_1) - 2\a_1 = 0$.
By Corollary~\ref{cor: symmetry of the residues}, there is also no singularity in ${\mathcal D}_1 p(z)$ at $\IP{\a_i}{z} = c$ for all $i=1,2,3$ and for all $c \in \reals$. This completes the proof that ${\mathcal D}_1 p(z)$ is analytic. 

Let us now show ${\mathcal D}_1 p(z)$ satisfies the axiomatics of $\mathcal{R}^a_{AG_2}$.
We have $A_{\beta_i} = \{1, 2, 3, \dots, 3m, 3m+2\}$ and $A_{\a_i} = \{ 1,2, \dots, m\}$ ($i=1,2,3$). Let us show firstly that 
${\mathcal D}_1 p(z + t\beta_1) = {\mathcal D}_1 p(z-t\beta_1)$ at $\IP{\beta_1}{z} = 0$ for all $t \in A_{\beta_1}$. To do so we will check condition 2 in Theorem~\ref{thm: sufficient conditions for ring preservation} for all $\tau \in 2G_2$. 

Note that $(2\beta_1)^\vee = \frac12 \beta_1$. 
Let $\tau = 2\beta_1$.
Then $|t + \IP{\tau}{(2\beta_1)^\vee}| = t + 2$ which does not belong to $A_{\beta_1} \cup \{0\}$ if and only if $t = 3m-1$ or $t = 3m+2$. But $$a_{2\beta_1}(z+(3m+2)\beta_1) = a_{2\beta_1}(z+(3m-1)\beta_1) = 0 \textnormal{ at } \IP{\beta_1}{z} = 0$$ because $a_{2\beta_1}(z)$ contains the factors $(1-\frac{(3m+2)\beta_1^2}{\IP{\beta_1}{z}})(1-\frac{3m\beta_1^2}{\IP{\beta_1}{z} + \beta_1^2})$. 

Let now $\tau = -2\beta_1$. Then $|t + \IP{\tau}{(2\beta_1)^\vee}| = |t - 2| \in A_{\beta_1} \cup \{0\}$ for all $t\in A_{\beta_1}$, as needed.

Let now $\tau = 2\beta_2$. Then $|t + \IP{\tau}{(2\beta_1)^\vee}| = t + 1$ which does not belong to $A_{\beta_1} \cup \{0\}$ if and only if $t = 3m$ or $t = 3m+2$.
But $$a_{2\beta_2}(z+(3m+2)\beta_1) = a_{2\beta_2}(z+3m\beta_1) = 0 \textnormal{ at } \IP{\beta_1}{z} = 0$$ because $a_{2\beta_2}(z)$ contains the factors $(1-\frac{(3m+2)\beta_1^2}{\IP{\beta_1}{z}})(1-\frac{(3m-1)\beta_1^2}{\IP{\beta_1}{z} - \beta_1^2})$.  

Let now $\tau = -2\beta_2$. Then $|t + \IP{\tau}{(2\beta_1)^\vee}| = t - 1$ which does not belong to $A_{\beta_1} \cup \{0\}$ if and only if $t = 3m+2$. But
$a_{-2\beta_2}(z+(3m+2)\beta_1) = 0$ at $\IP{\beta_1}{z} = 0$ because $a_{-2\beta_2}$ contains the factor $(1+\frac{3m\beta_1^2}{-\IP{\beta_1}{z}+2\beta_1^2})$. 
Since $s_{\a_1}(\beta_1) = \beta_1$, by Remark~\ref{rem: symmetry in the off-axiomatics} there is nothing to check for $\tau = \pm2\beta_3 = s_{\a_1}(\mp2\beta_2)$. 

For $\tau = \pm 2\a_1$, we get $|t + \IP{\tau}{(2\beta_1)^\vee}| = t \in A_{\beta_1}$, as needed.
Similarly for $\tau = 2\a_2$,  $|t + \IP{\tau}{(2\beta_1)^\vee}|$ $= |t - 3| \in A_{\beta_1} \cup \{0\}$ for all $t \in A_{\beta_1}$, as needed.

Finally, let $\tau = -2\a_2$. Then  $|t + \IP{\tau}{(2\beta_1)^\vee}| = t + 3 \notin A_{\beta_1} \cup \{0\}$ if and only if $t = 3m+2$, $t = 3m$ or $t = 3m-2$, but $a_{-2\a_2}(z+t\beta_1) = 0$ at $\IP{\beta_1}{z} = 0$ for those $t$ because $a_{-2\a_2}$ contains the factors 
$(1- \frac{(3m+2)\beta_1^2}{\IP{\beta_1}{z}})(1- \frac{(3m+1)\beta_1^2}{\IP{\beta_1}{z}+\beta_1^2})(1- \frac{3m\beta_1^2}{\IP{\beta_1}{z}+2\beta_1^2})$.
By Remark~\ref{rem: symmetry in the off-axiomatics}, there is nothing to check for $\tau = \pm2\a_3 = s_{\a_1}(\mp2\a_2)$.

Let us show next that 
${\mathcal D}_1 p(z + t\a_1) = {\mathcal D}_1 p(z-t\a_1)$ at $\IP{\a_1}{z} = 0$ for all $t \in A_{\a_1}$. 
By Theorem~\ref{thm: sufficient conditions for ring preservation} it is sufficient to check its condition 2 for all $\tau \in 2G_2$. 

Let $\tau = \pm 2\beta_1$. Then $|t + \IP{\tau}{(2\a_1)^\vee}| = t \in A_{\a_1}$, as needed.

Let now $\tau = 2\beta_2$. Note that $(2\a_1)^\vee = \frac16 \a_1$.  Then $|t + \IP{\tau}{(2\a_1)^\vee}| = t+1 \notin A_{\a_1} \cup \{0\}$ if and only if $t = m$. But $a_{2\beta_2}(z+m\a_1) = 0$ at $\IP{\a_1}{z} = 0$ because $a_{2\beta_2}$ contains the factor $(1-\frac{m\a_1^2}{\IP{\a_1}{z}})$.

Let now $\tau = -2\beta_2$. Then $|t + \IP{\tau}{(2\a_1)^\vee}| = t-1 \in A_{\a_1} \cup \{0\}$ for all $t \in A_{\a_1}$, as needed.
Since $s_{\beta_1}(\a_1) = \a_1$, by Remark~\ref{rem: symmetry in the off-axiomatics} there is nothing to check for $\tau = \pm 2\beta_3 = s_{\beta_1}(\pm 2\beta_2)$.

Let now $\tau = 2\a_1$. Then $|t + \IP{\tau}{(2\a_1)^\vee}| = t+2 \notin A_{\a_1} \cup \{0\}$ if and only if $t = m$ or $t=m-1$. But $a_{2\a_1}(z+t\a_1) = 0$ at $\IP{\a_1}{z} = 0$ for those $t$ because $a_{2\a_1}$ contains the factors $(1-\frac{m\a_1^2}{\IP{\a_1}{z}})(1-\frac{m\a_1^2}{\IP{\a_1}{z} + \a_1^2})$.

Let now $\tau = -2\a_1$. Then $|t + \IP{\tau}{(2\a_1)^\vee}| = |t-2| \in A_{\a_1} \cup \{0\}$ for all $t \in A_{\a_1}$, as needed.

Let now $\tau = 2\a_2$. Then $|t + \IP{\tau}{(2\a_1)^\vee}| = t+1 \notin A_{\a_1} \cup \{0\}$ if and only if $t = m$. But $a_{2\a_2}(z+m\a_1) = 0$ at $\IP{\a_1}{z} = 0$ because $a_{2\a_2}$ contains the factor $(1-\frac{m\a_1^2}{\IP{\a_1}{z}})$.

Finally, for $\tau = -2\a_2$ we get $|t + \IP{\tau}{(2\a_1)^\vee}| = t-1 \in A_{\a_1} \cup \{0\}$ for all $t \in A_{\a_1}$. And by Remark~\ref{rem: symmetry in the off-axiomatics} there is nothing to check for $\tau = \pm 2\a_3 = s_{\beta_1}(\pm 2\a_2)$.

Since all the vectors $\a_i, \beta_i$ are in the $W$-orbit of $\a_1 \cup \beta_1$ the statement follows by Lemma~\ref{lemma: symmetry in the off-axiomatics}.
\end{proof}

Let us now look at the expansion  of the operator~\eqref{eq: another difference operator for AG2}  as $\omega \to 0$. 
It produces the 
rational CMS operator for the root system of type $G_2$ with multiplicity $m$ for the long roots and multiplicity $3m+1$ for the short roots, as the next Proposition shows.
Let $\widetilde{\beta}_j = \omega^{-1}\beta_j$ and $\widetilde{\a}_j = \omega^{-1}\a_j$ $(j=1,2,3)$ with the same multiplicities as $\beta_j$ and $\a_j$, respectively.
\begin{proposition}
We have
\begin{equation*}
   \lim_{\omega \to 0} \frac{{\mathcal D}_1}{72\omega^{2}} =
     \Delta - \sum_{\gamma \in \{ \widetilde{\beta_i}, \widetilde{\a}_i\colon i=1,2,3\}} \frac{2(m_\gamma + m_{2\gamma})}{\IP{\gamma}{z}} \partial_{\gamma},
\end{equation*}
where $\Delta = \partial^2_{z_1}+\partial^2_{z_2}$.
\end{proposition}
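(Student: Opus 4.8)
The plan is to compute the limit by Taylor-expanding both the shift operators and the coefficient functions in powers of $\omega$. Every shift vector in \eqref{eq: another difference operator for AG2} has the form $\tau = 2\gamma$ with $\gamma \in G_2$, and each such $\gamma = \omega\widetilde\gamma$ tends to $0$ as $\omega\to 0$. Writing $\partial_{\widetilde\gamma} = \IP{\widetilde\gamma}{\partial_z}$, the expansion of a shift gives $T_\tau - 1 = 2\omega\,\partial_{\widetilde\gamma} + 2\omega^2\,\partial_{\widetilde\gamma}^2 + O(\omega^3)$. For the coefficients I would note that in both \eqref{eq: another a2beta} and \eqref{eq: another a2alpha} every rational factor has the shape $1 - \frac{c\,\gamma'^2}{\IP{\gamma'}{z} + d\,\gamma'^2}$ with $\gamma'^2 = O(\omega^2)$ and $\IP{\gamma'}{z} = O(\omega)$, so it equals $1 - \omega\,\frac{c\,\widetilde{\gamma'}^2}{\IP{\widetilde{\gamma'}}{z}} + O(\omega^2)$. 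Hence $a_\tau(z) = a_\tau^{(0)} + \omega\, a_\tau^{(1)}(z) + O(\omega^2)$, where $a_\tau^{(0)}$ is the constant prefactor ($1$ for $\tau = 2\varepsilon\alpha_j$ and $3$ for $\tau = 2\varepsilon\beta_j$, since all rational factors tend to $1$) and $a_\tau^{(1)}$ is an explicit sum of simple fractions read off from the poles of the individual factors.

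Collecting powers of $\omega$ in $\mathcal D_1 = \sum_{\gamma\in G_2} a_{2\gamma}(z)(T_{2\gamma}-1)$, the $O(1)$ term is absent, and the $O(\omega)$ term $2\omega\sum_{\gamma} a_{2\gamma}^{(0)}\partial_{\widetilde\gamma}$ vanishes because $a_{2\gamma}^{(0)} = a_{-2\gamma}^{(0)}$ while $\partial_{\widetilde{-\gamma}} = -\partial_{\widetilde\gamma}$; this already confirms that the limit exists. The $O(\omega^2)$ term, which after division by $72\omega^2$ produces the sought operator, equals $\sum_{\gamma}\big(2a_{2\gamma}^{(0)}\partial_{\widetilde\gamma}^2 + 2a_{2\gamma}^{(1)}(z)\,\partial_{\widetilde\gamma}\big)$ and splits into a second-order and a first-order part.

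For the second-order part $\sum_{\gamma} 2a_{2\gamma}^{(0)}\partial_{\widetilde\gamma}^2$, I would observe that it is a constant-coefficient operator invariant under the Weyl group $W$ of $G_2$ (using Lemma~\ref{lemma: G2 symmetry of another D} together with $a_{2\gamma}^{(0)} = a_{-2\gamma}^{(0)}$), hence proportional to $\Delta$; the proportionality constant is fixed by evaluating the symmetric tensor $\sum_{\gamma\in G_2} a_{2\gamma}^{(0)}\,\widetilde\gamma\otimes\widetilde\gamma$, which is a scalar multiple of the identity. This accounts for the $\Delta$ term.

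The first-order part is the crux and the main obstacle. Fixing one hyperplane, say $\IP{\widetilde\alpha_1}{z}=0$, I would gather every contribution to the pole $\frac{1}{\IP{\widetilde\alpha_1}{z}}$ arising in $\sum_{\gamma} 2a_{2\gamma}^{(1)}\partial_{\widetilde\gamma}$. Such a pole occurs in $a_{2\gamma}^{(1)}$ exactly when $\pm\alpha_1$ enters the products in \eqref{eq: another a2beta}, \eqref{eq: another a2alpha} (as dictated by condition $(D_2)$, i.e.\ by the constraints $\IP{2\gamma}{(2\alpha_1)^\vee} = \pm 1$) or appears in the explicit factor; thus many distinct $\gamma$ contribute, each carrying its own directional derivative $\partial_{\widetilde\gamma}$. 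The key point is that after summing over these $\gamma$ the various derivatives recombine — they do \emph{not} simply cancel — into a single multiple of $\partial_{\widetilde\alpha_1}$, the normal direction of the hyperplane, with coefficient matching $-2(m_{\alpha_1}+m_{2\alpha_1})/\IP{\widetilde\alpha_1}{z}$. Concretely, writing each $\partial_{\widetilde\gamma}$ in the $(\partial_{z_1},\partial_{z_2})$ basis, the components tangent to the hyperplane must cancel while the normal components add up correctly; this is a finite explicit check using the specific numerator coefficients $3m+2,\ 3m,\ 3m-1,\ m$. By the $W$-equivariance of Lemma~\ref{lemma: G2 symmetry of another D} it suffices to carry this out for one long-root and one short-root hyperplane, the remaining cases following by invariance, after which all first-order contributions assemble into the stated sum over $\gamma \in \{\widetilde\beta_i,\widetilde\alpha_i\}$.
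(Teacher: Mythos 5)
Your proposal is correct and follows essentially the same route as the paper: expand $T_\tau-1$ and each coefficient $a_\tau$ in powers of $\omega$, observe that the orders $\omega^0$ and $\omega^1$ vanish by the symmetry $a_{2\gamma}^{(0)}=a_{-2\gamma}^{(0)}$, identify the constant-coefficient second-order part of the $\omega^2$ term with a multiple of $\Delta$, and assemble the first-order part by collecting all simple poles along a single hyperplane and invoking the $W$-equivariance of Lemma~\ref{lemma: G2 symmetry of another D} to handle the remaining ones. The only difference is cosmetic: the paper fixes the $\Delta$-coefficient and the $\langle\widetilde{\beta}_1,z\rangle^{-1}$-coefficient by writing out the explicit sums of directional derivatives, whereas you obtain the former from $W$-invariance of the quadratic form $\sum_\gamma a_{2\gamma}^{(0)}\,\widetilde\gamma\otimes\widetilde\gamma$ and leave both numerical evaluations as the indicated finite checks.
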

\begin{proof}
We have that $T_{\pm 2\beta_j} - 1 = \pm\omega \partial_{2 \widetilde{\beta}_j} + \frac{1}{2}\omega^2 \partial_{2\widetilde{\beta}_j}^2 + \dots$, and similarly for the other shifts. The terms at $\omega$ in the expansion $\omega \to 0$ of the operator ${\mathcal D}_1$ vanish. The terms that are second order in derivatives in the coefficient at $\omega^2$ in the expansion $\omega \to 0$ of the operator ${\mathcal D}_1$ are
\begin{align*}
    &3\sum_{j=1}^3 \partial_{2\widetilde{\beta}_j}^2
    + \sum_{j=1}^3  \partial_{2\widetilde{\a}_j}^2 =  12 \sum_{j=1}^3 \partial_{\widetilde{\beta}_j}^2 + 4\sum_{j=1}^3  \partial_{\widetilde{\a}_j}^2
    = 72 \Delta.
\end{align*}
Let us now consider the terms that are first order in derivatives in the coefficient at $\omega^2$. 
It is easy to see that such terms containing 
$\IP{\widetilde{\beta}_1}{z}^{-1}$
are
\begin{align*}
    &-12(3m+1)\left(2\partial_{2\widetilde{\beta}_1} + \partial_{2\widetilde{\beta}_2} - \partial_{2\widetilde{\beta}_3}
    +\partial_{2\widetilde{\a}_3} - \partial_{2\widetilde{\a}_2}\right) 
    = -144(m_{\beta_1} + m_{2\beta_1}) \partial_{\widetilde{\beta}_1}.
\end{align*}
Altogether, the term at $\omega^2$ in the expansion of the operator ${\mathcal D}_1$ is as required.
\end{proof}

\section{Construction of the \BA for $AG_2$}
\label{BAAG2}

In this Section, we employ the method from \cite{Chbisp} to give a construction of the BA function for the configuration $AG_2$. The BA function will be an eigenfunction for the difference operator from Section~\ref{sec: another difference operator for AG2}, which establishes bispectrality of the CMS $AG_2$ Hamiltonian.

Consider the operator ${\mathcal D}_1$ given by~\eqref{eq: another difference operator for AG2} with functions $a_\tau$ as given in \eqref{eq: another a2beta} and \eqref{eq: another a2alpha}. 
The following Lemma gives a useful way of expanding the functions $a_\tau$.
\begin{lemma}\label{lemma: another a_tau expanded}
Let $a_\tau(z)$ be defined as in~\eqref{eq: another a2beta} and~\eqref{eq: another a2alpha}. Then
\begin{equation}\label{eq: another a_tau expanded}
    a_{\tau}(z) = \kappa_\tau - \kappa_\tau\sum_{\gamma \in G_{2,+}} \frac{\IP{\tau}{\gamma}(m_\gamma + m_{2\gamma})}{\IP{\gamma}{z}} + R_\tau(z)
\end{equation}
where $\kappa_{2\varepsilon\beta_j} = 3$ and $\kappa_{2\varepsilon\a_j} = 1$, and $R_\tau(z)$ is a rational function with $\deg R_\tau \leq -2$.
\end{lemma}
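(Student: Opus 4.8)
The plan is to read off the two leading homogeneous parts of each $a_\tau$ directly from the product formulas \eqref{eq: another a2beta}, \eqref{eq: another a2alpha}. Each $a_\tau$ is the constant $\kappa_\tau$ (equal to $3$ for $\tau=2\varepsilon\beta_j$ and $1$ for $\tau=2\varepsilon\a_j$) times a finite product of factors of the shape
\[
    f = 1 - \frac{c\,\gamma^2}{\IP{\gamma}{z} + d\,\gamma^2},
\]
where $\gamma$ runs over a subset of $W\beta_1 \cup W\a_1$ and $c, d$ are explicit constants depending on $m$. First I would expand each such factor at infinity.

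Using $\frac{1}{\IP{\gamma}{z}+d\gamma^2} = \frac{1}{\IP{\gamma}{z}} - \frac{d\gamma^2}{\IP{\gamma}{z}(\IP{\gamma}{z}+d\gamma^2)}$, each factor satisfies $f = 1 - \frac{c\gamma^2}{\IP{\gamma}{z}} + r_f$ with $\deg r_f \le -2$; note the degree-$(-1)$ part $-c\gamma^2/\IP{\gamma}{z}$ is independent of the shift $d$. Writing $a_\tau = \kappa_\tau\prod_f(1+b_f)$ with $b_f = f-1$ of degree $-1$ and multiplying out, the degree-$0$ part is $\kappa_\tau$, every product $b_f b_{f'}$ of two or more factors has degree $\le -2$, and $b_f + c_f\gamma_f^2/\IP{\gamma_f}{z}$ has degree $\le -2$. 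Collecting all terms of degree $\le -2$ into $R_\tau$ gives
\[
    a_\tau(z) = \kappa_\tau - \kappa_\tau\sum_f \frac{c_f\,\gamma_f^2}{\IP{\gamma_f}{z}} + R_\tau(z), \qquad \deg R_\tau \le -2,
\]
so it only remains to identify the degree-$(-1)$ sum with $-\kappa_\tau\sum_{\gamma\in G_{2,+}}\IP{\tau}{\gamma}(m_\gamma+m_{2\gamma})/\IP{\gamma}{z}$.

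This last matching is the main point, and it is purely bookkeeping. I would group the factors $f$ by the root line they live on. For $\tau=2\varepsilon\beta_j$ a short root $\gamma\in W\beta_1$ occurs in the products exactly when $\IP{\tau}{(2\gamma)^\vee}=1$, i.e.\ (using $(2\gamma)^\vee=\gamma/\gamma^2$) when $\IP{\tau}{\gamma}=\gamma^2$, and it then carries the three factors with $c\in\{3m+2,-3m,3m-1\}$, whose sum is $3m+1 = m_\gamma+m_{2\gamma}$; a long root $\gamma\in W\a_1$ occurs when $\IP{\tau}{\gamma}=\gamma^2$ and carries a single factor with $c=m=m_\gamma+m_{2\gamma}$; and the special line $\gamma=\varepsilon\beta_j$ (where $\IP{\tau}{\gamma}=2\gamma^2$, so $\IP{\tau}{(2\gamma)^\vee}=2$ and $\gamma$ is excluded from the products) carries the two factors with $c\in\{3m+2,3m\}$ summing to $2(3m+1)$. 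In each case the contribution of the line $\gamma$ is $-(m_\gamma+m_{2\gamma})\gamma^2/\IP{\gamma}{z} = -(m_\gamma+m_{2\gamma})\IP{\tau}{\gamma}/\IP{\gamma}{z}$ after substituting the value of $\IP{\tau}{\gamma}$. When $-\gamma$ rather than $\gamma$ lies in $G_{2,+}$, one rewrites $\IP{\gamma}{z}=-\IP{-\gamma}{z}$ and $\IP{\tau}{\gamma}=-\IP{\tau}{-\gamma}$, so the term is unchanged in the normalisation of the statement. The case $\tau=2\varepsilon\a_j$ is identical, with the short-root selection condition $\IP{\tau}{(2\gamma)^\vee}=3$ (giving $\IP{\tau}{\gamma}=3\gamma^2$ and $c$-sum $9m+3=3(3m+1)$) and the long-root and diagonal factors as before. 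The only thing to watch is which $\gamma$, and which sign, is selected by the inner-product condition, and that the per-line sums of the $c$'s reproduce $m_\gamma+m_{2\gamma}$ weighted by $\IP{\tau}{\gamma}/\gamma^2\in\{1,2,3\}$.

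Finally, I would cut down the amount of checking using the $G_2$-symmetry. The left-hand side of \eqref{eq: another a_tau expanded} transforms as $a_{w\tau}(z)=a_\tau(w^{-1}z)$ by Lemma~\ref{lemma: G2 symmetry of another D}, while on the right $\kappa_{w\tau}=\kappa_\tau$ and the middle sum equals $\tfrac12\sum_{\gamma\in G_2}\IP{\tau}{\gamma}(m_\gamma+m_{2\gamma})/\IP{\gamma}{z}$ (the summand being even in $\gamma$), which is manifestly $W$-equivariant since $m$ is $W$-invariant. Hence \eqref{eq: another a_tau expanded} for $\tau$ implies it for $w\tau$, and it suffices to verify the matching above for one representative of each of the two $W$-orbits, say $\tau=2\beta_1$ and $\tau=2\a_1$.
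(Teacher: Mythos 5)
Your proof is correct and follows essentially the same route as the paper's: expand each factor via $\frac{1}{\IP{\gamma}{z}+c} = \frac{1}{\IP{\gamma}{z}} - \frac{c}{(\IP{\gamma}{z}+c)\IP{\gamma}{z}}$ so that the shifts in the denominators only affect terms of degree $\le -2$, then multiply out and match the degree-$(-1)$ coefficients. You merely carry out explicitly the per-line bookkeeping (and an optional symmetry reduction) that the paper compresses into ``obtained by multiplying out the factors''.
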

\begin{proof}
For the factors in $a_\tau$ with shifted singularities at $\IP{\gamma}{z} + c = 0$ for $c \neq 0$ we can use that 
\begin{equation*}
    \frac{1}{\IP{\gamma}{z} + c} = \frac{1}{\IP{\gamma}{z}} - \frac{c}{(\IP{\gamma}{z} + c)\IP{\gamma}{z}},  
\end{equation*}
which differs from $\IP{\gamma}{z}^{-1}$ only by a rational function of degree $-2$ which cannot affect the coefficient at $\IP{\gamma}{z}^{-1}$. The relation~\eqref{eq: another a_tau expanded} is then obtained by multiplying out the factors in each of the $a_\tau$.
\end{proof}

The next Lemma is proved by a direct computation that uses Lemma~\ref{lemma: another a_tau expanded}.
We will apply it in the proof of Theorem~\ref{thm: another construction of BA for AG2} below.

\begin{lemma}\label{lemma: one application of another D}
For $\gamma \in G_{2, +}$ let $n_\gamma \in \naturals$ be arbitrary. Let $N = \sum_{\gamma \in G_{2, +} }n_\gamma$. 
Let
\begin{equation*}
    \mu(x) = \sum_{\tau \colon \frac12 \tau \in G_2} \kappa_\tau(\exp\IP{x}{\tau}-1), 
\end{equation*}
where $\kappa_\tau$ are as in Lemma~\ref{lemma: another a_tau expanded}.
Let $A(z) = \prod_{\gamma \in G_{2, +}} \IP{\gamma}{z}^{n_\gamma}$.
Write 
$
    ({\mathcal D}_1 - \mu(x)) [ A(z)\exp{\IP{x}{z}}] = R(x,z)\exp\IP{x}{z}
$
for some rational function $R(x, z)$ in $z$, which has degree less than or equal to $N$. Then
\begin{align*}
    R(x, z) = \sum_{\gamma \in G_{2, +}} \big(n_\gamma-(m_\gamma + m_{2\gamma})\big) 
    \left(\sum_{\tau \colon \frac12 \tau \in G_2} \kappa_\tau \IP{\tau}{\gamma} \exp{\IP{x}{\tau}}\right)A(z)\IP{\gamma}{z}^{-1}  + S(x, z)\end{align*}
    for some rational function $S(x, z)$ in $z$ with degree less than or equal to $N-2$.
\end{lemma}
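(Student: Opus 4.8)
The plan is to compute $R(x,z)$ directly and read off its top-degree part. Since $T_\tau[A(z)\exp\IP{x}{z}] = A(z+\tau)\exp\IP{x}{\tau}\exp\IP{x}{z}$, I would first factor out $\exp\IP{x}{z}$ from $(\mathcal{D}_1 - \mu(x))[A(z)\exp\IP{x}{z}]$ and regroup by the exponentials $\exp\IP{x}{\tau}$, obtaining
\[
R(x,z) = \sum_{\tau\colon \frac12\tau \in G_2}\big(a_\tau(z)A(z+\tau) - \kappa_\tau A(z)\big)\exp\IP{x}{\tau} - \sum_{\tau\colon \frac12\tau \in G_2}\big(a_\tau(z) - \kappa_\tau\big)A(z).
\]
The problem then reduces to extracting the terms of degree $N-1$ from each of these two sums, everything of degree $\le N-2$ being absorbed into $S(x,z)$.

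For the first sum I would expand $A(z+\tau) = \prod_{\gamma \in G_{2,+}}(\IP{\gamma}{z} + \IP{\gamma}{\tau})^{n_\gamma}$ by the binomial theorem, retaining terms down to degree $N-1$:
\[
A(z+\tau) = A(z) + A(z)\sum_{\gamma \in G_{2,+}} \frac{n_\gamma \IP{\gamma}{\tau}}{\IP{\gamma}{z}} + (\deg \le N-2).
\]
Multiplying this against the expansion of $a_\tau(z)$ from Lemma~\ref{lemma: another a_tau expanded} and discarding the products that fall below degree $N-1$ (the cross terms among the $\IP{\gamma}{z}^{-1}$ contributions, and the $R_\tau(z)A(z)$ term), the degree $N$ term $\kappa_\tau A(z)$ is cancelled by $-\kappa_\tau A(z)$, and the degree $N-1$ coefficient of $\exp\IP{x}{\tau}$ reduces to $\kappa_\tau A(z)\sum_{\gamma}(n_\gamma - (m_\gamma + m_{2\gamma}))\IP{\tau}{\gamma}\IP{\gamma}{z}^{-1}$. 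Summing over $\tau$ and interchanging the order of summation reproduces precisely the main term of the claimed formula.

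The crux is that the second, non-exponential sum contributes nothing at degree $N-1$. Applying Lemma~\ref{lemma: another a_tau expanded} once more, its degree $N-1$ part equals $\sum_{\gamma}(m_\gamma + m_{2\gamma})A(z)\IP{\gamma}{z}^{-1}\sum_{\tau}\kappa_\tau\IP{\tau}{\gamma}$, and the inner sum vanishes identically: the index set $\{\tau\colon \frac12\tau \in G_2\} = 2G_2$ is invariant under $\tau \mapsto -\tau$, while $\kappa_{-\tau} = \kappa_\tau$ since $\kappa$ records only whether $\frac12\tau$ is a long or a short root, so the contributions of $\tau$ and $-\tau$ cancel in pairs. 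Hence the whole second sum has degree $\le N-2$ and is swept into $S(x,z)$. Assembling the two sums gives the asserted expression. No single step is a genuine obstacle; the only thing requiring care is the degree bookkeeping, which becomes routine once the binomial expansion and Lemma~\ref{lemma: another a_tau expanded} are in hand.
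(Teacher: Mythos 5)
Your proposal is correct and follows essentially the same route as the paper: expand $A(z+\tau)$ by the binomial theorem, expand $a_\tau(z)$ via Lemma~\ref{lemma: another a_tau expanded}, collect the degree-$(N-1)$ terms, and kill the non-exponential contribution using $\sum_{\tau\colon \frac12\tau\in G_2}\kappa_\tau\IP{\tau}{\gamma}=0$, which comes from the $\tau\mapsto-\tau$ symmetry and $\kappa_{-\tau}=\kappa_\tau$. The only difference is cosmetic — you pre-split $R$ into the exponential and non-exponential sums, while the paper carries the product through in one display — so the two arguments are the same.
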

\begin{proof}
By making use of the expression for $a_\tau(z)$ given in Lemma~\ref{lemma: another a_tau expanded} we get
\begin{align*}
     &{\mathcal D}_1 [ A(z) \exp{\IP{x}{z}}] = 
     \sum_{\tau \colon \frac12 \tau \in G_2} a_\tau(z)(T_\tau -1)
     [A(z) \exp{\IP{x}{z}}] \\ 
     &= \exp{\IP{x}{z}}\sum_{\tau \colon \frac12 \tau \in G_2} a_\tau(z)\bigg( \exp{\IP{x}{\tau}}\prod_{\gamma \in G_{2, +}} (\IP{\gamma}{z} + \IP{\tau}{\gamma})^{n_\gamma} 
     - A(z)   \bigg)
      \\
     &= A(z)\exp{\IP{x}{z}} \sum_{\tau \colon \frac12 \tau \in G_2} \kappa_\tau\left(1 - \sum_{\gamma \in G_{2,+}} \IP{\tau}{\gamma}(m_\gamma + m_{2\gamma})\IP{\gamma}{z}^{-1} + \LOT\right) \\
     &\hspace{14em}\times\left(
     \exp{\IP{x}{\tau}}\bigg(1 + \sum_{\gamma \in G_{2, +}} n_\gamma \IP{\tau}{\gamma}\IP{\gamma}{z}^{-1} + \LOT\bigg) - 1\right) \\
     &= A(z)\exp\IP{x}{z} \left(\mu(x) +
     \sum_{\gamma \in G_{2, +}} \big(n_\gamma-(m_\gamma + m_{2\gamma})\big) 
    \left(\sum_{\tau \colon \frac12 \tau \in G_2} \kappa_\tau \IP{\tau}{\gamma} \exp{\IP{x}{\tau}}\right)\IP{\gamma}{z}^{-1} + \LOT \right),
\end{align*}
where $\LOT$ denotes lower degree terms in $z$, and where we used that $\sum_{\tau \colon \frac12 \tau \in G_2} \kappa_\tau \IP{\tau}{\gamma} = 0$ for all $\gamma \in G_{2, +}$ since if $\frac12\tau \in G_2$ then also $-\frac{1}{2}\tau \in G_2$ and $\kappa_\tau = \kappa_{-\tau}$.
\end{proof}

We are ready to give the main result of this Section.

\begin{theorem}\label{thm: another construction of BA for AG2}
Let $M = \sum_{\gamma \in AG_{2, +}} m_\gamma =  12m+3.$
For $x \in \cplane$, let
\begin{equation*}
    \mu(x) =  \sum_{\tau \colon \frac12 \tau \in G_2} \kappa_\tau(\exp\IP{x}{\tau}-1),
\end{equation*}
and
\begin{equation}\label{eq: another c(x) for AG2}
    c(x) = \frac{M!}{8} \prod_{\gamma \in G_{2, +}} \left(\sum_{\tau \colon \frac12 \tau \in G_2} \kappa_\tau \IP{\tau}{\gamma} \exp{\IP{x}{\tau}} \right)^{m_\gamma + m_{2\gamma}}.
\end{equation}
We also define for $z \in \cplane$ the polynomial belonging to $\mathcal{R}^a_{AG_2}$ given by
\begin{equation*}
    Q(z) = \prod_{\substack{\gamma \in G_{2, +} \\ s \in A_\gamma}}  \left(\IP{\gamma}{z}^2-s^2\gamma^2\right), 
\end{equation*}
where we recall $A_{\gamma} = \{1,2,3, \dots, m_\gamma, m_\gamma + 2m_{2\gamma}\}$ for all $\gamma \in G_{2,+}$.
Then the function
\begin{equation}\label{eq: another formula for BA for AG2}
    \psi(z, x) = c^{-1}(x) ({\mathcal D}_1 - \mu(x))^M [Q(z) \exp\IP{z}{x}]
\end{equation}
is the BA function for $AG_2$.  
Moreover, $\psi$ is also an eigenfunction of the operator  ${\mathcal D}_1$ with ${\mathcal D}_1 \psi = \mu(x) \psi$, thus bispectrality holds.
\end{theorem}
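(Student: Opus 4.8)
The plan is to verify that the function $\psi$ defined by the iterated formula satisfies both conditions of Definition~\ref{def: BA modified} together with the eigenvalue equation ${\mathcal D}_1\psi = \mu(x)\psi$; the uniqueness statement of Proposition~\ref{prop: uniqueness of BA function for AG2} then identifies $\psi$ as the BA function (so existence holds), while the eigenvalue equation gives bispectrality. Throughout I would write $\psi_k = ({\mathcal D}_1 - \mu(x))^k[Q(z)\exp\IP{z}{x}]$, so $\psi = c^{-1}(x)\psi_M$, and use that the seed satisfies $Q \in \mathcal{R}^a_{AG_2}$ and has degree $2M$ with top term $\prod_{\gamma \in G_{2,+}} \IP{\gamma}{z}^{2(m_\gamma + m_{2\gamma})}$, since $|A_\gamma| = m_\gamma + m_{2\gamma}$ for every $\gamma \in G_{2,+}$.

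First I would settle the degree and the leading term. By Lemma~\ref{lemma: one application of another D}, each application of ${\mathcal D}_1 - \mu(x)$ to $A(z)\exp\IP{z}{x}$ lowers the degree by one and replaces the leading monomial by the explicit sum $\sum_{\gamma}(n_\gamma - (m_\gamma + m_{2\gamma}))\bigl(\sum_\tau \kappa_\tau\IP{\tau}{\gamma}\exp\IP{x}{\tau}\bigr)A(z)\IP{\gamma}{z}^{-1}$. Tracking the top term through the $M$ iterations, each step lowers one exponent $n_\gamma$ by one; the product of the scalars $(n_\gamma - (m_\gamma + m_{2\gamma}))$ as $n_\gamma$ runs from $2(m_\gamma + m_{2\gamma})$ down to $m_\gamma + m_{2\gamma} + 1$ gives $(m_\gamma + m_{2\gamma})!$, while the multinomial count of the orders in which the reductions across different $\gamma$ interleave supplies $M!/\prod_\gamma (m_\gamma + m_{2\gamma})!$. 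Hence the top term of $\psi_M$ is $M!\prod_{\gamma \in G_{2,+}}\bigl(\sum_\tau \kappa_\tau\IP{\tau}{\gamma}\exp\IP{x}{\tau}\bigr)^{m_\gamma + m_{2\gamma}}\prod_{\gamma \in G_{2,+}}\IP{\gamma}{z}^{m_\gamma + m_{2\gamma}}$, and rewriting $8\prod_{\gamma \in G_{2,+}}\IP{\gamma}{z}^{m_\gamma + m_{2\gamma}} = \prod_{\gamma \in AG_{2,+}}\IP{\gamma}{z}^{m_\gamma}$ via $\prod_{i}\IP{2\beta_i}{z} = 8\prod_i\IP{\beta_i}{z}$, this equals $c(x)\prod_{\gamma \in AG_{2,+}}\IP{\gamma}{z}^{m_\gamma}$ with $c(x)$ exactly as in~\eqref{eq: another c(x) for AG2}. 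Dividing by $c(x)$ gives the leading behaviour required in condition~1 of Definition~\ref{def: BA modified}, the factor $M!/8$ being explained by this count.

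The technical heart is to prove that $\psi$ is actually a polynomial in $z$ times $\exp\IP{z}{x}$ (condition~1 in full) and obeys the quasi-invariance relations of condition~2. The quasi-invariance is absent from the seed and must be produced by the iteration: a single application of ${\mathcal D}_1 - \mu(x)$ introduces simple poles on the hyperplanes $\IP{\gamma}{z} = c\gamma^2$, and I would show, by a pole-cancellation analysis parallel to the proof of Theorem~\ref{thm: another operator for AG2 preserves the ring}, that these poles are progressively removed as the degree drops, so that $\psi_M$ is entire in $z$ and its residues encode precisely the relations~\eqref{eq: repeated axiomatics for R for AG2} for $\psi$. I expect this simultaneous regularity and build-up of quasi-invariance to be the main obstacle, since one must check that the particular seed $Q$, the precise number $M$ of iterations, and the explicit coefficients~\eqref{eq: another a2beta}--\eqref{eq: another a2alpha} conspire to leave no poles and to produce exactly the axioms $s \in A_\gamma$. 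The bridge between the two properties is the identity, valid for any $P\exp\IP{z}{x}$ obeying condition~2, that the residue of ${\mathcal D}_1[P\exp\IP{z}{x}]$ at each $\IP{\a}{z} = c\a^2$ vanishes: pairing $\tau$ with $\lambda = s_\a(\tau) - 2c\a$ as in condition~\ref{enum: singularities}, using $\res(a_\tau + a_\lambda) = 0$ together with $P(z+\tau)\exp\IP{\tau}{x} = P(z+\lambda)\exp\IP{\lambda}{x}$ on the hyperplane, each paired contribution cancels, and the cases $\lambda = 0$ and $c = 0$ are treated in the same way.

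Finally I would deduce the eigenvalue equation. Once $\psi$ is known to be a polynomial of degree $M$ with top term $\prod_{\gamma \in AG_{2,+}}\IP{\gamma}{z}^{m_\gamma}$ obeying condition~2, Lemma~\ref{lemma: one application of another D} applied at this resonance $n_\gamma = m_\gamma + m_{2\gamma}$ makes the degree-$(M-1)$ part of $({\mathcal D}_1 - \mu(x))\psi$ vanish, so $({\mathcal D}_1 - \mu(x))\psi$ has degree at most $M-2$. On the other hand, since $\psi$ satisfies condition~2 so does ${\mathcal D}_1\psi$, by a difference-operator analogue of part~2 of Theorem~\ref{thm: sufficient conditions for ring preservation} together with the residue identity above guaranteeing regularity, and hence so does $({\mathcal D}_1 - \mu(x))\psi$. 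By Lemma~\ref{lemma: h.o.t. of BA for AG2}, a nonzero function of the form (polynomial)$\times\exp\IP{z}{x}$ satisfying condition~2 has polynomial part of degree at least $M$; since $({\mathcal D}_1 - \mu(x))\psi$ has degree at most $M-2$, it must vanish, giving ${\mathcal D}_1\psi = \mu(x)\psi$. Having verified conditions~1 and~2, $\psi$ is a BA function for $AG_2$, and uniqueness (Proposition~\ref{prop: uniqueness of BA function for AG2}) completes the proof.
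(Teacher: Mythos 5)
Your proposal is correct and follows essentially the same route as the paper: the seed $Q(z)\exp\IP{z}{x}$ lies in $\mathcal{R}^a_{AG_2}$, iterating ${\mathcal D}_1-\mu(x)$ stays in that ring, the leading term is tracked via Lemma~\ref{lemma: one application of another D} with the same multinomial count producing $c(x)$, and the eigenvalue equation follows from the degree drop below $M$ combined with Lemma~\ref{lemma: h.o.t. of BA for AG2}. The only difference is that the ``technical heart'' you anticipate (regularity plus build-up of quasi-invariance) is not an obstacle at all --- it is exactly the content of the already-established Theorem~\ref{thm: another operator for AG2 preserves the ring}, which the paper simply invokes to conclude $({\mathcal D}_1-\mu(x))^k[Q\exp\IP{z}{x}]\in\mathcal{R}^a_{AG_2}$ at every step.
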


\begin{proof}
The operator ${\mathcal D}_1$ preserves the ring $\mathcal{R}^a_{AG_2}$ by Theorem~\ref{thm: another operator for AG2 preserves the ring}.
The function $Q(z)\exp\IP{z}{x}$ belongs to $\mathcal{R}^a_{AG_2}$ since it is analytic and satisfies conditions~\eqref{eq: repeated axiomatics for R for AG2} given that $Q(z + s\gamma) = Q(z - s\gamma) = 0$ at $\IP{\gamma}{z} = 0$, $s \in A_\gamma$, for all $\gamma \in G_{2, +}$. Since ${\mathcal D}_1$ preserves $\mathcal{R}^a_{AG_2}$, so does ${\mathcal D}_1 - \mu(x)$, hence $\psi(z, x)$ given by \eqref{eq: another formula for BA for AG2} belongs to $\mathcal{R}^a_{AG_2}$. 
Its analyticity and the form of the functions $a_\tau$ imply that it equals $c^{-1}(x)P(z, x) \exp\IP{z}{x}$ for some polynomial $P(z, x)$ in $z$. To prove that $\psi(z, x)$ satisfies the definition of the BA function, it thus suffices to calculate the highest degree term in $P(z, x)$.

The highest degree term in $Q(z)$ is $Q_0(z) = \prod_{\gamma \in G_{2, +}} \IP{\gamma}{z}^{2(m_\gamma + m_{2\gamma})}$ and $\deg Q_0 = 2M$. 
For all $k \in \N$ with $k \leq M$, an analogous argument as above gives that $({\mathcal D}_1 - \mu(x))^k[Q(z)\exp\IP{z}{x}]$ belongs to $\mathcal{R}^a_{AG_2}$ and is of the form $Q^{(k)}(z, x) \exp\IP{z}{x}$ for some polynomial $Q^{(k)}(z, x)$ in $z$. Let its highest-degree homogeneous component be $Q_0^{(k)}(z, x)$.
Lemma~\ref{lemma: one application of another D} allows to compute $Q_0^{(k)}(z, x)$. 

Lemma~\ref{lemma: one application of another D} gives that after the first application of ${\mathcal D}_1-\mu(x)$ onto $Q(z)\exp\IP{z}{x}$ we get
$$
Q_0^{(1)} = \sum_{\gamma \in G_{2, +}}(m_\gamma + m_{2\gamma})\left(\sum_{\tau \colon \frac12 \tau \in G_2} \kappa_\tau \IP{\tau}{\gamma} \exp{\IP{x}{\tau}}\right)\IP{\gamma}{z}^{-1}Q_0(z).
$$ 
The second application gives
\begin{align*}
    &Q_0^{(2)} = \sum_{\gamma \in G_{2, +}}(m_\gamma + m_{2\gamma})(m_\gamma + m_{2\gamma}-1)\left(\sum_{\tau \colon \frac12 \tau \in G_2} \kappa_\tau \IP{\tau}{\gamma} \exp{\IP{x}{\tau}}\right)^2\IP{\gamma}{z}^{-2}Q_0(z) \\
    &+\sum_{\substack{\gamma \neq \delta \\ \gamma, \delta \in G_{2, +}}}(m_\gamma + m_{2\gamma})(m_\delta + m_{2\delta})
    \left(\sum_{\tau \colon \frac12 \tau \in G_2} \kappa_\tau \IP{\tau}{\gamma} \exp{\IP{x}{\tau}}\right)
    \\
    &\hspace{5em}\times \left(\sum_{\tau \colon \frac12 \tau \in G_2} \kappa_\tau \IP{\tau}{\delta} \exp{\IP{x}{\delta}}\right)\IP{\gamma}{z}^{-1}\IP{\delta}{z}^{-1}Q_0(z).
\end{align*}
By repeatedly applying Lemma~\ref{lemma: one application of another D} we get 
$$
Q_0^{(k)} = \sum_{\mathbf{n}} f_{\mathbf{n}}(x) Q_0(z) \prod_{\gamma \in G_{2, +}}\IP{\gamma}{z}^{-n_\gamma}
$$ 
where $\mathbf{n} = (n_\gamma)_{\gamma \in G_{2,+}}$ for $n_\gamma \in \integers_{\geq0}$ such that $n_\gamma$ add up to $k$ and where $f_{\bf{n}}(x)$ is non-zero only if $n_\gamma \leq m_\gamma + m_{2\gamma}$ for all $\gamma$.
It follows that $\deg P \leq M$ and that the highest degree term of $P(z, x)$ is $$d(x) \prod_{\gamma \in G_{2, +}} \IP{\gamma}{z}^{m_\gamma + m_{2\gamma}} = \frac{1}{8} d(x)\prod_{\gamma \in AG_{2,+}} \IP{\gamma}{z}^{m_\gamma}$$ for some function $d(x)$. It also implies that the polynomial part of $({\mathcal D}_1 - \mu(x))^{M + 1}[Q(z)\exp\IP{z}{x}]$ has degree less than $M$ hence vanishes as a consequence of Lemma~\ref{lemma: h.o.t. of BA for AG2}, giving ${\mathcal D}_1 \psi = \mu(x)\psi$. So to complete the proof we just need to verify that $c(x)$ given by~\eqref{eq: another c(x) for AG2} equals $\frac{1}{8}d(x)$. 

To arrive at $\prod_{\gamma \in G_{2, +}} \IP{\gamma}{z}^{m_\gamma + m_{2\gamma}}$ starting from $Q_0(z)$ we overall need to reduce the power on each of the factors $\IP{\gamma}{z}$ by $m_\gamma + m_{2\gamma}$ and we do this by reducing the power of one of them by one at each step. 
The total number of possible orderings of doing that corresponds to the number of words of length $M$ in the alphabet $G_{2, +}$ such that $\gamma$ appears in the word $m_\gamma + m_{2\gamma}$ times for each $\gamma \in G_{2, +}$.
This gives
$$\frac{M!}{\prod_{\gamma \in G_{2, +}}(m_\gamma + m_{2\gamma})!}$$ possibilities, and for each of them the total proportionality factor picked up equals by Lemma~\ref{lemma: one application of another D}
$$
\prod_{\gamma \in G_{2, +}}(m_\gamma + m_{2\gamma})!
\left(\sum_{\tau \colon \frac12 \tau \in G_2} \kappa_\tau \IP{\tau}{\gamma} \exp{\IP{x}{\tau}} \right)^{m_\gamma + m_{2\gamma}}
\hspace{-3em}.
$$
It follows that $c(x)$ has the required form.
\end{proof}

\section{Another dual operator
}
\label{sec: bispectral dual difference operator for AG2}

In this Section, we present another difference operator for the configuration $AG_2$ which preserves the quasi-invariants. We also give the corresponding second construction of the BA function.

We define a difference operator acting in the variable $z \in \cplane$ of the form
\begin{equation}\label{eq: difference operator for AG2}
    {\mathcal D}_2 =\sum_{\tau \colon \frac12 \tau \in AG_2} a_\tau(z)(T_\tau -1).
\end{equation}
We now proceed to specify the functions $a_\tau(z)$. Let $\lambda_\tau = g_{\tau/2}$ where $g$ is defined in terms of the multiplicity map of $AG_2$ in accordance with the convention~\eqref{eq: convention for coupling} for couplings.
That is, we set $$ \lambda_\tau = \frac14 m_{\frac12 \tau}(m_{\frac12 \tau} + 2m_{\tau} +1)\tau^2.$$
Recall that the multiplicities are $m_{\varepsilon\beta_j} = 3m$, $m_{2\varepsilon\beta_j} = 1$ and $m_{\varepsilon\alpha_j} = m$ ($j=1,2,3$, $\varepsilon \in \{ \pm 1\}$, $m \in \N$), and we put $m_{2\varepsilon\alpha_j} = 0$. 
That means $\lambda_{4\varepsilon \beta_j} = 8\beta_j^2$, $\lambda_{2\varepsilon\beta_j} = 9m(m+1) \beta_j^2$ 
and $\lambda_{2\varepsilon\alpha_j}  =  m(m+1)\alpha_j^2$.
For $\tau = 4\varepsilon\beta_j$ we define
\begin{equation}\label{eq: a4beta}
\begin{aligned}
     &a_{4\varepsilon\beta_j}(z) = 
    \lambda_{4\varepsilon\beta_j}
    \prod_{\substack{\gamma \in W\a_1 \\ \IP{ 4\varepsilon\beta_j}{(2\gamma)^\vee} = 2}} 
    \bigg(1-\frac{m\gamma^2}{\IP{\gamma}{z}}\bigg)
    \bigg(1-\frac{m\gamma^2}{\IP{\gamma}{z}+\gamma^2}\bigg)
    \\
    &\times \prod_{\substack{\gamma \in W\beta_1 \\ \IP{ 4\varepsilon\beta_j}{(2\gamma)^\vee} = 2}} \bigg(1-\frac{(3m+2)\gamma^2}{\IP{\gamma}{z}}\bigg)
    \bigg(1-\frac{3m\gamma^2}{\IP{\gamma}{z}+\gamma^2}\bigg) \\
    &\times 
    \bigg(1-\frac{(3m+2)\beta_j^2}{\IP{\varepsilon\beta_j}{z}}\bigg)
    \bigg(1-\frac{3m\beta_j^2}{\IP{\varepsilon\beta_j}{z}+\beta_j^2}\bigg)
    \bigg(1-\frac{(3m+2)\beta_j^2}{\IP{\varepsilon\beta_j}{z}+2\beta_j^2}\bigg)
    \bigg(1-\frac{3m\beta_j^2}{\IP{\varepsilon\beta_j}{z}+3\beta_j^2}\bigg).
\end{aligned}
\end{equation}
For $\tau = 2\varepsilon\beta_j$ we define 
\begin{equation}\label{eq: a2beta}
\begin{aligned}
    &a_{2\varepsilon\beta_j}(z) = 
    \lambda_{2\varepsilon\beta_j}
    \prod_{\substack{\gamma \in W\a_1 \\ \IP{ 2\varepsilon\beta_j}{(2\gamma)^\vee} = 0}} 
    \bigg(1-\frac{\frac23 \gamma^2}{\IP{\gamma}{z}-\gamma^2}\bigg)
    \prod_{\substack{\gamma \in W\a_1 \\ \IP{ 2\varepsilon\beta_j}{(2\gamma)^\vee} = 1}} 
    \bigg(1-\frac{m\gamma^2}{\IP{\gamma}{z}}\bigg) \\
    &\times\prod_{\substack{\gamma \in W\beta_1 \\ \IP{ 2\varepsilon\beta_j}{(2\gamma)^\vee} = 1}} \bigg(1-\frac{(3m+2)\gamma^2}{\IP{\gamma}{z}}\bigg)
    \bigg(1+\frac{3m\gamma^2}{\IP{\gamma}{z}+2\gamma^2}\bigg)
    \bigg(1-\frac{(3m-1)\gamma^2}{\IP{\gamma}{z}-\gamma^2}\bigg) \\
    &\times 
    \bigg(1-\frac{(3m+2)\beta_j^2}{\IP{\varepsilon\beta_j}{z}}\bigg)
    \bigg(1-\frac{3m\beta_j^2}{\IP{\varepsilon\beta_j}{z}+\beta_j^2}\bigg)
    \bigg(1+\frac{4\beta_j^2}{\IP{\varepsilon\beta_j}{z}+3\beta_j^2}\bigg)
    \bigg(1-\frac{4\beta_j^2}{\IP{\varepsilon \beta_j}{z}-\beta_j^2}\bigg).
\end{aligned}
\end{equation}
For $\tau = 2\varepsilon\a_j$ we define 
\begin{equation}\label{eq: a2alpha}
\begin{aligned}
    a_{2\varepsilon\alpha_j}(z) &= 
    \lambda_{2\varepsilon\alpha_j}
    \prod_{\substack{\gamma \in W\beta_1 \\ \IP{2\varepsilon\alpha_j}{(2\gamma)^\vee} = 3}} 
    \bigg(1-\frac{(3m+2)\gamma^2}{\IP{\gamma}{z}}\bigg)
    \bigg(1-\frac{(3m+1)\gamma^2}{\IP{\gamma}{z}+\gamma^2}\bigg)
    \bigg(1-\frac{3m\gamma^2}{\IP{\gamma}{z}+2\gamma^2}\bigg) \\
    & \times 
    \prod_{\substack{\gamma \in W\beta_1 \\ \IP{2\varepsilon\alpha_j}{(2\gamma)^\vee} = 0}} 
    \bigg(1-\frac{6\gamma^2}{\IP{\gamma}{z}-\gamma^2}\bigg) 
    \prod_{\substack{\gamma \in W\a_1 \\ \IP{2\varepsilon\alpha_j}{(2\gamma)^\vee} = 1}} 
    \bigg(1-\frac{m\gamma^2}{\IP{\gamma}{z}}\bigg) \\
    &\times
    \bigg(1-\frac{m\alpha_j^2}{\IP{\varepsilon\alpha_j}{z}}\bigg)
    \bigg(1-\frac{m\alpha_j^2}{\IP{\varepsilon\alpha_j}{z}+\alpha_j^2}\bigg).
\end{aligned}
\end{equation}

Lemma~\ref{lemma: G2 symmetry of D} below shows that the functions $a_\tau(z)$ have $G_2$ symmetry.

\begin{lemma}\label{lemma: G2 symmetry of D}
Let $a_\tau(z)$ be defined as above.
Then for all $w \in W$ we have $w a_\tau = a_{w\tau}$.
\end{lemma}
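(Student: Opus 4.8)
The plan is to follow the same strategy as in the proof of Lemma~\ref{lemma: G2 symmetry of another D}: I would compute $(w a_\tau)(z) = a_\tau(w^{-1}z)$ and show, by a change of variables in the products, that it coincides with $a_{w\tau}(z)$. The three facts underpinning this are that $w$ permutes each of the orbits $W\a_1$ and $W\beta_1$; that $\IP{wu}{wv} = \IP{u}{v}$, whence $(wu)^2 = u^2$ and $(2w\gamma)^\vee = w\big((2\gamma)^\vee\big)$ for all $u, v, \gamma \in \C^2$; and that the multiplicity map is $W$-invariant, so that in particular $\lambda_{w\tau} = \lambda_\tau$ since $\lambda_\tau$ depends only on $\tau^2$ and on the type (long or short) of the root $\tfrac12\tau$.

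First I would treat the factors that are grouped into products over the orbits $W\a_1$, $W\beta_1$. For a product over $\{\gamma \in W\beta_1 \colon \IP{\tau}{(2\gamma)^\vee} = k\}$ (and likewise over $W\a_1$), the substitution $z \mapsto w^{-1}z$ turns each denominator $\IP{\gamma}{w^{-1}z}$ into $\IP{w\gamma}{z}$, while each $\gamma^2$ is unchanged. Setting $\gamma' = w\gamma$ and using $\IP{\tau}{(2\gamma)^\vee} = \IP{w\tau}{(2w\gamma)^\vee}$, the index set is carried bijectively onto $\{\gamma' \in W\beta_1 \colon \IP{w\tau}{(2\gamma')^\vee} = k\}$, which is exactly the index set for the corresponding product in $a_{w\tau}$. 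Hence each such orbit product in $a_\tau(w^{-1}z)$ becomes the corresponding product in $a_{w\tau}(z)$. The products with shifted denominators $\IP{\gamma}{z} - \gamma^2$ and coindex $0$ appearing in~\eqref{eq: a2beta} and~\eqref{eq: a2alpha} are handled identically, since their coefficients depend only on the root lengths.

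It then remains to match the factors in the last lines of~\eqref{eq: a4beta}, \eqref{eq: a2beta} and~\eqref{eq: a2alpha}, which involve the root $\varepsilon\beta_j$ (respectively $\varepsilon\a_j$) collinear with $\tau$ rather than a generic orbit vector. Under the substitution these become the factors built from $w(\varepsilon\beta_j)$ (respectively $w(\varepsilon\a_j)$), which is precisely the root collinear with $w\tau$ entering $a_{w\tau}$, and the relevant squared length is again preserved. Together with $\lambda_\tau = \lambda_{w\tau}$, this yields $a_\tau(w^{-1}z) = a_{w\tau}(z)$. The only genuine bookkeeping obstacle is to confirm that this collinear root is always excluded from the orbit products, so that it is accounted for exactly once by the explicit last-line factors: for $\tau = 4\varepsilon\beta_j$ one has $\IP{\tau}{(2\varepsilon\beta_j)^\vee} = 4$ while the products use $k=2$; for $\tau = 2\varepsilon\beta_j$ one has coindex $2$ against the values $k \in \{0,1\}$; and for $\tau = 2\varepsilon\a_j$ one has $\IP{\tau}{(2\varepsilon\a_j)^\vee} = 2$ against $k = 1$ over $W\a_1$. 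In each case the collinear root falls outside the ranges used in the orbit products, so the matching is complete and the statement follows.
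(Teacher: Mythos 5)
Your proposal is correct and follows essentially the same approach as the paper, which simply observes that $w$ preserves the orbits $W\a_1$, $W\beta_1$, the inner products, the multiplicities, and the constants $\lambda_\tau$, and concludes; your write-up merely makes the change of variables in the orbit products and the bookkeeping for the collinear-root factors explicit.
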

\begin{proof}
For any $w \in W$ we have $w(W\a_1) = W\a_1$, $w(W\beta_1) = W\beta_1$, and $\IP{w\tau}{w\gamma} = \IP{\tau}{\gamma}$ for all $\gamma, \tau \in \C^2$. The multiplicities are invariant, too, and $\lambda_{w\tau} = \lambda_\tau$ for any $\tau$ such that $\frac12 \tau \in AG_2$. 
The statement follows.
\end{proof}

\begin{theorem}\label{secondDpreserves}
The operator \eqref{eq: difference operator for AG2} preserves the ring $\mathcal{R}^a_{AG_2}$.
\end{theorem}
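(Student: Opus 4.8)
The proof follows the same strategy as that of Theorem~\ref{thm: another operator for AG2 preserves the ring}, now with $S = 2AG_2$ in place of $S = 2G_2$. First I would record that ${\mathcal D}_2$ has properties~\ref{enum: degree}--\ref{enum: symmetry}: each coefficient \eqref{eq: a4beta}, \eqref{eq: a2beta}, \eqref{eq: a2alpha} is a product of factors of the form $1 - c\gamma^2/(\IP{\gamma}{z} + d\gamma^2)$, so $\deg a_\tau = 0$, giving \ref{enum: degree}; one reads off from these factors that the poles of $a_\tau$ occur exactly on the hyperplanes prescribed by \ref{enum: singularities} with $S = 2AG_2$; and the $G_2$-symmetry \ref{enum: symmetry} is Lemma~\ref{lemma: G2 symmetry of D}. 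With \ref{enum: singularities} and \ref{enum: symmetry} in hand, the statement will follow from Theorem~\ref{thm: sufficient conditions for ring preservation} once its hypotheses in parts~1 and~2 are verified for all $\a \in R^r_+ = \{\a_i, \beta_i \colon i=1,2,3\}$. By Corollary~\ref{cor: symmetry of the residues} and Lemma~\ref{lemma: symmetry in the off-axiomatics} it suffices to check these for the two orbit representatives $\a = \beta_1$ and $\a = \a_1$. As in the proof of Theorem~\ref{thm: another operator for AG2 preserves the ring}, I set $\omega = \sqrt2$ and use coordinates $A = \IP{\a_1}{z}$, $B = \IP{\beta_1}{z}$, so that $\beta_1^2 = 2$ and $\a_1^2 = 6$.

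For part~1 (analyticity), I would treat the short root $\beta_1$ first. Reading off the factors in \eqref{eq: a4beta}--\eqref{eq: a2beta}, the coefficients singular on a hyperplane $B = c\beta_1^2$ with $c > 0$ occur only for $c \in \{1,2,3\}$, i.e.\ at $B = 2, 4, 6$; the new value $c = 3$ (absent for ${\mathcal D}_1$) arises from the longer shifts $\pm 4\beta_j$. At each such hyperplane I would enumerate the finitely many $\tau$ with $\frac12\tau \in AG_2$ for which $a_\tau$ is singular and group them into pairs $(\tau, \lambda)$ with $\lambda = s_{\beta_1}(\tau) - 2c\beta_1 \in S$, then compute $\res_{B = 2c}(a_\tau + a_\lambda)$ and confirm it vanishes. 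For example, at $B = 6$ the only singular coefficients are $a_{-4\beta_1}$ and $a_{-2\beta_1} = a_{s_{\beta_1}(-4\beta_1) - 6\beta_1}$, which form a single self-paired pair; at $B = 2, 4$ the further pairs are obtained from the computed ones by Lemma~\ref{lemma: symmetry of the residues} with $w = s_{\a_1}$, exactly as in the ${\mathcal D}_1$ case. The case $c = 0$ needs no computation, being covered by the symmetry argument within part~1 of Theorem~\ref{thm: sufficient conditions for ring preservation}. For the long root $\a_1$, the only singular hyperplane with $A = \mathrm{const} > 0$ is $A = 6$ (i.e.\ $\IP{\a_1}{z} = \a_1^2$), coming from $a_{-2\a_1}$; there $\lambda = s_{\a_1}(-2\a_1) - 2\a_1 = 0$, so the pole is cancelled by $(T_\tau - 1)p$ with no residue condition required, just as for ${\mathcal D}_1$. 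Corollary~\ref{cor: symmetry of the residues} then extends non-singularity to all $\IP{\beta_i}{z} = c$ and $\IP{\a_i}{z} = c$, $c \in \C$.

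For part~2 (quasi-invariance), I would verify condition~2 of Theorem~\ref{thm: sufficient conditions for ring preservation} for $\a = \beta_1$ and $\a = \a_1$, where $A_{\beta_i} = \{1, \dots, 3m, 3m+2\}$ and $A_{\a_i} = \{1, \dots, m\}$. For each $t \in A_\a$ and each $\tau$ with $\frac12\tau \in AG_2$ such that $|t + \IP{\tau}{(2\a)^\vee}| \notin A_\a \cup \{0\}$, I would check that one of the alternatives~(a),~(b) holds: generically the relevant factor of $a_\tau$ vanishes after the shift, so that $a_\tau(z + t\a) = 0$ at $\IP{\a}{z} = 0$ (alternative~(a)), which is read off directly from the explicit factors such as $1 - (3m+2)\beta_j^2/\IP{\varepsilon\beta_j}{z}$ and $1 - 3m\beta_j^2/(\IP{\varepsilon\beta_j}{z} + \beta_j^2)$; in the remaining cases $\lambda = s_\a(\tau) - 2t\a \in S$ and the coefficient match~(b) holds by the $G_2$-symmetry \ref{enum: symmetry}. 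Remark~\ref{rem: symmetry in the off-axiomatics} removes the need to treat $\tau$ and $s_{\a_1}(\tau)$ (resp.\ $s_{\beta_1}(\tau)$) separately, since $s_{\a_1}\beta_1 = \beta_1$ and $s_{\beta_1}\a_1 = \a_1$. Having established parts~1 and~2 for the two representatives, Lemma~\ref{lemma: symmetry in the off-axiomatics} promotes them to the whole $W$-orbit, and hence ${\mathcal D}_2$ preserves $\mathcal{R}^a_{AG_2}$.

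I expect the main obstacle to be the residue computations in part~1 at the three hyperplanes $B = 2, 4, 6$: these require expanding the explicit coefficients \eqref{eq: a4beta}, \eqref{eq: a2beta}, \eqref{eq: a2alpha}, correctly identifying for each hyperplane all $\tau$ whose coefficient is singular, and checking that every canceling pair $(\tau, \lambda)$ has vanishing residue sum. This step is heavier than for ${\mathcal D}_1$ precisely because the larger shift set $S = 2AG_2$, and in particular the $\pm 4\beta_j$ shifts, produces a richer pole structure, including the extra hyperplane $B = 6$ together with additional singular coefficients at $B = 2, 4$. By contrast, once the vanishing factors are located, the verification of part~2 is essentially bookkeeping.
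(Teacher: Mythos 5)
Your overall architecture is the same as the paper's: verify \ref{enum: singularities} and \ref{enum: symmetry}, work in the coordinates $A=\IP{\a_1}{z}$, $B=\IP{\beta_1}{z}$ with $\omega=\sqrt2$, check the residue cancellations of Theorem~\ref{thm: sufficient conditions for ring preservation} part~1 at $B=2,4,6$ hyperplane by hyperplane, use Lemma~\ref{lemma: symmetry of the residues} and Corollary~\ref{cor: symmetry of the residues} to propagate, and treat part~2 as bookkeeping analogous to the ${\mathcal D}_1$ case. Your analysis of the short-root hyperplanes is consistent with the paper (at $B=6$ the pair is indeed $\{a_{-4\beta_1},a_{-2\beta_1}\}$; at $B=2$ and $B=4$ you would have to carry out several independent residue computations, e.g.\ at $B=2$ the paper needs three, for the pairs $(-4\beta_1,2\beta_1)$, $(-4\beta_2,-2\a_1)$ and $(2\beta_2,2\a_2)$, before symmetry takes over, but your stated plan of enumerating all singular $\tau$ and pairing them would find these).

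There is, however, a genuine gap in your treatment of the long-root hyperplane $A=6$. You claim that the only coefficient singular there is $a_{-2\a_1}$, with $\lambda=s_{\a_1}(-2\a_1)-2\a_1=0$, ``just as for ${\mathcal D}_1$''. That is false for ${\mathcal D}_2$: the coefficients $a_{\pm4\beta_2}$, $a_{\pm4\beta_3}$ contain the factors $\bigl(1-\tfrac{m\gamma^2}{\IP{\gamma}{z}+\gamma^2}\bigr)$ with $\gamma\in W\a_1$ from \eqref{eq: a4beta}, and $a_{\pm2\beta_1}$ contain the factors $\bigl(1-\tfrac{\frac23\gamma^2}{\IP{\gamma}{z}-\gamma^2}\bigr)$ with $\gamma\in W\a_1$, $\IP{2\varepsilon\beta_j}{(2\gamma)^\vee}=0$ from \eqref{eq: a2beta}; these produce poles at $\IP{\a_1}{z}=\a_1^2$, i.e.\ at $A=6$. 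Concretely, the paper identifies $\tau=-4\beta_2,-4\beta_3,\pm2\beta_1$ as singular at $A=6$ with $\lambda\neq0$, pairs them via $s_{\a_1}(-4\beta_2)-2\a_1=-2\beta_1$, and verifies by explicit computation that $\res_{A=6}(a_{-4\beta_2}+a_{-2\beta_1})=0$, the pair $(-4\beta_3,2\beta_1)$ then following from Lemma~\ref{lemma: symmetry of the residues} with $w=s_{\beta_1}$. Your proposal omits this check entirely, so the analyticity of ${\mathcal D}_2p(z)$ along $\IP{\a_1}{z}=\a_1^2$ is not established; this is precisely one of the places where the enlarged shift set $S=2AG_2$ changes the pole structure relative to ${\mathcal D}_1$, and it cannot be waved away by analogy with the ${\mathcal D}_1$ case.
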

\begin{proof}
 One can check that the operator satisfies condition \ref{enum: singularities}.
Let $p(z) \in \mathcal{R}^a_{AG_2}$ be arbitrary.
Without loss of generality, we put $\omega = \sqrt{2}$. We introduce new coordinates $(A, B)$ on $\C^2$ given by $A = \IP{\a_1}{z}$ and $B = \IP{\beta_1}{z}$.

It follows from the form of the coefficient functions \eqref{eq: a4beta}, \eqref{eq: a2beta} and \eqref{eq: a2alpha} and Theorem \ref{thm: sufficient conditions for ring preservation}
that there are no singularities in ${\mathcal D}_2p(z)$ at $B = c$ for $c \geq 0$ unless $B=2, 4, 6$. Let us consider these cases.

If $B = 6$ (equivalently, $\IP{\beta_1}{z} = 3\beta_1^2$), then 
$\IP{\beta_2}{z} = 3 + \frac12 A$, $\IP{\beta_3}{z} = -3 + \frac12 A$, $\IP{\a_2}{z} = -9 + \frac12 A$ and $\IP{\a_3}{z} = 9 + \frac12 A$. 
The only terms singular at $B=6$ are $a_{-4\beta_1}$ and $a_{-2\beta_1}$. We note that $s_{\beta_1}(-4\beta_1) - 6\beta_1 = -2\beta_1$, and we compute that $\res_{B = 6}(a_{-4\beta_1}) = - \res_{B = 6}(a_{-2\beta_1})$ equals
\begin{align*}
    \tfrac{48 m (m+1) (3m+2) (3m+5) (A - 2- 12m) (A - 6- 12m) (A - 14- 12m) (A - 18- 12m) (A+2+12 m) (A+6+12 m) (A + 14+12 m) (A+18+12 m)}{(A - 18)(A - 6)^2(A - 2)(A +2)(A +6)^2(A + 18)}.
\end{align*}
Therefore, by Theorem~\ref{thm: sufficient conditions for ring preservation} part 1, there is no singularity at $B = 6$ in ${\mathcal D}_2p(z)$.

If $B = 4$ (equivalently, $\IP{\beta_1}{z} = 2\beta_1^2$), then 
$\IP{\beta_2}{z} = 2 + \frac12 A$, $\IP{\beta_3}{z} = -2 + \frac12 A$, $\IP{\a_2}{z} = -6 + \frac12 A$ and $\IP{\a_3}{z} = 6 + \frac12 A$.
The only $\tau \in AG_2$ for which $a_\tau$ is singular at $B = 4$ and for which the corresponding $\lambda = s_{\beta_1}(\tau) - 4\beta_1 \neq 0$
are $\tau = -2\beta_2, -2\a_3, 2\beta_3$ and $2\a_2$. We note that $s_{\beta_1}(-2\beta_2) - 4\beta_1 = -2\a_3$, and we compute  that 
$\res_{B = 4}(a_{-2\beta_2}) = - \res_{B = 4}(a_{-2\a_3})$ equals
\begin{align*}
   -\tfrac{18m^2(m+1) (3m+2) (3m+4)  (A-32) (A + 24) (A-12-12 m)(A + 6m) (A-4+12 m) (A+12m) (A+4+12 m) (A + 12 + 12m)^2}{(A -12) (A-8) (A-4) A^4 (A + 4) (A + 12)}.
\end{align*}
Since $s_{\a_1}(-2\beta_2) = 2\beta_3$ and $s_{\a_1}(-2\a_3) = 2\a_2$, by Lemma~\ref{lemma: symmetry of the residues} the residue of 
$a_{2\beta_3} + a_{2\a_2}$ at $B = 4$ is also zero.
Thus, by Theorem~\ref{thm: sufficient conditions for ring preservation} part 1, there is no singularity at $B = 4$ in ${\mathcal D}_2p(z)$.

If $B = 2$ (equivalently, $\IP{\beta_1}{z} = \beta_1^2$), then 
$\IP{\beta_2}{z} = 1 + \frac12 A$, $\IP{\beta_3}{z} = -1 + \frac12 A$, $\IP{\a_2}{z} = -3 + \frac12 A$ and $\IP{\a_3}{z} = 3 + \frac12 A$.
The only $\tau \in AG_2$ for which $a_\tau$ is singular at $B = 2$ and for which the corresponding $\lambda = s_{\beta_1}(\tau) - 2\beta_1 \neq 0$
are $\tau = -4\beta_1, 2\beta_1, -4\beta_2, 4\beta_3, \pm 2\a_1, 2\beta_2, 2\a_2, -2\beta_3$ and $-2\a_3$.
We note that $s_{\beta_1}(-4\beta_1) - 2\beta_1 = 2\beta_1$, and we compute that 
$\res_{B = 2}(a_{-4\beta_1}) = - \res_{B = 2} (a_{2\beta_1})$ equals
\begin{align*}
  \tfrac{144 m(m+1) (3m-2) (3m+1) (A +6-12m) (A +2-12m) (A -6-12m) (A-10-12m) (A-6+12 m) (A-2+12 m) (A+6+12 m) (A+10+12 m)}{(A-6)^2(A-2)^2(A+2)^2(A+6)^2}.
\end{align*}
Similarly, we note that $s_{\beta_1}(-4\beta_2) - 2\beta_1 = -2\a_1$, and we compute that 
$\res_{B = 2}(a_{-4\beta_2}) = - \res_{B = 2}(a_{-2\a_1})$ equals
\begin{align*}
  \tfrac{288 m (m+1) (A-6+6 m) (A+6 m) (A-10+12 m) (A-6+12m)^2(A-2+12 m) (A+2+12 m) (A+6+12m)^2(A+10+12 m)}{(A-10)(A-6)^4(A-2)^2A(A+2)(A+6)}.
\end{align*}
Since $s_{\a_1}(-4\beta_2) = 4\beta_3$ and $s_{\a_1}(-2\a_1) = 2\a_1$, it follows by Lemma~\ref{lemma: symmetry of the residues} that the residue of $a_{4\beta_3} + a_{2\a_1}$ at $B=2$ is also zero.
Next we note that $s_{\beta_1}(2\beta_2) - 2\beta_1 = 2\a_2$, and we compute that $\res_{B = 2}(a_{2\beta_2}) = - \res_{B = 2}(a_{2\a_2})$ equals
\begin{align*}
  \tfrac{36 m (m+1)^2 (3m-1) (3m+1)  (A-26) (A+30) (A-10-12 m) (A-6-12 m) (A-2-12 m) (A+6-12m)^2 (A-6 m) (A+6+12 m)}{(A - 6)(A-2)^2A(A+2)(A+6)^4}.
\end{align*}
Since $s_{\a_1}(2\beta_2) = -2\beta_3$ and $s_{\a_1}(2\a_2) = -2\a_3$, it follows by  Lemma~\ref{lemma: symmetry of the residues} that the residue of $a_{ -2\beta_3} + a_{-2\a_3}$ at $B=2$ is also zero.
Thus, by Theorem~\ref{thm: sufficient conditions for ring preservation} part 1 there is no singularity at $B = 2$ in ${\mathcal D}_2p(z)$.  
 
Let us now consider possible singularities in ${\mathcal D}_2p(z)$ at $A=c \ge0$. By Theorem \ref{thm: sufficient conditions for ring preservation} part 1 and the form of the coefficients  \eqref{eq: a4beta}  -- \eqref{eq: a2alpha} it is sufficient to consider the case $A=6$ (equivalently, $\IP{\a_1}{z} = \a_1^2$).
In this case $\IP{\beta_2}{z} = \frac12 B + 3$, $\IP{\beta_3}{z} = -\frac12 B + 3$, $\IP{\alpha_2}{z} = -\frac32 B + 3$ and $\IP{\alpha_3}{z} = \frac32 B + 3$.
The only $\tau \in AG_2$ for which $a_\tau$ is singular at $A = 6$ and for which the corresponding $\lambda = s_{\a_1}(\tau) - 2\a_1 \neq 0$ are 
$\tau = -4\beta_2, -4\beta_3$ and $\pm2\beta_1$. 
We note that $s_{\a_1}(-4\beta_2) - 2\a_1 = -2\beta_1$, and we compute that 
$\res_{A = 6}(a_{-4\beta_2}) = - \res_{A = 6}(a_{-2\beta_1})$ equals 
\begin{align*}
    \tfrac{96 m(m+1) (B-14-12 m) (B-2-12 m) (B-2+4 m) (B+ 2+4 m) (B-2+6 m) (B+4+6 m) (B-6+12 m) (B+2+12 m) (B+6+12 m) (B+14+12 m)}{(B-6)^2(B - 2)^4 B (B+2)^2(B + 6)}.
\end{align*}
Since $s_{\beta_1}(-4\beta_2) = -4\beta_3$ and $s_{\beta_1}(-2\beta_1) = 2\beta_1$, it follows by Lemma~\ref{lemma: symmetry of the residues} that the residue of $a_{-4\beta_3} + a_{2\beta_1}$ at $A = 6$ is also zero.
By Theorem~\ref{thm: sufficient conditions for ring preservation} part 1 there is thus no singularity at $A = 6$ in ${\mathcal D}_2p(z)$.

By Corollary~\ref{cor: symmetry of the residues} it follows that 
${\mathcal D}_2p(z)$ has no singularities.
The proof that ${\mathcal D}_2p(z)$ belongs to $\mathcal{R}^a_{AG_2}$ can be completed in an analogous way to how it was done for the   operator~\eqref{eq: another difference operator for AG2} in the proof of Theorem \ref{thm: another operator for AG2 preserves the ring}.
\end{proof}

We now give a second construction of the \BA for $AG_2$.

\begin{theorem}\label{thm: construction of BA for AG2}
Let $M = \sum_{\gamma \in AG_{2, +}} m_\gamma =  12m+3.$
For $x \in \cplane$, let
\begin{equation*}
    \mu(x) =  \sum_{\tau \colon \frac12 \tau \in AG_2} \lambda_\tau(\exp\IP{x}{\tau}-1),
\end{equation*}
and
\begin{equation}\label{eq: c(x) for AG2}
    c(x) = \frac{M!}{8} \prod_{\gamma \in G_{2, +}} \left(\sum_{\tau \colon \frac12 \tau \in AG_2} \lambda_\tau \IP{\tau}{\gamma} \exp{\IP{x}{\tau}} \right)^{m_\gamma + m_{2\gamma}}.
\end{equation}
We also define for $z \in \cplane$ the polynomial belonging to $\mathcal{R}^a_{AG_2}$ given by
\begin{equation*}
    Q(z) = \prod_{\substack{\gamma \in G_{2, +} \\ s \in A_\gamma}}  \left(\IP{\gamma}{z}^2-s^2\gamma^2\right), 
\end{equation*}
where we recall $A_{\gamma} = \{1,2,3, \dots, m_\gamma, m_\gamma + 2m_{2\gamma}\}$ for all $\gamma \in G_{2,+}$.
Then the function
\begin{equation}\label{eq: formula for BA for AG2}
    \psi(z, x) = c^{-1}(x) ({\mathcal D}_2 - \mu(x))^M [Q(z) \exp\IP{z}{x}]
\end{equation}
is the BA function for $AG_2$.  
Moreover, $\psi$ is also an eigenfunction of the operator ${\mathcal D}_2$ with ${\mathcal D}_2 \psi = \mu(x) \psi$, thus bispectrality holds.
\end{theorem}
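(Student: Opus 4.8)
The plan is to follow the proof of Theorem~\ref{thm: another construction of BA for AG2} essentially verbatim, replacing $\mathcal{D}_1$ by $\mathcal{D}_2$, the constants $\kappa_\tau$ by $\lambda_\tau$, and the index set $\{\tau\colon \tfrac12\tau \in G_2\}$ by $\{\tau \colon \tfrac12 \tau \in AG_2\}$, and invoking Theorem~\ref{secondDpreserves} in place of Theorem~\ref{thm: another operator for AG2 preserves the ring}. First I would note that since $Q(z+s\gamma) = Q(z-s\gamma) = 0$ at $\IP{\gamma}{z} = 0$ for $s \in A_\gamma$ and all $\gamma \in G_{2,+}$, the function $Q(z)\exp\IP{z}{x}$ lies in $\mathcal{R}^a_{AG_2}$; by Theorem~\ref{secondDpreserves} the operator $\mathcal{D}_2 - \mu(x)$ preserves this ring, so $\psi$ defined by~\eqref{eq: formula for BA for AG2} belongs to $\mathcal{R}^a_{AG_2}$ and, by analyticity together with the form of the $a_\tau$, equals $c^{-1}(x) P(z,x)\exp\IP{z}{x}$ for some polynomial $P$ in $z$. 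It then remains only to identify the highest-degree term of $P$ in order to verify condition~1 of Definition~\ref{def: BA modified}.

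The computational heart of the argument is the analog for $\mathcal{D}_2$ of Lemmas~\ref{lemma: another a_tau expanded} and~\ref{lemma: one application of another D}. I would first establish the expansion
\begin{equation*}
    a_\tau(z) = \lambda_\tau - \lambda_\tau \sum_{\gamma \in G_{2,+}} \frac{\IP{\tau}{\gamma}(m_\gamma + m_{2\gamma})}{\IP{\gamma}{z}} + R_\tau(z), \qquad \deg R_\tau \le -2,
\end{equation*}
by replacing each shifted factor $(\IP{\gamma}{z}+c)^{-1}$ with $\IP{\gamma}{z}^{-1}$ modulo degree $-2$ and multiplying out~\eqref{eq: a4beta}--\eqref{eq: a2alpha}. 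This reduces to the numerical identity that, for each $\gamma \in G_{2,+}$, the sum of the residue-coefficients of all factors of $a_\tau$ that depend on $\gamma$ equals $-\IP{\tau}{\gamma}(m_\gamma + m_{2\gamma})$. Granting this, the proof of Lemma~\ref{lemma: one application of another D} carries over word-for-word, using that $\sum_{\tau\colon \frac12\tau\in AG_2} \lambda_\tau \IP{\tau}{\gamma} = 0$, which holds since $\lambda_\tau = \lambda_{-\tau}$ and the index set is symmetric. Consequently, one application of $\mathcal{D}_2 - \mu(x)$ to $A(z)\exp\IP{z}{x}$ produces a leading correction governed by the factor $n_\gamma - (m_\gamma + m_{2\gamma})$.

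With these lemmas in hand, I would track the leading homogeneous component $Q_0^{(k)}$ through $k = 1, \dots, M$ applications, exactly as in the proof of Theorem~\ref{thm: another construction of BA for AG2}: starting from $Q_0 = \prod_{\gamma\in G_{2,+}} \IP{\gamma}{z}^{2(m_\gamma + m_{2\gamma})}$, each application lowers one exponent, and the recursion terminates precisely when every exponent has been reduced to $m_\gamma + m_{2\gamma}$, that is after $M = \sum_{\gamma \in AG_{2,+}} m_\gamma = \sum_{\gamma\in G_{2,+}}(m_\gamma+m_{2\gamma})$ steps. The leading term of $P$ is then $d(x)\prod_{\gamma \in G_{2,+}} \IP{\gamma}{z}^{m_\gamma + m_{2\gamma}} = \tfrac18 d(x) \prod_{\gamma \in AG_{2,+}} \IP{\gamma}{z}^{m_\gamma}$, where the factor $\tfrac18$ comes from $\IP{2\beta_i}{z} = 2\IP{\beta_i}{z}$. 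The multinomial count of orderings $M!/\prod_{\gamma\in G_{2,+}} (m_\gamma + m_{2\gamma})!$ together with the per-step proportionality factor from the $\mathcal{D}_2$-version of Lemma~\ref{lemma: one application of another D} then shows $c(x) = \tfrac18 d(x)$, giving the normalization~\eqref{eq: c(x) for AG2}. Finally, the same degree count shows that the polynomial part of $(\mathcal{D}_2 - \mu(x))^{M+1}[Q(z)\exp\IP{z}{x}]$ has degree strictly less than $M$, so by Lemma~\ref{lemma: h.o.t. of BA for AG2} it vanishes, yielding $\mathcal{D}_2 \psi = \mu(x)\psi$ and hence bispectrality.

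The main obstacle is the numerical identity underlying the $\mathcal{D}_2$-version of Lemma~\ref{lemma: another a_tau expanded}: because the coefficients~\eqref{eq: a4beta}--\eqref{eq: a2alpha} contain more factors than those of $\mathcal{D}_1$ (in particular the $4\beta_j$-shifts and the factors with constant residues such as $\tfrac23\gamma^2$, $4\beta_j^2$ and $6\gamma^2$), one must check carefully that the various residue-contributions still sum to the universal value $-\IP{\tau}{\gamma}(m_\gamma + m_{2\gamma})$ for every $\tau$ and every $\gamma$. Once this single identity is verified, the remainder of the argument is a mechanical repetition of the $\mathcal{D}_1$ case.
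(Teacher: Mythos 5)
Your proposal is correct and is essentially the proof the paper intends: the paper itself only remarks that the argument is ``similar to the proof of Theorem~\ref{thm: another construction of BA for AG2}'' and defers details to \cite{MV}, and your adaptation (Theorem~\ref{secondDpreserves} in place of Theorem~\ref{thm: another operator for AG2 preserves the ring}, $\lambda_\tau$ for $\kappa_\tau$, and the $\mathcal{D}_2$-analogues of Lemmas~\ref{lemma: another a_tau expanded} and~\ref{lemma: one application of another D}) is exactly that route. You also correctly isolate the one point needing verification, namely that the extra factors in \eqref{eq: a4beta}--\eqref{eq: a2alpha} (e.g.\ those attached to $\gamma$ with $\IP{\tau}{(2\gamma)^\vee}=0$, which occur in cancelling $\pm\gamma$ pairs) still yield the coefficient $-\IP{\tau}{\gamma}(m_\gamma+m_{2\gamma})$ at degree $-1$.
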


The proof  is similar to the proof of Theorem~\ref{thm: another construction of BA for AG2} and it can be found in \cite{MV}.

\begin{remark}
One can show that operators \eqref{eq: another difference operator for AG2} -- \eqref{eq: another a2alpha}, \eqref{eq: difference operator for AG2} -- \eqref{eq: a2alpha} commute: $[{\mathcal D}_1, {\mathcal D}_2]=0$. This can be proven by taking the rational limit of the more general trigonometric versions of these operators, which also commute \cite{FV2}.
\end{remark}

\section{Relation with $A_2$ and $A_1$ Macdonald--Ruijsenaars systems}
\label{Sect-A1-A2}

In the case when $m=0$, the configuration $AG_2$ reduces to the root system $\{\pm 2\beta_i\colon i=1,2,3\}$ of type $A_2$ with multiplicity $1$ for all vectors.
In this limit, the operator~\eqref{eq: difference operator for AG2} reduces to the quasiminuscule operator for (twice) this root system.
Let us now consider the $m=0$ limit of the operator~\eqref{eq: another difference operator for AG2}. After a rescaling, this gives an operator 
of the form 
\begin{equation}\label{eq: m=0 limit}
    D_0 = \sum_{\tau \in G_2} a_{\tau, 0}(z) (T_\tau - 1) = -24 + \sum_{\tau \in G_2} a_{\tau, 0}(z) T_\tau, 
\end{equation}
where for $\tau  = \varepsilon\beta_j$, $\varepsilon \in \{ \pm 1\}$, $j=1, 2, 3,$ we have
\begin{equation*}
    a_{\tau, 0}(z) = 
    3 
    \prod_{\substack{\gamma \in W \beta_1 \\ \IP{\tau}{(2\gamma)^\vee} = \frac12}} 
    \left( 1-\frac{\frac12 \gamma^2}{\IP{\gamma}{z} - \frac12\gamma^2} \right)
    \prod_{\substack{\gamma \in W \beta_1 \\ \IP{\tau}{(2\gamma)^\vee} = 1}}
    \left(1-\frac{\gamma^2}{\IP{\gamma}{z}}\right),
\end{equation*}
and for $\tau = \varepsilon\a_j$ we have
\begin{equation*}
    a_{\tau, 0}(z) = 
    \prod_{\substack{\gamma \in W \beta_1 \\ \IP{\tau}{(2\gamma)^\vee} = \frac32}} 
    \left( 1-\frac{\frac32 \gamma^2}{\IP{\gamma}{z} + \frac12\gamma^2} \right).
\end{equation*}

\begin{proposition}
The operator~\eqref{eq: m=0 limit} preserves the ring of analytic functions $p(z)$ satisfying $p(z+\beta_i)$ $=p(z-\beta_i)$ at $\IP{\beta_i}{z} = 0$ for all $i=1, 2, 3$.
\end{proposition}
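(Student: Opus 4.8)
The plan is to apply Theorem~\ref{thm: sufficient conditions for ring preservation} with $R = S = G_2$, $W$ the Weyl group of $G_2$, and the axiomatics $A_{\beta_i} = \{1\}$, $A_{\alpha_i} = \emptyset$ for $i=1,2,3$; with these choices $\mathcal{R}^a$ is precisely the ring in the statement. First I would record the symmetry condition \ref{enum: symmetry}: each $a_{\tau,0}$ is built from the orbit $W\beta_1$ together with the pairings $\IP{\tau}{(2\gamma)^\vee}$ and $W$-invariant data, so $w a_{\tau,0} = a_{w\tau,0}$ for all $w\in W$ by the same argument as Lemma~\ref{lemma: G2 symmetry of another D}. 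Then I would verify the singularity condition \ref{enum: singularities} by reading off the factors of \eqref{eq: m=0 limit}: $a_{\varepsilon\beta_j,0}$ has simple poles only at $\IP{\gamma}{z} = \tfrac12\gamma^2$ (from $\IP{\tau}{(2\gamma)^\vee} = \tfrac12$) and at $\IP{\gamma}{z} = 0$ (from $\IP{\tau}{(2\gamma)^\vee} = 1$), while $a_{\varepsilon\alpha_j,0}$ has simple poles only at $\IP{\gamma}{z} = -\tfrac12\gamma^2$ (from $\IP{\tau}{(2\gamma)^\vee} = \tfrac32$); in every case $\gamma$ is a short root, $|c + \IP{\tau}{(2\gamma)^\vee}| = 1 \in A_\gamma$, and $\lambda = s_\gamma(\tau) - 2c\gamma \in G_2\cup\{0\}$, so \ref{enum: singularities} holds and all poles lie on short-root hyperplanes.

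Next I would establish part~1 of Theorem~\ref{thm: sufficient conditions for ring preservation}, that $D_0 p(z)$ is non-singular for $p$ in the ring. Since all poles sit on short-root hyperplanes with $c\in\{0,\pm\tfrac12\}$, and by Corollary~\ref{cor: symmetry of the residues} it suffices to treat one representative of each $W$-orbit, I would work in the coordinates $A = \IP{\alpha_1}{z}$, $B = \IP{\beta_1}{z}$ as in the proofs of Theorems~\ref{thm: another operator for AG2 preserves the ring} and~\ref{secondDpreserves}. At $B = 0$ the pairing $\lambda = s_{\beta_1}(\tau)$ makes the residues cancel automatically by the $c=0$ case of part~1, using symmetry~\ref{enum: symmetry}. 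At $B = \tfrac12\beta_1^2$ I would list the finitely many $\tau$ whose coefficient is singular there, pair each with $\lambda = s_{\beta_1}(\tau) - \beta_1$, and check $\res(a_{\tau,0} + a_{\lambda,0}) = 0$ by a direct residue computation of the kind displayed in those two theorems; the companion hyperplane $B = -\tfrac12\beta_1^2$, fed by the $\alpha$-shifts, then follows from Lemma~\ref{lemma: symmetry of the residues}.

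Finally I would check part~2, which, since $A_{\alpha_i}=\emptyset$ and all short roots lie in one $W$-orbit, reduces by Lemma~\ref{lemma: symmetry in the off-axiomatics} to $D_0 p(z+\beta_1) = D_0 p(z-\beta_1)$ at $\IP{\beta_1}{z}=0$, i.e. to the single value $t = 1 \in A_{\beta_1}$. For each $\tau\in G_2$ I would compute $|1 + \IP{\tau}{(2\beta_1)^\vee}|$; it lands in $A_{\beta_1}\cup\{0\} = \{0,1\}$ only for $\tau \in \{-\beta_1, \pm\alpha_1\}$ (recall $\alpha_1\perp\beta_1$), leaving nothing to check, whereas for $\tau = \beta_1$ it equals $2$ and for $\tau = \pm\beta_2,\pm\beta_3,\pm\alpha_2,\pm\alpha_3$ it is a half-integer. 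In each of these latter cases I would confirm either alternative (a), that $a_{\tau,0}(z+\beta_1)=0$ at $\IP{\beta_1}{z}=0$ because the factor of $a_{\tau,0}$ with a pole at $\IP{\beta_1}{z}=0$ or at $\IP{\beta_1}{z}=\tfrac12\beta_1^2$ vanishes at the shifted argument, or alternative (b), that $\lambda = s_{\beta_1}(\tau) - 2\beta_1 \in G_2$ and $a_{\lambda,0}(z+\beta_1) = a_{\tau,0}(z+\beta_1)$ there; Remark~\ref{rem: symmetry in the off-axiomatics} with $w = s_{\alpha_1}$, which fixes $\beta_1$, halves the number of cases.

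The main obstacle is the residue bookkeeping in part~1: correctly matching, at $c = \pm\tfrac12$, the pairs $(\tau,\lambda)$ of $G_2$-shifts whose poles coincide and verifying that their residues are exactly opposite. This is markedly lighter than in Theorems~\ref{thm: another operator for AG2 preserves the ring} and~\ref{secondDpreserves}, since at $m=0$ only short-root hyperplanes with $c\in\{0,\pm\tfrac12\}$ occur. Alternatively, one can bypass these computations by noting that $D_0$ is, up to the rescaling $z\mapsto 2z$ and an overall constant, the operator~\eqref{eq: another difference operator for AG2} at $m=0$, and that the ring of the statement is the image under the same rescaling of $\mathcal{R}^a_{AG_2}$ at $m=0$ (whose $\beta_i$-axiomatics is $A_{\beta_i}=\{2\}$); the claim then follows from the proof of Theorem~\ref{thm: another operator for AG2 preserves the ring} specialised to $m=0$, the residue identities there being polynomial in $m$.
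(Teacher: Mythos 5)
Your main argument is correct and is essentially the paper's own proof: the paper disposes of this proposition in one line, saying it is parallel to the proof of Theorem~\ref{thm: another operator for AG2 preserves the ring} except that condition 2(b) of Theorem~\ref{thm: sufficient conditions for ring preservation} is now genuinely needed, and your case analysis (in particular allowing alternative (b) for shifts such as $\tau=-\beta_2$, where $\lambda=s_{\beta_1}(-\beta_2)-2\beta_1=-\a_3$) anticipates exactly that. One caution about your proposed shortcut: the proof of Theorem~\ref{thm: another operator for AG2 preserves the ring} does \emph{not} specialise cleanly to $m=0$, because its case-(a) verification for $\tau=-2\beta_2$, $t=3m+2$ rests on the vanishing of the factor $\bigl(1+\frac{3m\beta_1^2}{-\IP{\beta_1}{z}+2\beta_1^2}\bigr)$ at the shifted argument, and at $m=0$ this factor is identically $1$. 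So the rescaling observation, while correct as an identification of the operator and the ring, does not let you bypass the 2(b) computation --- which is precisely the point the paper flags as the one difference from the $m>0$ case.
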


The proof is parallel to the proof of Theorem \ref{thm: another operator for AG2 preserves the ring}.
In this case, though, the condition 2(b) of Theorem \ref{thm: sufficient conditions for ring preservation} is needed while it does not play a role in the proofs of Theorems \ref{thm: another operator for AG2 preserves the ring} and \ref{secondDpreserves}.

Let us rewrite the operator $D_0$ for
the more standard realisation of the root system $A_2$ given by $A_2 = \{ e_i - e_j\colon 1 \leq i \neq j \leq3 \} \subset \R^3$, where $e_i$ are the standard basis vectors.

\begin{proposition}
Define the set $S = S_1 \cup S_2$,  where 
$$S_1 = \{3e_i \colon i = 1, 2, 3 \} \cup \{2e_i + 2e_j - e_k: 1 \leq i < j \neq k \leq 3, i \neq k \} \textnormal{ and } S_2 = \{ 2e_i + e_j\colon 1 \leq i \neq j\leq 3\}.$$
Then the operator acting in the variable $z = (z_1, z_2, z_3)\in \C^3$ given by
\begin{equation}\label{eq: m=0 limit rewritten}
    \begin{aligned}
        \widetilde D_0 &= 3\sum_{\tau \in S_2} \left(\prod_{\substack{i \neq j \\ \IP{\tau}{e_i-e_j} = 1}}
        \left( 1-\frac{1}{z_i - z_j-1} \right) \prod_{\substack{i \neq j \\ \IP{\tau}{e_i-e_j} = 2}}
        \left( 1-\frac{2}{z_i - z_j} \right)\right) T_\tau \\
        & +\sum_{\tau \in S_1}\left( \prod_{\substack{i \neq j \\ \IP{\tau}{e_i - e_j} = 3}}
        \left( 1-\frac{3}{z_i - z_j+1} \right)\right) T_\tau.
    \end{aligned}
\end{equation}
preserves the ring of analytic functions $p(z)$ satisfying $p(z+e_i - e_j) = p(z-e_i + e_j)$ at $z_i=z_j$ for all $i, j =1, 2, 3$.
\end{proposition}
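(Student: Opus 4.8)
The plan is to recognise the operator $\widetilde{D}_0$ of \eqref{eq: m=0 limit rewritten} as the operator $D_0$ of \eqref{eq: m=0 limit} written out in the realisation of $A_2$ inside the hyperplane $H=\{z\in\C^{3}\colon z_1+z_2+z_3=0\}$, and then to transport the preservation property from the previous Proposition. Fixing $\omega=\sqrt 2$ as in Section~\ref{Sect-A1-A2}, the short roots $\beta_i$ of $G_2$ satisfy $\beta_i^2=2$ and the long roots $\alpha_i$ satisfy $\alpha_i^2=6$, so there is a linear isometry $\phi\colon\C^{2}\to H$ carrying the short roots $\beta_i$ onto the $A_2$ roots $e_i-e_j$ (of squared length $2$) and the long roots $\alpha_i$ onto the vectors $\pm(3e_i-\mathbf 1)$, where $\mathbf 1=e_1+e_2+e_3$ (of squared length $6$); such a $\phi$ exists because the two copies of the $G_2$ configuration carry identical inner products.

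The key structural observation is that every shift vector in $S=S_1\cup S_2$ has coordinate sum $3$, since $\IP{3e_i}{\mathbf 1}=\IP{2e_i+e_j}{\mathbf 1}=\IP{2e_i+2e_j-e_k}{\mathbf 1}=3$. Hence subtracting $\mathbf 1$ projects $S$ into $H$, and one checks that $S_2$ projects onto the six short roots $\{e_i-e_j\}$ while $S_1$ projects onto the six long roots $\pm(3e_i-\mathbf 1)$; thus the projected shifts are exactly the $G_2$-roots $\phi(G_2)$. The coefficients then match by a routine check of the selection rules: for $\gamma\in H$ one has $\IP{\tau}{\gamma}=\IP{\bar\tau}{\gamma}$, where $\bar\tau$ is the $H$-component of $\tau$, so the conditions $\IP{\tau}{e_{i'}-e_{j'}}\in\{1,2,3\}$ coincide with $\IP{\bar\tau}{(2\gamma)^\vee}\in\{\tfrac12,1,\tfrac32\}$ for $\gamma=e_{i'}-e_{j'}$ of squared length $2$. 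Since $\frac{\gamma^2}{\IP{\gamma}{z}}=\frac{2}{z_{i'}-z_{j'}}$, $\frac{\frac12\gamma^2}{\IP{\gamma}{z}-\frac12\gamma^2}=\frac{1}{z_{i'}-z_{j'}-1}$ and $\frac{\frac32\gamma^2}{\IP{\gamma}{z}+\frac12\gamma^2}=\frac{3}{z_{i'}-z_{j'}+1}$, the individual factors of $\widetilde D_0$ are precisely those of $D_0$ at $\gamma=e_{i'}-e_{j'}$, so $a_\tau(z)=a_{\phi^{-1}(\bar\tau),0}\big(\phi^{-1}(h)\big)$ with $h$ the $H$-component of $z$.

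The only genuine subtlety, and the step I expect to require the most care, is that the shifts of $\widetilde D_0$ leave $H$ (they have coordinate sum $3$ rather than $0$). I would dispatch this by a fibrewise argument. Writing $z=h+t\mathbf 1$ with $h\in H$, $t\in\C$, the coefficients $a_\tau(z)$ depend only on the differences $z_i-z_j$, hence only on $h$, while $T_\tau$ sends $(h,t)\mapsto(h+\bar\tau,\,t+1)$. For $p$ in the ring and each fixed $t$, the slice $p_t(h):=p(h+t\mathbf 1)$ lies in the $D_0$-ring $\{q\colon q(h+(e_i-e_j))=q(h-(e_i-e_j))\text{ at }\IP{e_i-e_j}{h}=0\}$, because $e_i-e_j\perp\mathbf 1$ gives $z_i-z_j=h_i-h_j$. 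The coefficient matching above then yields $(\widetilde D_0 p)(h+t\mathbf 1)=\big((D_0+24)\,p_{t+1}\big)(h)$, where the scalar shift $D_0+24$ preserves the $D_0$-ring exactly when $D_0$ does.

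By the previous Proposition, $(D_0+24)p_{t+1}$ is therefore analytic in $h$ and $A_2$-quasi-invariant in $h$ for every $t$. Since the only possible poles of $\widetilde D_0 p$ lie on the $t$-independent hyperplanes $z_i-z_j\in\{0,1,-1\}$ and they cancel on every slice $t=\mathrm{const}$, the function $\widetilde D_0 p$ is analytic on $\C^{3}$; and its required quasi-invariance $(\widetilde D_0 p)(z+(e_i-e_j))=(\widetilde D_0 p)(z-(e_i-e_j))$ at $z_i=z_j$ is exactly the quasi-invariance of $(D_0+24)p_{t+1}$ in $h$. Hence $\widetilde D_0$ preserves the stated ring. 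The main labour is the coefficient verification of the second paragraph; the fibrewise reduction of the last two paragraphs removes the apparent obstruction caused by the shifts not preserving $H$.
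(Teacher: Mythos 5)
Your proposal is correct and follows exactly the route the paper intends: the proposition is stated as a rewriting of $D_0$ in the standard realisation of $A_2$, so the preservation property is transported from the preceding Proposition via the isometric identification of the $G_2$ short/long roots with $\{e_i-e_j\}$ and $\pm(3e_i-\mathbf{1})$ inside the hyperplane $z_1+z_2+z_3=0$. Your fibrewise treatment of the shifts having coordinate sum $3$ (so that $T_\tau$ acts as $(h,t)\mapsto(h+\bar\tau,t+1)$ and $\widetilde D_0$ acts slice-by-slice as $D_0+24$) correctly supplies the one detail the paper leaves implicit.
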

Note that $S_1 = \{\tau \in S\colon |\IP{\tau}{e_i - e_j}| \in \{ 0, 3\} \textnormal{ for all } i, j=1, 2, 3\}$ and 
$S_2 = \{\tau \in S\colon |\IP{\tau}{e_i - e_j}| \in \{0, 1, 2\}  \textnormal{ for all } i, j=1, 2, 3\}$.

Let us now consider a
 version of the operator~\eqref{eq: m=0 limit rewritten} for the root system $A_1$. 
Let $\sim$ denote equality of operators when acting on functions constant along the direction normal to the hyperplane $z_1 + z_2 = 0$.

\begin{proposition}
Let $S_1' = \{3e_1, 3e_2\}$ and $S_2' = \{2e_1 + e_2, e_1 + 2e_2\}.$ Then formula~\eqref{eq: m=0 limit rewritten} after replacement of $S_i$ with $S_i'$, $i=1, 2$, gives an operator $\widehat D_0$  acting   in the variable $z = (z_1, z_2) \in \C^2$ that
preserves the ring $\mathcal R^a_{A_1}$ of analytic functions $p(z)$ satisfying $p(z+e_1 - e_2) = p(z-e_1 + e_2)$ at $z_1=z_2$.
Moreover, if we split the operator  $\widehat D_0 = D_1 + D_2$, where 
\begin{equation*}
     D_1 = 3\left( 1 - \frac{1}{z_2 - z_1 - 1} \right) T_{e_1 + 2e_2} + \left(1-\frac{3}{z_1 - z_2 +1} \right) T_{3e_1}
\end{equation*}
and 
\begin{equation*}
     D_2 = 3 \left( 1 - \frac{1}{z_1 - z_2 - 1} \right) T_{2e_1 + e_2} + \left(1-\frac{3}{z_2 - z_1 + 1} \right) T_{3e_2},
\end{equation*}
then $D_i(\mathcal R^a_{A_1}) \subseteq \mathcal R^a_{A_1}$ for $i=1, 2$. The operators $D_i$ satisfy commutativity relations
$$
[D_1, D_2] = [D_1, D^{msl}] = [D_2, D^{msl}] = 0,
$$
where $D^{msl}$ is the operator for the minuscule weight $2 e_1$ of the root system $2 A_1$ with multiplicity 1 given by
\begin{equation*}
    D^{msl} = \left( 1 - \frac{2}{z_1 - z_2} \right) T_{2e_1} +  \left( 1 - \frac{2}{z_2 - z_1} \right) T_{2e_2}.
\end{equation*}
We also have
${\widehat D_0}^2 \sim (D^{msl} + 2)^3$, and $D_1 D_2 \sim 3 D^{qm} + 16 \sim 3(D^{msl})^2 + 4$, where
\begin{align*}
    D^{qm} & = \left( 1 - \frac{2}{z_1 - z_2} \right)\left( 1 - \frac{2}{z_1 - z_2+2} \right) (T_{4e_1}-1) +  
    \left( 1 - \frac{2}{z_2 - z_1} \right)\left( 1 - \frac{2}{z_2 - z_1+2} \right) (T_{4e_2} - 1)
\end{align*}
is the operator for the quasiminuscule weight $4e_1$.
\end{proposition}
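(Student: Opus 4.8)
The plan is to reduce every claim to a one-variable computation and then verify the stated identities directly. Set $u = z_1 - z_2$ and $v = z_1 + z_2$. Every coefficient of $D_1, D_2, \widehat D_0, D^{msl}, D^{qm}$ depends on $z$ only through $u$, and within each operator the shift in the $v$-direction is uniform: each summand of $D_1$, $D_2$ and $\widehat D_0$ sends $v\mapsto v+3$, each summand of $D^{msl}$ sends $v\mapsto v+2$, and each summand of $D^{qm}$ sends $v\mapsto v+4$. Hence each operator factorises as a difference operator acting in $u$ alone composed with a single uniform $v$-shift. Writing $E^k$ for the shift $u\mapsto u+k$, the reduced $u$-parts are
\[
\widetilde D_1 = 3\Big(1+\tfrac{1}{u+1}\Big)E^{-1} + \Big(1-\tfrac{3}{u+1}\Big)E^{3},\qquad
\widetilde D_2 = 3\Big(1-\tfrac{1}{u-1}\Big)E^{1} + \Big(1+\tfrac{3}{u-1}\Big)E^{-3},
\]
\[
\widetilde D^{msl} = \Big(1-\tfrac{2}{u}\Big)E^{2} + \Big(1+\tfrac{2}{u}\Big)E^{-2},\qquad
\widetilde D^{qm} = \Big(1-\tfrac{2}{u}\Big)\Big(1-\tfrac{2}{u+2}\Big)(E^{4}-1) + \Big(1+\tfrac{2}{u}\Big)\Big(1+\tfrac{2}{u-2}\Big)(E^{-4}-1).
\]
Since the uniform $v$-shifts commute with everything, each commutator of two of these operators equals the corresponding one-variable commutator composed with a single $v$-shift; in particular a commutator vanishes if and only if the reduced one does, which lifts the genuine (non-$\sim$) identities $[D_1,D_2]=[D_1,D^{msl}]=[D_2,D^{msl}]=0$ to statements about the $\widetilde D$'s. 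The relations written with $\sim$ are, by definition of $\sim$, literally identities between the reduced $u$-operators.

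Next I would establish ring preservation. As $D_1$ and $D_2$ are not $W$-invariant, Theorem~\ref{thm: sufficient conditions for ring preservation} does not apply and I argue directly. For analyticity, the only poles of $D_1p$ lie on $u=-1$, where both coefficients are singular; the residue of $D_1p$ there is $3\,p(z+e_1+2e_2)-3\,p(z+3e_1)$. The two shifts are exchanged by reflection in the hyperplane through $z+2e_1+e_2$, which lies on $z_1=z_2$ exactly when $u=-1$, so $p(z+3e_1)=p(z+e_1+2e_2)$ there by the quasi-invariance of $p$ and the residue cancels (this is the mechanism of Theorem~\ref{thm: sufficient conditions for ring preservation}(1)). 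For quasi-invariance of $D_1p$ I evaluate at $u=\pm2$ with $v$ fixed: at $u=2$ the $E^{3}$-coefficient $1-\tfrac{3}{u+1}$ vanishes and at $u=-2$ the $E^{-1}$-coefficient $1+\tfrac{1}{u+1}$ vanishes, and the two surviving terms both equal $4\,p(z_0+2e_1+e_2)$, so $D_1p(z_0+e_1-e_2)=D_1p(z_0-e_1+e_2)$ at $z_1=z_2$ (the off-axiomatics condition 2(a)). The argument for $D_2$ is the mirror image under $u\mapsto -u$, and $\widehat D_0=D_1+D_2$ is preserved by additivity.

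The product identities are verified by expanding the reduced operators. First $\widetilde D^{qm}=(\widetilde D^{msl})^2-4$: expanding $(\widetilde D^{msl})^2$, the $E^{\pm4}$ parts coincide with those of $\widetilde D^{qm}$ (using $1-\tfrac{2}{2-u}=1+\tfrac{2}{u-2}$), while the $E^{0}$ parts differ by $(1-\tfrac2u)\cdot2+(1+\tfrac2u)\cdot2=4$; hence $3\widetilde D^{qm}+16=3(\widetilde D^{msl})^2+4$, the standard $A_1$ relation between quasiminuscule and squared-minuscule operators. Next, $\widetilde D_1\widetilde D_2$ has shifts $E^{0},E^{\pm4}$; comparing coefficients gives, for instance, the $E^{4}$-term $3\big(1-\tfrac{3}{u+1}\big)\big(1-\tfrac{1}{u+2}\big)=3\tfrac{u-2}{u+2}=3\big(1-\tfrac2u\big)\big(1-\tfrac2{u+2}\big)$, which matches $3(\widetilde D^{msl})^2+4$, and likewise for $E^{-4}$ and $E^{0}$; the same computation with the factors reversed yields $\widetilde D_2\widetilde D_1=3(\widetilde D^{msl})^2+4$ as well. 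Finally, having verified the companion identity $(\widetilde D_1-\widetilde D_2)^2=(\widetilde D^{msl}-2)^3$ in the same way, one gets $(\widetilde D_1+\widetilde D_2)^2=(\widetilde D_1-\widetilde D_2)^2+2(\widetilde D_1\widetilde D_2+\widetilde D_2\widetilde D_1)=(\widetilde D^{msl}-2)^3+4\big(3(\widetilde D^{msl})^2+4\big)=(\widetilde D^{msl}+2)^3$, giving $\widehat D_0^2\sim(D^{msl}+2)^3$. A convenient sanity check throughout is the large-$u$ limit, where $c:=E+E^{-1}$ gives $\widetilde D^{msl}+2\to c^2$, $\widehat D_0\to c^3$ and $\widetilde D^{qm}\to c^4-4c^2$, and all four displayed relations reduce to polynomial identities in $c$.

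From $\widetilde D_1\widetilde D_2=\widetilde D_2\widetilde D_1$ the relation $[D_1,D_2]=0$ is immediate. I expect the remaining relations $[D_1,D^{msl}]=[D_2,D^{msl}]=0$ to be the main obstacle: they mix the odd shifts $E^{\pm1},E^{\pm3}$ with the even shifts $E^{\pm2}$, so no symmetry forces them, and the fact that $\widetilde D_1\widetilde D_2$ is a polynomial in $\widetilde D^{msl}$ does not by itself yield individual commutativity. I would verify them head-on, computing the three surviving shift-coefficients of $[\widetilde D_1,\widetilde D^{msl}]$ (at $E^{-3},E^{1},E^{5}$) and checking that each vanishes as a rational function of $u$, the case of $D_2$ following by $u\mapsto -u$; each check is finite, though the bookkeeping of the rational coefficients is the most laborious part. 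As a cross-check, these commutativity relations also follow from the rational degeneration of the commuting trigonometric Macdonald--Ruijsenaars operators, consistently with the Remark following Theorem~\ref{thm: construction of BA for AG2}.
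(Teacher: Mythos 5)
The paper states this proposition without proof (unlike the preceding two propositions, it is not even accompanied by a remark that the proof is ``parallel'' to an earlier argument), so there is no authorial proof to compare against; your proposal has to stand on its own, and it does. The reduction to the variable $u=z_1-z_2$ is sound: every coefficient depends only on $u$ and each operator carries a uniform shift in $v=z_1+z_2$ which commutes with everything, so the honest commutator identities do reduce to one-variable statements and the $\sim$-identities become literal identities of the reduced operators. Your direct ring-preservation argument for $D_1$ is correct and is genuinely needed, since $D_1$ and $D_2$ separately fail the symmetry hypothesis $(D_3)$ of Theorem~\ref{thm: sufficient conditions for ring preservation}: the residue of $\widetilde D_1 p$ at $u=-1$ is $3\bigl(p(z+e_1+2e_2)-p(z+3e_1)\bigr)$, which vanishes by quasi-invariance, and at $u=\pm 2$ exactly one coefficient survives in each case and both surviving terms equal $4p(z_0+2e_1+e_2)$. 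I checked the deferred computations you flagged as the laborious part: the three surviving coefficients of $[\widetilde D_1,\widetilde D^{msl}]$ at $E^{-3}$, $E^{1}$, $E^{5}$ do all vanish (e.g.\ at $E^5$ both orderings give $\tfrac{u-2}{u+3}$, and at $E^1$ both sides reduce to $\tfrac{4(u^2+u-11)}{(u+3)(u-1)}$), the coefficients of $\widetilde D_1\widetilde D_2$ and $\widetilde D_2\widetilde D_1$ both equal those of $3(\widetilde D^{msl})^2+4$ (the $E^0$ coefficient being $\tfrac{10u^2-88}{u^2-4}$ in all three cases), and the companion identity $(\widetilde D_1-\widetilde D_2)^2=(\widetilde D^{msl}-2)^3$ holds --- note that at $E^{\pm2}$ this uses that $(X-2)^3$ and $(X+2)^3$ have the \emph{same} $E^{\pm2}$ components, $X^3|_{E^{\pm2}}+12X|_{E^{\pm2}}$, since the linear term enters both cubes with a plus sign; your algebraic derivation of $\widehat D_0^2\sim(D^{msl}+2)^3$ from it is then correct. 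The only caveat is presentational: the commutativity of $D_1$ and $D_2$ with $D^{msl}$ is asserted as ``to be verified head-on'' rather than exhibited, but the verification is finite, you identified precisely which coefficients must vanish, and they do.
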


We note that the operators $D_1$ and $D_2$ are not symmetric under the swap of the variables $z_1, z_2$. The operator $D^{msl}$ is symmetric and all three operators commute.

\vspace{10mm}
 {\bf Acknowledgments}. 
 M.V.\ acknowledges support received from the Institute of Mathematics and its Applications through their Small Grant Scheme, as well as support from the School of Mathematics and Statistics of the University of Glasgow for which we would like to thank Ian Strachan. The work of M.V.\ was also partially supported by a Carnegie PhD Scholarship from the Carnegie Trust for the Universities of Scotland.

\newpage

\end{document}